\newcommand{\set}[2]{\left\{#1\mathrel{\left|\vphantom{#1}\vphantom{#2}\right.}#2\right\}}
\newcommand{\oneset}[1]{\left\{\mathinner{#1}\right\}}
\newcommand{\abs}[1]{\left|\mathinner{#1}\right|}
\newcommand{\N}{\mathbb{N}}
\newcommand{\Rp}{\mathbb{R}^{\ge 0}}
\newcommand{\Oh}{\mathcal{O}}
\newcommand{\cA}{\ensuremath{\mathcal{A}}\xspace}
\newcommand{\cF}{\ensuremath{\mathcal{F}}\xspace}
\newcommand{\cH}{\ensuremath{\mathcal{H}}\xspace}
\newcommand{\cI}{\ensuremath{\mathcal{I}}\xspace}
\newcommand{\cL}{\ensuremath{\mathcal{L}}\xspace}
\newcommand{\cQ}{\ensuremath{\mathcal{Q}}\xspace}
\newcommand{\cS}{\ensuremath{\mathcal{S}}\xspace}
\newcommand{\bA}{\ensuremath{\mathsf{A}}\xspace}
\newcommand{\bC}{\ensuremath{\mathsf{C}}\xspace}
\newcommand{\bG}{\ensuremath{\mathsf{G}}\xspace}
\newcommand{\bT}{\ensuremath{\mathsf{T}}\xspace}
\newcommand{\BAR}{\overline{\phantom{ii}}}
\newcommand{\smalloverline}[1]
{{\mspace{1mu}\overline{\mspace{-1mu}#1\mspace{-1mu}}\mspace{1mu}}}
\newcommand{\ov}[1]{\smalloverline{#1}\vphantom{#1}}
\newcommand{\ovc}[1]{\smalloverline{#1}} % captions dislike the vphantom: use \ovc instead
\newcommand{\PSPACE}{\ensuremath{\mathrm{PSPACE}}\xspace}
\newcommand{\NL}{\ensuremath{\mathrm{NL}}\xspace}
\newcommand{\e}{1}
\newcommand{\IFF}{if and only if\xspace}
\newcommand{\hpc}{hairpin completion\xspace}
\newcommand{\hpcs}{hairpin completions\xspace}
\newcommand{\Grr}{Growth indicator\xspace}
\newcommand{\Grrs}{\Grr{}s\xspace}
\newcommand{\grr}{growth indicator\xspace}
\newcommand{\grrs}{\grr{}s\xspace}
\newcommand{\ie}{i.\,e.,\xspace}
\newcommand{\eg}{e.\,g.,\xspace}
\newcommand{\resp}{respectively,\xspace}
\newcommand{\Pref}{\mathrm{Pref}}
\newcommand{\kap}{\kappa} % maybe replace it with 'k' again
\newcommand\Hk{\cH_\kap}
\newcommand\ccH{\ensuremath{\Hk(L_1,L_2)}\xspace}
\newcommand{\sse}{\subseteq}
\newcommand{\es}{\emptyset}
\newcommand{\sm}{\setminus}
\renewcommand{\phi}{\varphi}
\newcommand{\alp}{\alpha}
\newcommand{\bet}{\beta}
\newcommand{\gam}{\gamma}
\newcommand{\del}{\delta}
\newcommand{\lam}{\lambda}
\newcommand{\sig}{\sigma}
\newcommand{\Sig}{\Sigma}
\newcommand{\aba}{{\alpha\beta\ov\alpha}}
\newcommand{\gabag}{\gamma\alpha\beta\ov\alpha\ov\gamma}
\newcommand{\gaba}{\gamma\alpha\beta\ov\alpha}
\newcommand{\abag}{\alpha\beta\ov\alpha\ov\gamma}
\newcommand\RAS[1]{\overset{#1}\Longrightarrow}
\newcommand\ras[1]{\overset{#1}\longrightarrow}
\newcommand\derive{\underset{G}{\Longrightarrow}}
\newcommand\derives{\underset{G}{\overset{*}\Longrightarrow}}
\newcommand\production{\to}
\newcommand\ftt{\mbox{$5'$-to-$3'$}\xspace}
\newcommand\ttf{\mbox{$3'$-to-$5'$}\xspace}
\newcommand{\dead}{t}%{\emptyset}
\newcommand{\bridge}{B}
\newcommand{\mgp}{R}
\newcommand{\CP}{\bridge_{\mu}^{\mgp_\mu}}
\newcommand{\brgr}{\sigma}
\newcommand{\mgpgr}{\rho}
\theoremstyle{plain}
\newtheorem{theorem}{Theorem}[section]
\newtheorem{proposition}[theorem]{Proposition}
\newtheorem{lemma}[theorem]{Lemma}
\theoremstyle{definition}
\newtheorem{example}[theorem]{Example}
\theoremstyle{remark}
\newtheorem{remark}[theorem]{Remark}
\newenvironment{test}[1]
{\begin{trivlist}\item[\hskip\labelsep {\bfseries Test #1:\,}]}
{\end{trivlist}}
\newcommand{\refthm}[1]{Thm.~\ref{#1}}
\newcommand{\reflem}[1]{Lem.~\ref{#1}}
\newcommand{\refprop}[1]{Prop.~\ref{#1}}
\newcommand{\refrem}[1]{Rem.~\ref{#1}}
\newcommand{\reffig}[1]{Fig.~\ref{#1}}
\newcommand{\refsec}[1]{Sect.~\ref{#1}}
\newenvironment{vd}{\noindent\color{blue} VD }{}
\newenvironment{sk}{\noindent\color{red} SK }{}
\newenvironment{vm}{\noindent\color{green} VM }{}
\begin{document}

%%%%%%%%%%%%%%%%%%%%%%%%%%%%%%%%%%%%%%%%%%%%%%%%%%%%%%%%%%%%%%%%%%%%%%%%%%%%%%%%
% title page

\title{Deciding Regularity of Hairpin Completions of Regular Languages
	in Polynomial Time}
\author{Volker Diekert \and Steffen Kopecki \and Victor Mitrana}
%\ead{diekert@fmi.uni-stuttgart.de}
%\ead{kopecki@fmi.uni-stuttgart.de}
%\ead{mitrana@fmi.unibuc.ro}
\date{}
%\address[fmi]
%	{University of Stuttgart,
%	Institute for Formal Methods in Computer Science, \\
%	Universit\"atsstra\ss e 38,
%	70569 Stuttgart, Germany}

%\address[unibuc]
%	{University of Bucharest,
%	Faculty of Mathematics and Computer Science, \\
%	Str.~Academiei~14, 010014, Bucharest, Romania}

\maketitle

\begin{abstract}
	The hairpin completion is an operation on formal languages that has been
	inspired by the hairpin formation in DNA biochemistry and by DNA computing.
	In this paper we investigate the hairpin completion of regular languages.
	
	It is well known that hairpin completions of regular languages are linear
	context-free and not necessarily regular.
	As regularity of a (linear) context-free language is not decidable,
	the question arose whether regularity of a hairpin completion of regular
	languages is decidable.
	We prove that this problem is decidable and we provide a polynomial
	time algorithm.

	Furthermore, we prove that the hairpin completion of regular languages is an
	unambiguous linear context-free language and, as such, it has an
	effectively computable growth function.
	Moreover, we show that the growth of the hairpin completion is exponential
	if and only if the growth of the underlying languages is exponential
	and, in case the hairpin completion is regular, then the hairpin completion
	and the underlying languages have the same growth indicator.	
	
	\quad\par
	
	\noindent{\bf Keywords:}
	Hairpin completion, regular languages and finite automata,
	unambiguous linear languages, rational growth
\end{abstract}

%%%%%%%%%%%%%%%%%%%%%%%%%%%%%%%%%%%%%%%%%%%%%%%%%%%%%%%%%%%%%%%%%%%%%%%%%%%%%%%%
%%%%%%%%%%%%%%%%%%%%%%%%%%%%%%%%%%%%%%%%%%%%%%%%%%%%%%%%%%%%%%%%%%%%%%%%%%%%%%%%
\section{Introduction}
A {\em DNA strand} can be seen as a word over the four-letter alphabet
$\oneset{\bA,\bC,\bG,\bT}$ where the letters represent the nucleobases
Adenine, Cytosine, Guanine, and Thymine, respectively.
By {\em Watson-Crick base pairing} two strands may bond to each other if
they have opposite orientation and their bases are pairwise complementary,
where $\bA$ is complementary to $\bT$ and $\bC$ to $\bG$;
see \reffig{fig:bond} for a graphic example.
Throughout the paper we use the bar-notation for the Watson-Crick complement and
its language theoretic pendant, \ie $\ov\bA = \bT$ and $\ov\bC = \bG$.
For base sequences (or words) we let $\ov{a_1\cdots a_m} = \ov{a_m}\cdots\ov{a_1}$;
thus, $\BAR$ is an antimorphic involution.

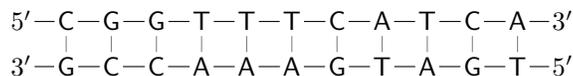
\begin{figure}[ht]
	\center
	\begin{tikzpicture}[text height=1.75ex,text depth=.25ex, node distance=.6cm, inner sep=1pt]
		\node (a1) {$5'$};
		\node (a2) [right of=a1] {$\bC$};
		\node (a3) [right of=a2] {$\bG$};
		\node (a4) [right of=a3] {$\bG$};
		\node (a5) [right of=a4] {$\bT$};
		\node (a6) [right of=a5] {$\bT$};
		\node (a7) [right of=a6] {$\bT$};
		\node (a8) [right of=a7] {$\bC$};
		\node (a9) [right of=a8] {$\bA$};
		\node (a10) [right of=a9] {$\bT$};
		\node (a11) [right of=a10] {$\bC$};
		\node (a12) [right of=a11] {$\bA$};
		\node (a13) [right of=a12] {$3'$};
		
		\draw (a1) -- (a2) -- (a3) -- (a4) -- (a5) -- (a6) -- (a7)
			-- (a8) -- (a9) -- (a10) -- (a11) -- (a12) -- (a13) ;

		\node (b1) [below of=a1,node distance=.6cm] {$3'$};
		\node (b2) [right of=b1] {$\bG$};
		\node (b3) [right of=b2] {$\bC$};
		\node (b4) [right of=b3] {$\bC$};
		\node (b5) [right of=b4] {$\bA$};
		\node (b6) [right of=b5] {$\bA$};
		\node (b7) [right of=b6] {$\bA$};
		\node (b8) [right of=b7] {$\bG$};
		\node (b9) [right of=b8] {$\bT$};
		\node (b10) [right of=b9] {$\bA$};
		\node (b11) [right of=b10] {$\bG$};
		\node (b12) [right of=b11] {$\bT$};
		\node (b13) [right of=b12] {$5'$};
		
		\draw (b1) -- (b2) -- (b3) -- (b4) -- (b5) -- (b6) -- (b7)
			-- (b8) -- (b9) -- (b10) -- (b11) -- (b12) -- (b13);
		
		\foreach \x in {2,...,12}
			\draw [gray] (a\x) -- (b\x);
		
		\begin{scope} [node distance=.3cm]
			\node (c1) [above of=a1] {};
			\node (c2) [above of=a10] {};
			\node (d1) [below of=b10] {};
			\node (d2) [below of=b1] {};
		\end{scope}
		
	\end{tikzpicture}
	\caption{Bonding of two strands: The strands are base-wise complementary and
		the first strand has \ftt orientation whereas the second
		strand has \ttf orientation.}\label{fig:bond}
\end{figure}

The {\em polymerase chain reaction} (PCR) is an technique which is often used in
DNA computing to amplify a {\em template strand} or a fragment of the template strand.
Short DNA sequences, so-called {\em primers}, bond to a part of the template
and thusly select where the {\em extension}, the process where template is complemented,
will start.

The {\em hairpin completion} of a strand can naturally develop during the PCR.
Suppose a strand $\sig$ can be written as $\sig = \gaba$.
Therefore, its suffix $\ov\alp$ can act as a primer to the strand
and form an intramolecular base-pairing which is known as {\em hairpin formation}.
After the extension process we obtain a new strand $\gabag$ which we call a
hairpin completion of $\sig$; see \reffig{fig:hairpin}.
Referring to \cite{Williams1990}, $\alp$ should consist of at least $9$ bases,
otherwise the bond between $\alp$ and $\ov\alp$ is too weak.

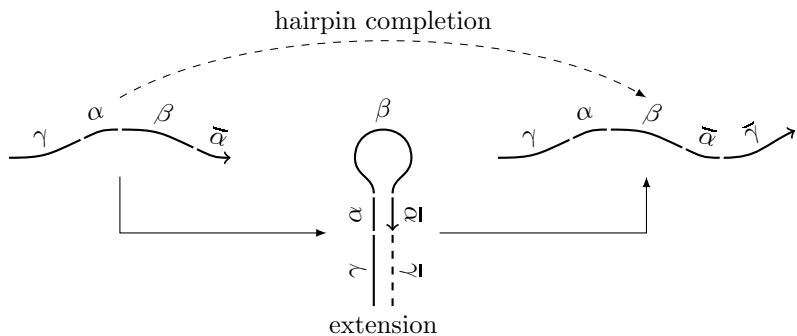
\begin{figure}[ht]
	\centering
	\begin{tikzpicture}[text height=1.5ex,text depth=.25ex]
		\def\ang{25}
		
		\begin{scope}[above,sloped,every path/.style={thick,shorten <=.75pt,shorten >=.75pt}]
			\draw (-1.5,0) .. controls +(0:.5) and +(180+\ang:.5) .. node {$\gam$} (-.5,.25);
			\draw (-.5,.25) .. controls +(\ang:.25) and +(180:.25) .. node {$\alp$} (0,.375);
			\draw (0,.375) .. controls +(0:.5) and +(180-\ang:.5) .. node {$\bet$} (1,.125);
			\draw [->] (1,.125) .. controls +(-\ang:.25) and +(180:.25) 
				.. node {$\ov\alp$} (1.5,0);
		\end{scope}
		
		\draw [-latex] (0,-.25) -- (0,-1) -- (2.75cm,-1);
		
		\begin{scope}[xshift=3.5cm, yshift=-1cm, every node/.style={above, sloped},
				every path/.style={thick,shorten <=.75pt,shorten >=.75pt},above,sloped]
			\draw (-.125,-1) -- node {$\gam$} (-.125,0);
			\draw (-.125,0) -- node {$\alp$} (-.125,.5);
			\draw (-.125,.5) .. controls +(90:.25) and +(-90:.25) .. (-.375,1)
				.. controls +(90:.5) and +(90:.5) .. node {$\bet$} (.375,1)
				.. controls +(-90:.25) and +(90:.25) .. (.125,.5);
			\draw [->] (.125,.5) -- node {$\ov\alp$} (.125,0);
			
			\draw [dashed] (.125,0) -- node {$\ov\gam$} (.125,-1);
			\node at (0,-1.5) {extension};
		\end{scope}

		\draw [-latex] (4.25,-1) -- (7,-1) -- (7,-.25);

		\begin{scope}[xshift=7cm -.5cm,above,sloped,
				every path/.style={thick,shorten <=.75pt,shorten >=.75pt}]
			\draw (-1.5,0) .. controls +(0:.5) and +(180+\ang:.5) .. node {$\gam$} (-.5,.25);
			\draw (-.5,.25) .. controls +(\ang:.25) and +(180:.25) .. node {$\alp$} (0,.375);
			\draw (0,.375) .. controls +(0:.5) and +(180-\ang:.5) .. node {$\bet$} (1,.125);
			\draw (1,.125) .. controls +(-\ang:.25) and +(180:.25) 
				.. node {$\ov\alp$} (1.5,0);
			\draw [->] (1.5,0) .. controls +(0:.5) and +(180+\ang:.5) 
				.. node {$\ov\gam$} (2.5,.375);
		\end{scope}
		
			\draw [-latex,dashed] (0,.8) .. controls +(30:2) and +(150:2) ..
				node [above] {hairpin completion} (7,.8);
	\end{tikzpicture}
	
	\caption{Hairpin completion of a strand or a word.}
	\label{fig:hairpin}
\end{figure}

Hairpin completions are often seen as undesirable byproducts that occur during
DNA computations and, therefore, sets of DNA strands have been investigated
that do not tend to form hairpins or other undesired hybridizations, see \eg
% bad hybridization
\cite{garzon1,garzon2,garzon3,KariKLST05,KariMT07}
and the references within.
On the other hand, DNA algorithms have been designed that make good use of
hairpins and hairpin completions.
For example, the {\em whiplash PCR} is a technique where a single DNA strand computes
one run of a non-deterministic GOTO-machine by repetitive hairpin completions,
where the length of the extended part is controlled by {\em stopper sequences}.
Starting with a huge set of strands, all runs of such a machine can be computed in parallel.
Whiplash PCR can be used to solve NP-complete problems like the Hamiltonian path problem
% whiplash pcr
\cite{Hagiya97,Sakamoto98,Winfree98}.

Motivated by the hairpin formation in biochemistry,
the hairpin completion of formal languages has been introduced in 2006 by
Cheptea, Mart\'\i{}n-Vide, and Mitrana \cite{ChepteaMM06}.
This paper continues the investigation of hairpin formation 
{}from a purely formal language theoretical viewpoint.
The hairpin completion of languages $L_1$ and $L_2$ contains all right hairpin completions
(as in \reffig{fig:hairpin}) of all words in $L_1$
and all left hairpin completions (a word $\abag$ is extended to the left by $\gam$)
of all words in $L_2$.
A formal definition of this operation is given in \refsec{sec:def:hpc}.
The hairpin completion and some related operations have been investigated in a series of papers,
see \eg 
% hairpin papers
\cite{Ito2010,Kopecki11,ManeaMM09,ManeaMM10,DBLP:conf/cie/ManeaM07,ManeaMY09tcs,ManeaMY10}.

It is known from \cite{ChepteaMM06} that the hairpin completion of regular languages
is not necessarily regular but it is always linear context-free.
As regularity of a linear context-free language (given as grammar) is undecidable,
the question arose if regularity of the hairpin completion of regular languages can
be decided.
This question was first posed in 2006 \cite{ChepteaMM06}.
We answered this question positively at \mbox{ICTAC~2009} \cite{DiekertKM09} when
we proved that the problem is decidable in polynomial time.
In this first approach we were not precise about the degree of the polynomial;
it was about~20.
In a later approach, which was presented at \mbox{CIAA~2010} \cite{DieKop10},
we improved the decision algorithm and provided,
that the problem is solvable in $\Oh(n^8)$, where $n$ bounds the size of the two
input DFAs (deterministic finite automata), accepting $L_1$ and $\ov{L_2}$, respectively.
Furthermore, for $L_2 = \es$ we provided a time complexity of $\Oh(n^2)$ and
for $L_1 = \ov{L_2}$ we provided $\Oh(n^6)$.
In the second paper we also showed that the problem is \NL-complete
(\NL is the class of problems that are solvable by a non-deterministic
algorithm using logarithmic space), in particular,
the problem is contained in {\em Nick's Class}
which means it is efficiently solvable in parallel, see \eg \cite{pap94}.
Moreover, we proved that the hairpin completion of regular languages has
an unambiguous linear representation.
Thus, its generating function is an effectively computable rational function.

This paper is organized as follows.
In \refsec{sec:pn}, we formally define the hairpin completion operation,
we lay down our notation, and 
we briefly introduce the concepts of formal language theory that we will use later.
Then, we start our investigation of hairpin completions of regular languages,
in \refsec{sec:unambiguity}, by providing an unambiguous linear grammar generating
the hairpin completion of two given regular languages.
\refsec{sec:algorithm} is devoted to the polynomial time algorithm
that decides the regularity of the hairpin completion of regular languages.
In the final chapter, \refsec{sec:growth}, we discuss the relation of the
growth of the hairpin completion with the growths of the underlying regular
languages.

This paper is the journal version of results that have been presented at
ICTAC~2009. It uses the improvements which were presented at CIAA~2010
and it contains some additional results.

%%%%%%%%%%%%%%%%%%%%%%%%%%%%%%%%%%%%%%%%%%%%%%%%%%%%%%%%%%%%%%%%%%%%%%%%%%%%%%%
%%%%%%%%%%%%%%%%%%%%%%%%%%%%%%%%%%%%%%%%%%%%%%%%%%%%%%%%%%%%%%%%%%%%%%%%%%%%%%%
\section{Preliminaries and Notation}\label{sec:pn}

We assume the reader to be familiar with the fundamental concepts of formal
language theory and automata theory, see \cite{HU}. 
 
By $\Sig$ we denote a finite alphabet with at least two letters which is equipped
with an \emph{involution}  $\BAR\colon \Sig\to \Sig$.
An involution for a set is  a bijection  such that
$\overline{\ov{a}} = a$ for all $a \in \Sig$. 
(In a biological setting we may think of $\Sig = \{A,C,G,T\}$ with $\ov A= T$ and $\ov C= G$.)
We extend this involution to words $a_1 \cdots a_n $ 
by  $\ov{a_1 \cdots a_n} =  \ov{a_n} \cdots \ov{a_1}$.
(Just like taking inverses in groups.)
For languages $\ov L$ denotes the set $\set{\ov w}{w\in L }$.
The set of words over $\Sig$ is denoted $\Sig^*$; and the \emph{empty
word} is denoted by $\e$.
By $\Sig^{\leq m}$ we mean the set 
of all words with length at most $m$.

Given a word $w$, we denote by $\abs w$ its length, by $w[i]\in \Sig$ its $i$-th letter,
and by $w[i,j]$ we mean $w[i]w[i+1]\cdots w[j]$.
If $w=xyz$ for some $x,y,z\in \Sig^*$, then $x$ and $z$ are called \emph{prefix} and %, factor, 
\emph{suffix}, respectively.
A prefix or suffix $x$ of $w$ is said to be {\em proper} if $x\neq w$.
The (proper) prefix relation between words $x$ and $w$ is denoted by $x\leq w$
(\resp~$x < w$).

%%%%%%%%%%%%%%%%%%%%%%%%%%%%%%%%%%%%%%%%%%%%%%%%%%%%%%%%%%%%%%%%%%%%%%%%%%%%%%%%
\subsection{Haiprin completion}\label{sec:def:hpc}
Let $L_1$ and $L_2$ be languages in $\Sig^*$.
By $\kap$ we denote a (small) constant that gives a lower bound for the length of primers.
%, in applications we might think of $\kap = 9$. 
We define the \emph{hairpin completion} $\ccH$ by 
\begin{equation*}
	\ccH= \set{\gamma \alpha \beta \ov{\alpha} \ov{\gamma}}{(\gamma \alpha \beta \ov{\alpha}\in L_1
		\vee \alpha \beta \ov{\alpha}
		\ov{\gamma} \in L_2 ) \wedge \abs\alp \ge \kap}.
\end{equation*}
Three cases are of main interest:
\begin{enumerate}[\quad 1.)]
	\item $L_1=L_2$,
	\item $L_1= \ov{L_2}$, and
	\item $L_1= \es$ or $L_2=\es$.
\end{enumerate}
Compared to the definition of the \hpc in \cite{ChepteaMM06,ManeaMY09tcs}
case~1 corresponds to the the two-sided \hpc and case~3 to the one-sided \hpc.
In many biochemical applications a strand and its complement
always co-occur, thus, the assumption $L_1 = \ov{L_1} = \ov{L_2}$ is natural, too,
and it is a covered by case~2.

%%%%%%%%%%%%%%%%%%%%%%%%%%%%%%%%%%%%%%%%%%%%%%%%%%%%%%%%%%%%%%%%%%%%%%%%%%%%%%%%
\subsection{Linear Context-free Grammars and Unambiguity}

A grammar $G$ is a tuple $G = (V,\Sig,P,\cS)$ where $V$ is the finite set of {\em non-terminals},
$\Sig$ is the alphabet (the set of {\em terminals}), $P$ is the finite set of
{\em production rules}, and $\cS\sse V$ is the set of {\em axioms}.
(Note that we allow a set of axioms rather than the more usual 
restriction to have exactly one axiom $S$.)
A grammar is called {\em context-free}, if every rule in $P$ is of the form $A \production w$
where $A\in V$ and $w\in(V\cup\Sig)^*$;
a grammar is called {\em linear context-free}, or simply {\em linear}, if,
in addition, $w$ contains at most one non-terminal.
For a context-free grammar $G$, 
a {\em derivation step} is denoted by $uAv \derive uwv$,
where $A\production w$ is a production rule in $P$ and $u,v\in(V\cup \Sig)^*$.
By~$\derives$, we denote the reflexive and transitive closure of $\derive$
and we call $u\derives v$ (with $u,v\in(V,\Sig)^*$) a {\em derivation}.
The language generated by $G$ is the set of terminal words
\begin{equation*}
	L(G) = \set{w\in\Sig^*}{\exists A\in \cS \colon A\derives w}.
\end{equation*}

A linear grammar $G$ is said to be {\em unambiguous} if for every
word $w\in L(G)$, there is exactly one derivation $A \derives w$
where $A\in \cS$;
in particular, there is only one axiom $A$ that derivates $w$.
(For general context-free grammars we would require that there is exactly one
{\em left-most derivation $A\derives w$};
but in case of linear grammars, these definitions coincide.)

A language $L$ is called {\em (unambiguous) linear} if it is generated 
by an (unambiguous) linear grammar.

%%%%%%%%%%%%%%%%%%%%%%%%%%%%%%%%%%%%%%%%%%%%%%%%%%%%%%%%%%%%%%%%%%%%%%%%%%%%%%%%
\subsection{Generating Functions}\label{sec:grr}
For a profound discussion of formal power series and how the growth of regular and
unambiguous linear languages can be calculated we refer to
\cite{BerstelReutenauer2010,CeccheriniSilberstein2005,GawrychowskiKRS08,MR42:4343}.
We content ourselves with a few basic facts.
The \emph{growth} or \emph{generating function} $g_L$ 
of a formal language $L$ is defined as
\begin{equation*}
	g_L(z)= \sum_{m \geq 0}\abs{L \cap \Sig^{ m}} z^m.
\end{equation*}
We can view $g_L$ as a formal power series or as an analytic 
function in one complex variable where the radius of convergence is strictly 
positive. The radius of convergence is at least $1\slash\abs \Sig$. 

It is well-known that the growth of a regular language $L$ is effectively rational,
\ie it is a quotient of two polynomials, which can be effectively
calculated. 
The same is true for unambiguous linear languages as soon as we know a
 generating unambiguous linear grammar.
In particular, the growth is either polynomial or exponential.
If the growth is exponential, then
there exists an algebraic number $\lam_L \in \Rp$,
its \grr, such that $\abs{L \cap \Sig^{m}}$
behaves essentially as $\lam_L^m$. More precisely, 
for a language $L$, its {\em \grr} is defined
as the non-negative real number
$\lam_L$ where
\begin{equation*}
	\lam_L = \inf\set{\lam\in \Rp}
		{\exists c>0, \forall m\in\N \colon \abs{L\cap \Sig^{m}}\le c \lam^m}.
\end{equation*}
The growth of a language $L$ is
\begin{enumerate}[\quad 1.)]
	\item exponential if $1 < \lam_L \le \abs\Sig$,
	\item sub-exponential but infinite if $\lam_L = 1$, and
	\item finite if $\lam_L = 0$.
\end{enumerate}
Note that other values for $\lam_L$ do not occur and
that $\lam_L$ is the inverse of the convergence radius of $g_L(z)$.
As we discussed above, the growth of an unambiguous linear language $L$ is 
either polynomial or exponential; thus, if $\lam_L = 1$, the growth of
$L$ can be considered polynomial.
Note that  regular languages of polynomial growth have a very restricted form:
It is well known that a regular language has polynomial growth \IFF
it can be written as a finite union of languages of the 
form $u_0u_1^*u_2\cdots u_{2k-1}^*u_{2k}$ where $u_i$ are words, see \eg \cite{Szilard1992}.
Thus, the more interesting situation occurs when a language has exponential growths.
It is then when the \grr becomes significant. 
 
%%%%%%%%%%%%%%%%%%%%%%%%%%%%%%%%%%%%%%%%%%%%%%%%%%%%%%%%%%%%%%%%%%%%%%%%%%%%%%%%
\subsection{Regular Languages and Finite Automata}

Regular languages can be specified by non-deterministic finite automata (NFA)
$\cA= (\cQ, \Sig, E, \cI, \cF)$,
where $\cQ$ is the finite set of \emph{states}, $\cI \sse \cQ$ is the set of 
\emph{initial states}, and  $\cF \sse \cQ$ is the set of 
\emph{final states}. The set $E$ contains labeled  \emph{transitions} (or  \emph{arcs}),
it is a subset of $\cQ \times \Sig \times  \cQ$.
For a word $w \in \Sig^*$ we write $p \ras{w}q$, if there is a path
from state $p$ to $q$ which is labeled by $w$. Thus,  the
accepted language becomes
\begin{equation*}
	L(\cA) = \set{w\in \Sig^*}{\exists p \in \cI,\, \exists q\in \cF:\, p \ras{w}q}.
\end{equation*}

Later it will be crucial to use also paths which avoid final states. For this 
we introduce a special notation. First remove all arcs $(p,a,q)$ where 
$q\in \cF$ is a final state. Thus, final states do not have incoming arcs anymore
in this reduced automaton. Let us write 
$p \RAS{w}q$, if there is a path in this reduced automaton 
from state $p$ to $q$ which is labeled by the word $w$. Note that for such a 
path $p \RAS{w}q$ we allow $p\in \cF$, but on the path we never meet any final state again.

An NFA is called a deterministic finite automaton (DFA), if it has one initial state
and for every state $p  \in \cQ$ and every letter $a \in \Sig$ there is exactly one arc $(p,a,q) \in E$.
In particular,
a DFA in this paper is always complete, thus we can read every word
to its end.  We also write $p \cdot w = q$, if 
$p \ras{w}{}q$. This yields a (totally defined) function $\cQ \times \Sigma^* \to
\cQ$, which  defines an action of $\Sigma^*$ on $\cQ$
on the right.

\subsection{Notation}
Throughout the paper, $L_1$ and $L_2$ denote fixed  regular languages
in $\Sig^*$.
We use  a DFA accepting $L_1$ as well as a DFA accepting  $L_2$,
%but the DFA for $L_2$ has to work 
which works from right-to-left.
However, instead of introducing this concept we use a %minimal 
DFA (working as usual from left-to-right), which 
accepts  $\ov {L_2}$. This  automaton has the same number of states
(and is structurally isomorphic to) as a DFA accepting the \emph{reversal language} of $L_2$.
Our input is therefore given by two  DFAs $\cA_i = (\cQ_i,\Sig,E_i,\{q_{0i}\},\cF_i)$ for $i = 1,2$ which 
accept the languages $L_1$ and $\ov{L_2}$, respectively.
We let $n_1 = \abs{\cQ_1}$, $n_2 = \abs{\cQ_2}$, and
we let $n = \max\oneset{n_1, n_2}$ be the input size.

%%%%%%%%%%%%%%%%%%%%%%%%%%%%%%%%%%%%%%%%%%%%%%%%%%%%%%%%%%%%%%%%%%%%%%%%%%%%%%%%
%%%%%%%%%%%%%%%%%%%%%%%%%%%%%%%%%%%%%%%%%%%%%%%%%%%%%%%%%%%%%%%%%%%%%%%%%%%%%%%%
\section{Unambiguity of $\ccH$}
\label{sec:unambiguity}

In this section we prove that the hairpin completion $\ccH$
is an unambiguous linear context-free language.
The result is not needed for deciding
regularity of $\ccH$, but it came out as a byproduct of the
decision procedure. However, the result turned out to be rather 
fundamental for the understanding of \hpcs of regular languages, in general. 
In particular, it allows to compute the growths of   $\ccH$ and to compare it 
with the growths of the languages $L_1$ and $L_2$, see \refsec{sec:growth}. 
Moreover,  ideas of  this section, will be reused
when we provide the algorithm deciding the regularity of \ccH.
Therefore we begin with the following result.

\begin{theorem}The \hpc is unambiguous linear context-free.
	Moreover,  there is an effective construction of a generating
	unambiguous linear grammar $G$ for  \ccH such that 
	the size of the grammar $G$ is in $\Oh(n_1^2n_2^2)\sse \Oh(n^4)$.
\end{theorem}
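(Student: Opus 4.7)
The plan is to construct an unambiguous linear grammar $G$ which derives each word $w\in\ccH$ via a uniquely determined canonical hairpin decomposition $w=\gam\alp\bet\ov\alp\ov\gam$. A derivation proceeds outside-in and passes through three phases: a $\gam$-phase peeling complementary pairs $a,\ov a$ at the outer boundary (letters of $\gam$ on the left, of $\ov\gam$ on the right); an $\alp$-phase that continues the peeling while maintaining a length counter saturated at $\kap$ to enforce $\abs\alp\ge\kap$; and a $\bet$-phase emitting $\bet$ from left to right in the middle. Productions in the outer two phases have the linear shape $A\to a\, A'\, \ov a$, while in the $\bet$-phase they have the shape $A\to a\, A'$.

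Non-terminals carry a phase tag, the length counter, and a quadruple $(p_1,q_1,p_2,q_2)\in\cQ_1^2\times\cQ_2^2$. The pair $(p_1,q_1)$ records the left- and right-boundary states of the (unique) run of $\cA_1$ on $\gam\alp\bet\ov\alp$ witnessing $\gam\alp\bet\ov\alp\in L_1$ (the right-hairpin case); $(p_2,q_2)$ records the analogous run of $\cA_2$ on $\gam\alp\ov\bet\ov\alp$, witnessing $\alp\bet\ov\alp\ov\gam\in L_2$ (the left-hairpin case, since this is equivalent to $\gam\alp\ov\bet\ov\alp\in L(\cA_2)=\ov{L_2}$). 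A peeling step updates each $p_i$ to $p_i\cdot a$ and replaces each $q_i$ by a DFA-predecessor $q_i'$ with $q_i'\cdot\ov a=q_i$. During the $\bet$-phase no right letter is emitted, so $q_1$ and $p_2$ stay fixed; only $p_1$ advances and $q_2$ retreats, reflecting that $\cA_2$ reads $\ov\bet$ in the reverse order of the grammar's left-to-right emission of $\bet$. A derivation closes with an $\eps$-production requiring $p_1=q_1$ and $p_2=q_2$, witnessing acceptance in the relevant automaton.

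To force unambiguity, I fix a canonical decomposition for each $w\in\ccH$. Let $k(w)$ be the length of the longest complementary prefix-suffix of $w$; the canonical choice requires $\abs\gam+\abs\alp=k(w)$. This is encoded in the grammar by allowing the transition from the $\alp$-phase to the $\bet$-phase only when $\bet=\eps$ or the first letter of $\bet$ is not the complement of its last letter — otherwise another complementary pair could be absorbed into $\alp$, contradicting maximality of $k(w)$. The split between $\gam$ and $\alp$ is fixed by a canonical tie-break (e.g.\ $\abs\gam$ maximal subject to $\abs\alp\ge\kap$), which is local in the productions. For words admitting both a right- and a left-hairpin origin, I prefer the right branch and disable the left branch by the choice of axiom whenever the right branch already succeeds; the joint $\cA_1\times\cA_2$ tracking makes this disambiguation a state-local condition at the axiom, which is why both automata must be tracked simultaneously.

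Counting, the quadruple yields $\Oh(n_1^2 n_2^2)$ non-terminals (times the constants for the phase tag and the $\kap$-counter), each with $\Oh(\abs\Sig)$ productions, giving $\abs G\in\Oh(n_1^2 n_2^2)\sse\Oh(n^4)$. The hard part will be verifying unambiguity: one must show that the local production constraints really do single out the canonical decomposition — in particular, that maximality of $k(w)$ is captured by the non-complementarity of $\bet$'s endpoints, and that the canonical right-vs-left choice is correctly encoded by the axiom selection. The quadruple $(p_1,q_1,p_2,q_2)$ is exactly what is needed so that $L_1$- and $L_2$-membership can be checked in parallel and compared locally, yielding the $n_1^2\cdot n_2^2$ bound rather than $n_1^2+n_2^2$.
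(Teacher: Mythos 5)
Your overall architecture---quadruples of DFA states, an outside-in linear peeling of complementary pairs followed by a left-to-right derivation of the middle part, and the resulting $\Oh(n_1^2n_2^2)$ count---matches the paper's construction. The genuine gap is in your choice of canonical decomposition. You canonicalize by the \emph{longest} complementary prefix--suffix, requiring $\abs\gam+\abs\alp=k(w)$, and you tie the membership check (the run of $\cA_1$ on $\gaba$, resp.\ of $\cA_2$ on $\gam\alp\ov\bet\ov\alp$) to that maximal-arm factorization. But the maximal-arm factorization need not be a \emph{witnessing} one. Take $\kap=1$, $L_2=\es$ and $L_1=\{aab\ov a\,\ov a\}$. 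Then $\pi=aab\ov a\,\ov a\in\ccH$ via the empty $\gam$, $\alp=a$, $\bet=ab\ov a$; but $k(\pi)=2$, so your canonical factorization is $\gam=a$, $\alp=a$, $\bet=b$, for which $\gaba=aab\ov a\notin L_1$ and the $L_2$-branch is empty. Your grammar, which verifies acceptance along the canonical factorization, therefore cannot derive $\pi$, so $L(G)\subsetneq\ccH$. The paper avoids this by canonicalizing \emph{among witnessing factorizations}: it takes $\abs\alp=\kap$ and $\gam$ \emph{minimal} such that $\gaba\in L_1$ or $\abag\in L_2$ (equivalently, every prefix of $\pi$ in $L_1$ is a prefix of $\gaba$ and every suffix in $L_2$ is a suffix of $\abag$); such a factorization always exists for $\pi\in\ccH$ and is unique. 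It is enforced locally by letting $\cA_1$ read all of $\pi$ and $\cA_2$ all of $\ov\pi$, placing the accepting-state requirement at the transition to the middle phase, and forbidding final states on the outer peeling rules (the side condition $q_1\cdot\ov a\notin\cF_1$ and $q_2\cdot\ov a\notin\cF_2$), so that the phase transition is pinned to the last final state on the run.

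A secondary point: your condition that the first letter of $\bet$ not be the complement of its last letter is not local to a single production in a left-to-right linear derivation of $\bet$; you would have to carry the first letter of $\bet$ in the non-terminal until the closing production. That is only a constant-factor blowup, but it is another place where the disambiguation is more delicate than claimed---and it becomes unnecessary once the canonical factorization is chosen as in the paper, where no comparison of the endpoints of $\bet$ is needed.
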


\begin{proof}
The  basic observation is that 
every word $\pi\in\ccH$ has a unique factorization $\pi = \gabag$ such that
\begin{enumerate}[\quad 1.)]
	\item $\gaba\in L_1$ or $\abag\in L_2$,
	\item $\abs \alp = \kap$,
	\item if a prefix of $\pi$ belongs to $L_1$, then it is a prefix of $\gaba$, and
	\item if a suffix of $\pi$ belongs to $L_2$, then it is a suffix of $\abag$.
\end{enumerate}
In other words, among all factorizations which satisfy the first condition
and where $\abs\alp \ge \kap$,
we choose the factorization where $\abs \alp = \kap$ and the length of $\gam$ is  minimal.
In such a factorization we call $\gam\alp\le \pi$ the
{\em minimal gamma-alpha-prefix} of $\pi$.
This factorization yields runs in the DFAs $\cA_1$ and $\cA_2$ as in \reffig{fig:urun}.
(Recall that $\cA_2$ accepts $\ov{L_2}$ and $\ov\pi = \gam \alp \ov\bet\ov\alp\ov\gam$.)
As $\pi$ determines the factors $\gam$ and $\alp$, the states
$c_i$, $d_i$, $e_i$, $f_i$, and $q_i'$ (for $i=1,2$) are determined by $\pi$ as well.

\begin{figure}[ht]
	\vspace{-.5\baselineskip}
	\begin{align*}
		\cA_1: \quad& q_{01} \ras{\gam} c_1 \ras{\alp} d_1 \ras {\bet }{}
			e_1 \ras {\ov \alp}{} f_1 \RAS {\ov \gam }{} q_1' \\
		\cA_2: \quad& q_{02} \ras{\gam} c_2 \ras{\alp} d_2\ras {\ov \bet }{}
			e_2 \ras {\ov \alp}{} f_2 \RAS {\ov \gam}{} q_2'
	\end{align*}
	\caption{The runs defined by $\pi\in \ccH$ where $\gam\alp$ is
		the minimal gamma-alpha-prefix
		and, therefore, $f_1\in \cF_1$ or $f_2\in \cF_2$.}
	\label{fig:urun}
\end{figure}

Vice versa, every path of this form (where $\abs\alp = \kap$) defines one word $\pi=\gabag$
from the hairpin completion \ccH such that $\gam\alp$ is its minimal gamma-alpha-prefix.

By this observation, we can use quadruples of states 
in order to define the unambiguous linear grammar $G$ that generates
the hairpin completion \ccH.
For every $(p_1,p_2,q_1,q_2) \in \cQ_1 \times \cQ_2\times \cQ_1\times \cQ_2$
we define a regular language
\begin{equation*}
	\bridge(p_1,p_2,q_1,q_2) = \set{w\in \Sig^*}{p_1\cdot w = q_1 \wedge p_2\cdot \ov w = q_2}.
\end{equation*}
 Thus, in \reffig{fig:urun} we have 
$\pi \in \bridge(q_{01},q_{02},q_1',q_2')$, $\aba\in \bridge(c_1,c_2,f_1,f_2)$, and
$\bet\in \bridge(d_1,d_2,e_1,e_2)$. Later a tuple $(p_1,p_2,q_1,q_2)$ with 
$\bridge(p_1,p_2,q_1,q_2)\neq \es$ is called a \emph{basic bridge}, see \refsec{sec:automaton}.

Furthermore, in our grammar $G$, we let $\bridge(p_1,p_2,q_1,q_2)$ be the non-terminal
that derives all words from the language $\bridge(p_1,p_2,q_1,q_2)$.
Compared to \reffig{fig:urun}, we intend that
$\bridge(d_1,d_2,e_1,e_2)\derives\bet$.
In order to achieve this, it suffices to introduce the production rules
\begin{align*}
	\bridge(p_1,p_2,q_1,q_2\cdot \ov a) &\production a\, \bridge(p_1\cdot a, p_2,q_1,q_2), \\
	\bridge(p_1,p_2,p_1,p_2) &\production 1
\end{align*}
for $p_1,q_1\in\cQ_1$, $p_2,q_2\in\cQ_2$, and $a\in\Sig$.
Observe that every derivation from $\bridge(p_1,p_2,q_1,q_2)$
to a terminal word must use the rule $\bridge(q_1,p_2, q_1,p_2) \production 1$
as last step.
Thus, $\bridge(p_1,p_2,q_1,q_2) \derives w$ implies $p_1\cdot w = q_1$ and
$p_2\cdot \ov{w} = q_2$ as desired.
Furthermore, for all words $w \in \bridge(p_1,p_2,q_1,q_2)$ and
all factorizations $w = uv$
(\ie $(p_1\cdot u)\cdot v = q_1$ and $(p_2\cdot \ov v)\cdot\ov u = q_2$)
there is a derivation
\begin{equation*}
	\bridge(p_1,p_2,q_1,q_2) \derives u\, \bridge(p_1\cdot u,p_2,q_2,p_2\cdot \ov v)
		\derives uv\, B(q_1, p_2,q_1,p_2) \derive uv
\end{equation*}
where the non-terminal reached after $\abs u$ steps is determined.
We conclude, the non-terminal $\bridge(p_1,p_2,q_1,q_2)$ derives all words from the language
$\bridge(p_1,p_2,q_1,q_2)$ and the derivation of each word is unambiguous.

The linear context-free part of the grammar $G$ are the derivations of the minimal
gamma-alpha-prefixes and the corresponding suffixes.
In a similar manner as above,
for every quadruple $(p_1,p_2,q_1,q_2)\in \cQ_1 \times \cQ_2\times \cQ_1\times \cQ_2$
we let $\mgp(p_1,p_2,q_1,q_2)$ be a non-terminal in $G$
and for $p_1,q_1\in\cQ_1$, $p_2,q_2\in\cQ_2$, and $a\in \Sig$ we 
define a rule
\begin{equation*}
	\mgp(p_1,p_2,q_1\cdot\ov a,q_2\cdot\ov a) \production a\, \mgp(p_1\cdot a,p_2\cdot a,q_1,q_2)\,\ov a
\end{equation*}
if $q_1\cdot\ov a\notin\cF_1$ and $q_2\cdot\ov a\notin\cF_2$;
and for $p_1,q_1\in\cQ_1$, $p_2,q_2\in\cQ_2$, and $\alp\in\Sig^\kap$ we define a rule
\begin{equation*}
	\mgp(p_1,p_2,q_1\cdot\ov \alp,q_2\cdot\ov \alp) \production
		\alp \, \bridge(p_1\cdot \alp,p_2\cdot \alp,q_1,q_2)\,\ov\alp
\end{equation*}
if $q_1\cdot\ov \alp\in\cF_1$ or $q_2\cdot\ov \alp\in\cF_2$.
Observe that the derivations we introduce are again unambiguous
since on a derivation
\begin{equation*}
	\mgp(q_{01},q_{02},q_1',q_2') \derives u\, \mgp(p_1,p_2,q_1,q_2)\,\ov u
	\derives uv \, \mgp(c_1,c_2,f_1,f_2)\, \ov v\ov u
\end{equation*}
the non-terminal $\mgp(p_1,p_2,q_1,q_2)$ is determined by $p_i = q_{0i}\cdot u$
and $q_i = f_i\cdot \ov v$ (for $i =1,2$).
Furthermore, the states $q_1'$, $q_2'$, $q_1$, and $q_2$ cannot be final states
(if $v\neq \e$) and if $f_1\in\cF_1$ or $f_2\in\cF_2$, then we have to use
a production rule of the second form in the next derivation step.

We conclude, there is a situation as in \reffig{fig:urun}
if and only if
\begin{align*}
	\mgp(q_{01},q_{02},q_1',q_2')
		&\derives \gam\, \mgp(c_1,c_2,f_1,f_2)\, \ov \gam \\
		&\derive \gam\alp\, \bridge(d_1,d_2,e_1,e_2)\,\ov\alp\ov\gam \\
		&\derives \gabag = \pi
\end{align*}
and the derivation of $\pi$ is unambiguous.
Thus, we let $\mgp(q_{01},q_{02},q_1',q_2')$ be the axioms in the grammar $G$
for all $q_1'\in\cQ_1$ and $q_2'\in\cQ_2$.
Since for each word $\pi$ there exists at most one axiom with 
$\mgp(q_{01},q_{02},q_1',q_2')$ such that 
$\mgp(q_{01},q_{02},q_1',q_2') \derives \pi$
(namely, $q_1' = q_{01}\cdot \pi$ and $q_2' = q_{02}\cdot \ov\pi$),
we see that $G$ is unambiguous linear.

As for the size of the grammar, observe that the number of non-terminals is
bounded by $2n_1^2n_2^2$ and the number of production rules is bounded  by
\begin{equation*}
	\left(\abs{\Sig}^\kap + 2\abs\Sig + \frac{1}{n_1n_2}\right)n_1^2n_2^2 \in \Oh(n^4).
\end{equation*}
\end{proof}

%%%%%%%%%%%%%%%%%%%%%%%%%%%%%%%%%%%%%%%%%%%%%%%%%%%%%%%%%%%%%%%%%%%%%%%%%%%%%%%%
%%%%%%%%%%%%%%%%%%%%%%%%%%%%%%%%%%%%%%%%%%%%%%%%%%%%%%%%%%%%%%%%%%%%%%%%%%%%%%%%
\section{Polynomial Time Decision Algorithm}\label{sec:algorithm}

We consider the following decision problem: 

\begin{description}
	\item[Input:]
		DFAs $\cA_1$ and $\cA_2$ (with state sets $\cQ_1$ and $\cQ_2$) accepting the languages $L_1$ and $\ov{L_2}$, respectively.

The input size is $n= \max\oneset{\abs{\cQ_1},\, \abs{\cQ_2}}$.
	\item[Question:]
		Is the hairpin completion \ccH regular?
\end{description}

The purpose of this section is to prove the following theorem.

\begin{theorem}\label{thm:main}
	The problem whether the hairpin completion \ccH is regular is
	decidable in time
	\begin{enumerate}[\quad i.)]
		\item $\Oh(n^2)$ if $L_1 = \es$ or $L_2 = \es$.
		\item $\Oh(n^6)$ if $L_1 = \ov{L_2}$.
		\item $\Oh(n^8)$ in general.
	\end{enumerate}
\end{theorem}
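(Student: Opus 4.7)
The plan is to reduce regularity of $\ccH$ to a combinatorial property of a polynomial-size product automaton built from $\cA_1$ and $\cA_2$, and then detect that property via cycle and reachability search.

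First, I would exploit the unambiguous decomposition from \refsec{sec:unambiguity}: every $\pi\in\ccH$ factors uniquely as $\pi = \gam\alp\bet\ov\alp\ov\gam$ with the minimal gamma-alpha-prefix, inducing the parallel runs of \reffig{fig:urun}. Because $\bet$ is controlled only by the regular bridge $\bridge(d_1,d_2,e_1,e_2)$ and $\abs\alp=\kap$ is bounded, the only potential obstruction to regularity comes from the forced matching between the prefix $\gam$ and the suffix $\ov\gam$. A Myhill--Nerode argument should then show that $\ccH$ is regular \IFF, for each reachable configuration $(c_1,c_2,f_1,f_2)$, the family of pairs $(\gam,\ov\gam)$ linking $(q_{01},q_{02})$ to $(c_1,c_2)$ and $(f_1,f_2)$ to $(q_1',q_2')$ can only produce finitely many distinct residual classes in the quotient defining $\ccH$.

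Second, I would formalize this on a product automaton on quadruples $(p_1,p_2,q_1,q_2)\in\cQ_1^2\times\cQ_2^2$, where a letter $a$ advances $(p_1,p_2)$ by $a$ and $(q_1,q_2)$ by $\ov a$ (tracking the last two coordinates in reverse). This is precisely the automaton whose non-empty languages are the bridges $\bridge(p_1,p_2,q_1,q_2)$ of the preceding section. On this product I would look for a \emph{non-regularity witness}: essentially, two reachable quadruples together with a cycle through each that, when pumped synchronously on the $\gam$-side and $\ov\gam$-side, generates an infinite family of pairwise Myhill--Nerode inequivalent words in $\ccH$. Care is required to enforce the $\RAS{\cdot}$ side-condition of \reffig{fig:urun} that $\ov\gam$ avoids final states, which is handled by removing incoming arcs to $\cF_1$ and $\cF_2$ in the appropriate copy of the product.

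Third, the three complexity bounds then follow from the dimension of the search space. The general case requires examining pairs of quadruples, giving $\Oh((n_1^2n_2^2)^2) = \Oh(n^8)$. If $L_1=\es$ or $L_2=\es$, only one of $\cA_1,\cA_2$ is relevant; the bridges collapse to pairs over a single automaton's state set, and the witness search reduces to reachability in a graph on $\Oh(n^2)$ vertices. If $L_1=\ov{L_2}$, the involution identifies a coordinate from $\cA_1$ with one from $\cA_2$, collapsing quadruples to triples and pairs of triples to a search space of size $\Oh(n^6)$.

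The main obstacle will be pinning down the combinatorial characterization of non-regularity so that it is both sound and complete. Soundness, that a witness really produces an infinite Myhill--Nerode family in $\ccH$, is delicate because pumping $\gam$ could create a different, shorter minimal gamma-alpha-prefix of the pumped word, thereby changing its unique factorization and spoiling the intended distinguishing suffix; one must carefully pump into positions that preserve minimality. Completeness requires a pumping lemma adapted to the shape $\gam\alp\bet\ov\alp\ov\gam$, showing that any non-regular $\ccH$ forces the existence of such synchronized cycles in the product. Both directions hinge on fine control of the parallel runs in $\cA_1$ and $\cA_2$, which is precisely what the quadruple-based bridge framework of \refsec{sec:unambiguity} is designed to afford.
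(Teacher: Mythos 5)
Your setup matches the paper's: the unambiguous factorization via the minimal gamma-alpha-prefix, the quadruple-based bridge automaton (in the paper, the NFA $\cA$ on states $((p_1,p_2),q_1,q_2,\ell)$ of size $N=n_{12}n_1n_2$), and the complexity accounting by pairs of states of that automaton, including the collapse to $N\le n^3$ when $L_1=\ov{L_2}$ and $N\le n^2$ in the one-sided case. However, there is a genuine gap: the entire content of the theorem lives in the ``combinatorial characterization of non-regularity'' that you explicitly defer as the main obstacle, and your proposed witness --- two reachable quadruples each carrying a cycle, pumped synchronously to produce infinitely many Myhill--Nerode classes --- is not the characterization the paper needs and does not obviously yield one. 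The paper's conditions are of a different shape. First (Lemma~\ref{lem:loop}/Test~1), regularity forces every non-trivial strongly connected component of $\cA$ to be a single Hamiltonian cycle with word $v_s$, and \emph{every} path from it to a final state must be a prefix of a word in $v_s^+$; this is proved by pumping in the syntactic monoid so that the $\ov\gam$-side outgrows the available $L_2$-suffix, forcing the hairpin to be built from an $L_1$-prefix and hence forcing $w$ to align with $v^+$. Second (Lemma~\ref{lem:egil}/Tests~2--3), for each such cycle one must check that whenever a word $uv^kxyz\ov x\ov v^\ell\ov u$ lies in $\ccH$ for all $k\ge\ell$ by virtue of an $L_1$-prefix, the over-pumped word with $\ell>k$ is rescued by an $L_2$-suffix, which translates into the existence of a factorization $xyz\ov x\ov v=\mu\del\bet\ov\del\ov\mu$ with $p_2\cdot\mu\del\ov\bet\ov\del\in\cF_2$. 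This second condition is not about a second cycle at all; it is about the interaction of the cycle word with the bridge languages and the final states of $\cA_2$, and detecting it efficiently requires the string-matching table and the precomputed $a$-bridges of Section~\ref{sec:complexity}. Your witness notion gives you neither this necessary condition nor its converse.

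The converse is the other missing half: even granting a sound witness for non-regularity, you must show that its absence implies regularity. The paper does this in Lemma~\ref{lem:finaltests} by exhibiting $\ccH$ as a finite union of regular languages of the form $uv^nv^+xyR_z\ov x\ov v^+\ov v^n\ov u$ plus finitely many bounded pieces, using Test~1 to normalize $\gam$ against $v^+$ and Tests~2--3 to guarantee the $L_2$-suffix that makes the over-pumped words members. A Myhill--Nerode finiteness claim phrased as ``finitely many residual classes'' is not something you can check directly on the product automaton without first proving such a normal form. Your own soundness worry --- that pumping may change the minimal gamma-alpha-prefix --- is real and is resolved in the paper precisely by the level structure of $\cA$ and by conditions~4 and~5 of Lemma~\ref{lem:egil}, which pin down where the last final state of $\cA_1$ and $\cA_2$ can occur along the run; without an analogue of those conditions your witness search is neither demonstrably sound nor complete.
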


The algorithm deciding this problem is divided in Test~1, 2, and ~3.
Test~0 yields the time performance in case when $L_1=\es$ or $L_2=\es$,
yet it is redundant for the other cases.
The tests check properties of an automaton \cA which accepts
the minimal gamma-alpha-prefixes, introduced in \refsec{sec:unambiguity}.
We will start with the construction of $\cA$.

%%%%%%%%%%%%%%%%%%%%%%%%%%%%%%%%%%%%%%%%%%%%%%%%%%%%%%%%%%%%%%%%%%%%%%%%%%%%%%%%
\subsection{The Automaton $\cA$}\label{sec:automaton}

The non-deterministic automaton $\cA$, we are about to construct, will accept
those words that are a minimal gamma-alpha-prefix of some word $\pi = \gabag$
and the final states of the automatons will determine from which 
language $\bridge(d_1,d_2,e_1,e_2)$
we have to choose the factor $\bet$.
The construction is analogous to the definition of rules for
the non-terminals $R(p_1,p_2,q_1,q_2)$ in \refsec{sec:unambiguity}.

In order to improve the time bound in case when $L_1 = \ov{L_2}$,
we introduce the usual product automaton of $\cA_1$ and $\cA_2$ with state set
\begin{equation*}
	\cQ_{12}  = \set{(p_1,p_2)\in\cQ}{\exists w\in\Sig^* \colon q_{01}\cdot w = p_1 \land
		q_{02} \cdot w = p_2}
\end{equation*}
and operation $(p_1,p_2)\cdot w = (p_1\cdot w, p_2\cdot w)$
for $(p_1,p_2)\in\cQ_{12}$ and $w\in\Sig^*$.
Furthermore, we let $n_{12} = \abs{\cQ_{12}}$.
Note that if $L_2 = \es$ or $L_1 = \ov{L_2}$, then $n_{12} = n_1 = n$ and
in general $n \leq n_{12} \leq n^2$.
(Recall that $n_{i} = \abs{\cQ_{i}}$ for $i = 1,2$.)

Note first that a non-terminal $R(p_1,p_2,q_1,q_2)$ is reachable from an axiom
 only if $(p_1,p_2)\in\cQ_{12}$;
hence, we will consider states from
$\cQ_{12}\times\cQ_1\times\cQ_2\sse \cQ_1 \times \cQ_2\times \cQ_1\times \cQ_2$
for the construction of \cA.
From now on, we call $(p_1,p_2,q_1,q_2)$ 
a {\em basic bridge} if $\bridge(p_1,p_2,q_1,q_2) \neq \es$.
This notation is due to the fact, that there is some word that connects
the state pairs $(p_1,p_2)$ and $(q_1,q_2)$.
It is easy to see that in case when $(p_1,p_2,q_1,q_2)$ is not a basic bridge,
neither the non-terminal $\mgp(p_1,p_2,q_1,q_2)$
nor the non-terminal $\bridge(p_1,p_2,q_1,q_2)$
is productive in the grammar $G$.
In order to accept the $\alp$-factor, we also need
\emph{levels} for $0 \leq \ell \leq \kap$;
hence there are $\kap+1$ levels.
By $[\kap]$ we denote in this paper the set $\oneset{0, \ldots, \kap}$.
Define 
\begin{equation*}
	\set{((p_1,p_2),q_1,q_2, \ell) \in \cQ_{12} \times \cQ_{1} \times \cQ_{2} \times [\kap]}
		{(p_1,p_2,q_1,q_2) \text{ is a basic bridge}}
\end{equation*}
as the state space of $\cA$.
For $N = n_{12} n_1 n_2\leq n^4$ the size of \cA is bounded by
$N\cdot (\kap+1)\in \Oh(N)\sse\Oh(n^4)$. 
We have $N \leq n^2$ for $L_1 = \es$ or $L_2= \es$, and $N \leq n^3$ for $L_2 = \ov {L_1}$.

By a slight abuse of languages
we call  a state $((p_1, p_2) ,q_1,q_2,\ell)$ a \emph{bridge}.
Bridges are frequently denoted by $(P,q_1,q_2,\ell)$ with $P= (p_1, p_2)\in \cQ_{12}$, 
$q_1 \in \cQ_1$, $q_2\in \cQ_2$, and  $\ell \in [\kap]$.
Bridges are a  central concept in the following. 

The $a$-transitions in the NFA for  $a\in \Sigma$ are given by the following arcs: 
\begin{alignat*}{2}
	(P,\;q_1\cdot \ov a ,\; q_2 \cdot \ov a ,0) &\ras{a}{} (P\cdot a ,\;q_1,q_2,0)&\qquad&
		\text{ for } q_i\cdot \ov a \notin \cF_i, \, i = 1,2,\\
	(P,\;q_1\cdot \ov a ,\;q_2 \cdot  \ov a ,0) &\ras{a}{} (P\cdot a ,\;q_1,q_2,1)&&
		\text{ for } q_1\cdot \ov a \in \cF_1 \text{ or }q_2\cdot \ov a \in \cF_2, \\
	(P,\;q_1\cdot \ov  a ,\;q_2 \cdot  \ov a ,\ell) &\ras{a}{} (P\cdot a ,\;q_1,q_2,\ell + 1)&&
		\text{ for } 1 \leq \ell < \kap.
\end{alignat*}

Observe that  no state
of the form $(P,q_1,q_2,0)$ with $q_1 \in \cF_1$ or $q_2 \in \cF_2$ has an outgoing 
arc to level zero; we must switch to level one. There are no outgoing arcs on level~$\kap$, and 
for each  $(a, P,q_1,q_2,\ell) \in \Sigma \times \cQ_{12} \times \cQ_{1} \times \cQ_{2}\times [\kap-1] $
there exists at most  one arc $(P,q_1',q_2',\ell) \ras{a}{} (P\cdot a ,q_1,q_2,\ell')$.
Indeed, the triple $(q_1',q_2',\ell')$ is determined by $(q_1,q_2,\ell)$ and the letter $a$.
Not all arcs exist because
$(P,q_1',q_2',\ell)$ can be a bridge whereas  $(P\cdot a ,q_1,q_2,\ell')$ is not. 
Thus, there are at most $\abs{\Sig} \cdot N\cdot \kap\in\Oh(N)$ arcs in the NFA.

The set of initial states $\cI$ contains all bridges of the form
$(Q_0,q'_1,q'_2,0)$ where $Q_0 = (q_{01},\, q_{02})$. 
The set of final states $\cF$ is given by all bridges $(P,q_1,q_2,\kap)$ on level~$\kap$. 

For an example and a graphical presentation of the NFA, see \reffig{fig:automaton}.

\begin{figure}[ht]
  \centering
  \def\scale{.9}
  \begin{tikzpicture}[shorten >=1pt,node distance=2cm,auto,initial text=,%
  initial distance=4mm,bend angle=45,font=\footnotesize],scale=\scale]
    \tikzstyle{every node}=[scale=\scale]
    \tikzstyle{every loop}=[distance=.5cm]
	\node at (0,5.5)	[state,initial]		(A0)						{$q_{01}$};
	\node 			[state]				(A1)		[right of=A0]	{$p_1$};
	\node 			[state, accepting]	(A2)		[right of=A1]	{$f_1$};
	\node 			[state]				(A3)		[below of=A1]	{$\dead_1$};
	\node			[above of=A1,node distance=1.5cm]			{$L_1 = a^*(b\mid \ov b)\ov a$};
    
    \path [->]	(A0)	edge	[loop above]	node			{$a$}			()
    					edge					node			{$b,\ov b$}		(A1)
    					edge	[bend right]	node	[swap]	{$\ov a$}		(A3)
    			(A1)	edge					node			{$\ov a$}		(A2)
	    				edge					node			{$a,b,\ov b$}	(A3)
    			(A2)	edge	[bend left]		node			{$\Sigma$}		(A3)
    			(A3)	edge	[loop below]	node			{$\Sigma$}		();

    \node at (6,5.5)	[state,initial]		(B0)					{$q_{02}$};
    \node			[state]				(B1)	[right of=B0]	{$p_2$};
    \node			[state, accepting]	(B2)	[right of=B1]	{$f_2$};
    \node			[state]				(B3)	[below of=B1]	{$\dead_2$};
	\node			[above of=B1,node distance=1.5cm]			{$\ov{L_2} = a^*\ov b\ov a$};
    
    \path [->]	(B0)	edge	[loop above]	node			{$a$}			()
    					edge					node			{$\ov b$}		(B1)
    					edge	[bend right]	node	[swap]	{$\ov a,b$} 	(B3)
    			(B1)	edge					node			{$\ov a$}		(B2)
    					edge					node			{$a,b,\ov b$}	(B3)
    			(B2)	edge	[bend left]		node			{$\Sigma$}		(B3)
    			(B3)	edge	[loop below]	node			{$\Sigma$}		();

	\begin{scope}[scale=\scale]

    \tikzstyle{every state}=[rectangle,minimum height=.8cm,minimum width=2.3cm]
    \tikzstyle{every pin}=[pin distance=4mm]
    \tikzstyle{every pin edge}=[shorten <=1pt]
    \tikzstyle{init}=[pin={[pin edge={<-}]170:}]
    
    \node at (0,0)		[state,initial]		(A)		{$(Q_0,\dead_1,\dead_2,0)$};
    \node at (3,0)		[state,init]		(B)		{$(Q_0,f_1,f_2,0)$};
    \node at (6,0)		[state,accepting]	(B1)	{$(Q_0,p_1,p_2,1)$};
    \node at (7.25,0)	[right]						{$B(q_{01},q_{02},p_1,p_2)= b$};
    \node at (3,1)		[state,initial]		(C)		{$(Q_0,f_1,\dead_2,0)$};
    \node at (6,1)		[state,accepting]	(C1)	{$(Q_0,p_1,\dead_2,1)$};
    \node at (7.25,1)	[right]						{$B(q_{01},q_{02},p_1,\dead_2)= aa^+b \mid a^*\ov b$};
    \node at (6,2)		[state,accepting]	(C2)	{$(Q_0,p_1,f_2,1)$};
    \node at (7.25,2)	[right]						{$B(q_{01},q_{02},p_1,f_2)= ab$};
    \node at (3,-1)	[state,initial]		(D)		{$(Q_0,\dead_1,f_2,0)$};
    \node at (6,-1)	[state,accepting]	(D1)	{$(Q_0,\dead_1,p_2,1)$};
    \node at (7.25,-1)	[right]						{$B(q_{01},q_{02},\dead_1,p_2)= b\ov a\ov a^+$};
    \node at (6,-2)	[state,accepting]	(D2)	{$(Q_0,f_1,p_2,1)$};
    \node at (7.25,-2)	[right]						{$B(q_{01},q_{02},f_1,p_2)= b\ov a$};
    \node at (-.8,2) 	{$\cA$:};

	\path [->]	(A)	edge										node			{$a$}	(B)
					edge	[loop,out=125,in=55,distance=1cm]	node		 	{$a$}	(A)
					edge										node			{$a$}	(C)
					edge										node	[swap]	{$a$}	(D)
				(B)	edge										node			{$a$}	(B1)
				(C)	edge										node			{$a$}	(C1)
					edge										node			{$a$}	(C2)
				(D)	edge										node			{$a$}	(D1)
					edge										node	[swap]	{$a$}	(D2);
					
	\end{scope}
  \end{tikzpicture}
  \caption{DFAs for $L_1$ and $\ovc{L_2}$ and the resulting NFA $\cA$ with 4 initial states
  and 5 final states associated to the (linear context-free) \hpc
  $\ccH = a^+ b \ovc a ^+ \cup \{a^i \ovc b \ovc a^j\mid i \geq j \geq 1\}$ with $\kap=1$. }
  \label{fig:automaton}
\end{figure}

Next, we show that the automaton $\cA$ encodes the minimal gamma-alpha-prefixes
and that we obtain the hairpin completion \ccH in a natural way from \cA.
For languages $\bridge$ and $\mgp$ we denote by $\bridge^\mgp$ the language
\begin{equation*}
	\bridge^\mgp = \set{v\bet\ov v}{\bet\in \bridge \land v\in \mgp}.
\end{equation*}
(This notation is adopted from group theory where exponentiation denotes conjugation 
and the canonical involution refers to taking inverses.) 
Clearly, if $\bridge$ and $\mgp$ are regular, then $\bridge^\mgp$ is linear context-free,
but not regular in general.
Also note that if $\mgp$ is finite, then $\bridge^\mgp$ is regular.

\begin{lemma}\label{lem:str}
	Let $M = \cI \times \cF$.
	For each pair $\mu= (I,F)\in M$ with $F = ((d_1,d_2),e_1,e_2,\kap)$
	let $\mgp_\mu$ be the (regular) set of words which label a path from the initial state
	$I$ to the final state $F$, and let $\bridge_\mu = \bridge(d_1,d_2,e_1,e_2)$.

	The  \hpc $\ccH$ is the disjoint union
	\begin{equation*}
		\ccH= \bigcup_{\mu \in M}\CP.
	\end{equation*}
	
	Moreover, for $\mu\in\cI\times\cF$ and
	for all words $\bet\in\bridge_\mu$ and $v\in\mgp_\mu$,
	the minimal gamma-alpha-prefix of $v\bet\ov v$ is $v$.
\end{lemma}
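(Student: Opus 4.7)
The plan is to show both inclusions via the correspondence between paths in $\cA$ and the runs depicted in \reffig{fig:urun}, and then to read off disjointness from the uniqueness of the minimal gamma-alpha-prefix.

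\textbf{Paths in $\cA$ give hairpin completions.}
First I would fix $\mu = (I,F) \in M$ with $I = (Q_0,q_1',q_2',0)$ and $F = ((d_1,d_2),e_1,e_2,\kap)$, and take $v \in R_\mu$ together with $\bet \in B_\mu$. Set $\pi = v\bet\ov v$. Since every level-changing transition of $\cA$ happens exactly once and only $\kap$ transitions live on levels $\geq 1$, I can uniquely decompose $v = \gam\alp$ with $\abs\alp = \kap$; the suffix $\alp$ labels the lift from level~$0$ to level~$\kap$, while $\gam$ is read on level~$0$. By the definition of the transitions of $\cA$, the $P$-component of a state tracks $q_{01}\cdot(\,\cdot\,)$ and $q_{02}\cdot(\,\cdot\,)$ in parallel, whereas the $(q_1,q_2)$-components are built up in reverse, i.e.\ $q_i^{(k-1)} = q_i^{(k)}\cdot\ov{v_k}$. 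Setting $c_i = q_{0i}\cdot\gam$ and $f_i = e_i\cdot\ov\alp$, this yields precisely the two runs in \reffig{fig:urun}; moreover the level-jump rule forces $f_1 \in \cF_1$ or $f_2 \in \cF_2$, and the level-$0$ rule forces the portions $f_i \RAS{\ov\gam} q_i'$ to avoid final states. Hence $\gaba\in L_1$ or $\abag \in L_2$, so $\pi \in \ccH$; and the membership $\bet\in B(d_1,d_2,e_1,e_2)$ is immediate from the runs.

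\textbf{This factorization is minimal.}
To verify the ``moreover'' claim, it suffices to check properties (iii) and (iv) from the proof of \refthm{thm:main}'s first statement for the factorization $\pi = \gabag$ just obtained; properties (i) and (ii) have already been established. For (iii), suppose a prefix $\pi'$ of $\pi$ lies in $L_1$. If $\abs{\pi'} \leq \abs{\gaba}$ it is automatically a prefix of $\gaba$. Otherwise $\pi' = \gaba\,y$ with $y$ a nonempty prefix of $\ov\gam$, and $\pi'\in L_1$ would force $f_1\cdot y \in \cF_1$, contradicting the $\RAS{}$ property of $f_1 \RAS{\ov\gam} q_1'$. Property (iv) is entirely symmetric: any suffix $z\,\abag$ of $\pi$ in $L_2$ with $z$ a nonempty suffix of $\gam$ corresponds to $f_2\cdot\ov z\in\cF_2$, again forbidden by $f_2 \RAS{\ov\gam} q_2'$. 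Thus $v = \gam\alp$ is the unique minimal gamma-alpha-prefix of $\pi$.

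\textbf{Every hairpin completion comes from some $\mu$.}
For the reverse inclusion, let $\pi\in\ccH$ with minimal gamma-alpha-prefix $v = \gam\alp$, write $\pi = \gabag$, and define $q_i' = q_{0i}\cdot\pi$, $q_i'' = q_{0i}\cdot\ov\pi$ (so really $q_2' = q_{02}\cdot\ov\pi$). Then I construct the candidate path in $\cA$ by reversing the correspondence used above, obtaining states $I = (Q_0,q_1',q_2',0)$ and $F = ((d_1,d_2),e_1,e_2,\kap)$. The only nontrivial checks are: (a) that the level-jump is allowed, which follows from property~(i) (either $\gaba\in L_1$ giving $f_1\in\cF_1$, or $\abag\in L_2$ giving $f_2\in\cF_2$ via $\cA_2$ accepting $\ov{L_2}$); and (b) that no intermediate state during the level-$0$ portion is final in $\cA_i$, which is exactly the contrapositive of the minimality properties (iii) and (iv) used in the previous paragraph. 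These checks also certify that both $I\in\cI$ and $F\in\cF$ are genuine bridges, because the relevant $\bridge$-sets are witnessed respectively by $\pi$ and $\bet$. Hence $\pi \in \CP$ with $\mu = (I,F)$.

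\textbf{Disjointness.}
Finally, suppose $\pi \in B_\mu^{R_\mu} \cap B_{\mu'}^{R_{\mu'}}$. The ``moreover'' claim forces the $\gam\alp$-witness on both sides to equal the unique minimal gamma-alpha-prefix of $\pi$, so the two labels in $R_\mu$ and $R_{\mu'}$ coincide, and the corresponding $\bet$-factors then agree by length. The initial states are both $(Q_0,\, q_{01}\cdot\pi,\, q_{02}\cdot\ov\pi,\, 0)$ and the final states are both $((d_1,d_2),e_1,e_2,\kap)$ with the same $d_i,e_i$, so $\mu = \mu'$. The single step I expect to require the most care is the bookkeeping in the first paragraph—aligning the forward $P$-component and the backward $(q_1,q_2)$-components of $\cA$-states with the two runs in \reffig{fig:urun}—because from there everything else reduces to the $\RAS{}$ semantics already built into the construction.
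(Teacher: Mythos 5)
Your proof is correct and follows essentially the same route as the paper's: both inclusions rest on the correspondence between paths in $\cA$ and the runs of \reffig{fig:urun}, with the ``moreover'' part read off from the level-switching rule and the $\RAS{}$ semantics, and disjointness from the uniqueness of the minimal gamma-alpha-prefix (your write-up is in fact more explicit than the paper's about why the level-$0$ rule yields minimality). The only slip is cosmetic: the four characterizing properties you invoke come from the proof of the (unnumbered) unambiguity theorem in \refsec{sec:unambiguity}, not from \refthm{thm:main}.
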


\begin{proof}
Let $\pi\in\ccH$.
Let $\gam\alp$ be the minimal gamma-alpha-prefix of $\pi$ with $\abs\alp = \kap$
and factorize $\pi=\gabag$.
There are runs in the DFAs
\begin{align*}
	\cA_1: \quad& q_{01} \ras{\gam} c_1 \ras{\alp} d_1 \ras {\bet }{}
		e_1 \ras {\ov \alp}{} f_1 \RAS {\ov \gam }{} q_1', \\
	\cA_2: \quad& q_{02} \ras{\gam} c_2 \ras{\alp} d_2\ras {\ov \bet }{}
		e_2 \ras {\ov \alp}{} f_2 \RAS {\ov \gam}{} q_2'
\end{align*}
where $f_1\in\cF_1$ or $f_2\in\cF_2$ (cf.~\reffig{fig:urun}).
Recall that all states on these paths are determined by $\pi$.

By the definition of the NFA $\cA$,
we find a path $I\ras \gam A\ras\alp F$ where $I = (Q_0,q_1',q_2',0)$,
$A = ((c_1,c_2),f_1,f_2,0)$, and $F = ((d_1,d_2),e_1,e_2,\kap)$.
As $\bet\in\bridge(d_1,d_2,e_1,e_2)$,
there is a unique $\mu = (I,F) \in \cI \times \cF$ with 
$\pi \in \CP$.

Conversely, let $\mu=(I,F)\in\cI\times\cF$,
let $\bet\in \bridge_\mu$, and let $I \ras \gam A \ras \alp F$ with $\abs\alp = \kap$
be a path in $\cA$.
As $F$ is a final state it is on level~$\kap$ and $A = ((c_1,c_2),f_1,f_2,0)$
is the last state on level zero, whence $f_1\in\cF_1$ or $f_2\in\cF_2$.
Therefore, we find runs in the DFAs just like above where 
$I = (Q_0,q_1',q_2',0)$ and $F = ((d_1,d_2),e_1,e_2,\kap)$.
We conclude $\gam\alp$ is the minimal gamma prefix of $\gabag$ and $\gabag\in\ccH$.
\end{proof}

The next Lemma tells us that the paths in the automaton are unambiguous.
The arguments are essentially the same as 
used in \refsec{sec:unambiguity}.
The unambiguity of paths will become crucial later.

\begin{lemma}\label{lem:unam}
	Let $w \in \Sig^*$ be the label of a path in $\cA$ from a bridge 
	$A= (P,p_1,p_2,\ell) $ to $A' = (P',p_1',p_2',\ell')$,
	then the path is unique. This means that $B=B'$ whenever $w = uv$ and 
	\begin{align*}
		A &\ras{u}{} B %(P'',p'',q'',\ell'') 
			\ras{v}{} A', &&
		A \ras{u}{} B' %(P'',p'',q'',\ell'')
			\ras{v}{} A'.
	\end{align*}
\end{lemma}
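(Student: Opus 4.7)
The plan is to verify that every component of any intermediate bridge $B = (P, q_1, q_2, \ell)$ along a path $A \ras{u} B \ras{v} A'$ is determined by the endpoints $A$, $A'$ and the factorization $w = uv$. Writing $A = (P_A, p_1, p_2, \ell_A)$ and $A' = (P', p_1', p_2', \ell_{A'})$, I will force each of the four coordinates in turn and then conclude $B = B'$ for any second intermediate bridge $B'$ witnessing the same path label.

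First, the $\cQ_{12}$-component $P$ is forced by propagating $A$ \emph{forward} along $u$: every arc of $\cA$ sends $P \to P \cdot a$, and $\cQ_{12}$ inherits a deterministic action of $\Sig^*$ from the DFAs $\cA_1$ and $\cA_2$, so $P = P_A \cdot u$. Second, the components $(q_1, q_2)$ are forced by propagating $A'$ \emph{backward} along $v$. Inspection of the transition rules shows that every arc has the shape
\begin{equation*}
  (P, q_1 \cdot \ov{a}, q_2 \cdot \ov{a}, \ell) \ras{a} (P \cdot a, q_1, q_2, \ell'),
\end{equation*}
so the source's $\cQ_i$-components are obtained from the target's by applying $\ov{a}$. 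Iterating this identity backwards from $A'$ along $v$ yields $q_j = p_j' \cdot \ov{v}$ for $j = 1,2$. Third, once the pair $(q_1^{(i)}, q_2^{(i)})$ is known at every position of the path, the level sequence is also forced: from any level $\ell \ge 1$ the unique applicable rule increments the level by $1$, and from level $0$ the two available rules are mutually exclusive because one applies exactly when $q_1 \cdot \ov{a} \notin \cF_1$ \emph{and} $q_2 \cdot \ov{a} \notin \cF_2$ while the other applies precisely when this condition fails. Starting from $\ell_A$ and using the already-forced states at each step, the entire level profile is reconstructed uniquely.

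Combining the three steps, both candidate midpoints $B$ and $B'$ must have identical coordinates, and hence $B = B'$. The only delicate point is reconciling the genuine forward non-determinism of $\cA$---distinct target components $q_j, q_j'$ can satisfy $q_j \cdot \ov{a} = q_j' \cdot \ov{a}$, producing several continuations out of a single source---with the claim of path uniqueness. This is resolved precisely by the backward-deterministic argument in the second step, which uses knowledge of the terminal endpoint $A'$ to pin down the $\cQ_i$-components at every intermediate position. Beyond this observation, the proof is pure bookkeeping against the three transition rules stated in \refsec{sec:automaton}.
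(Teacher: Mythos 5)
Your proof is correct and follows essentially the same route as the paper's: the $\cQ_{12}$-component is forced forward by $P_A\cdot u$, the $\cQ_1\times\cQ_2$-components are forced backward by $p_j'\cdot\ov v$, and the level is then determined step by step from $\ell_A$ together with the (already forced) finality of the intermediate $\cQ_i$-components. The paper merely compresses this to the single-letter case $u=a$ and leaves the induction implicit, whereas you spell out the propagation along the whole path; the content is identical.
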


\begin{proof}
It is enough to consider $u = a \in \Sig$. 
Let $B = (Q,q_1,q_2,m)$. Then we have 
$Q= P \cdot a$ and $q_i= p_i' \cdot \ov v$.
If $\ell= 0$ and  $p_i\notin \cF_i$ for $i = 1,2$, then 
$m= 0$, too; otherwise $m= \ell +1$. Thus, $B$ is determined by 
$A$, $A'$, and $u$, $v$. We conclude $B=B'$.
\end{proof}

For the decision algorithm we need to construct the automaton \cA
within the time bounds.
The automaton can be constructed in time $\Oh(n_1^2n_2^2)$;
that is $\Oh(n^2)$ in case $L_1=\es$ or $L_2=\es$ and $\Oh(n^4)$ otherwise.
Recall that the number of states and the number of transitions are in
$\Oh(N) \sse \Oh(n_1^2n_2^2)$ and
that the tuple
$(a, P,q_1,q_2,\ell) \in \Sigma \times \cQ_{12} \times \cQ_{1} \times \cQ_{2}\times [\kap-1]$
defines one transition
$(P,q_1\cdot\ov a,q_2\cdot \ov a,\ell) \ras{a}{} (P\cdot a ,q_1,q_2,\ell')$
(where $\ell'$ is determined by $\ell$, $q_1\cdot\ov a$, and $q_2\cdot \ov a$)
if and only if $(P\cdot a ,q_1,q_2,\ell')$ is a bridge.
Thus, it suffices to show that we can compute the set of bridges in time $\Oh(n_1^2n_2^2)$.

Furthermore, at this stage, we compute the set of all {\em $a$-bridges} for $a\in \Sig$,
where a basic bridge $\bridge(d_1,d_2,e_1,e_2)$ is called an $a$-bridge if
$\bridge(d_1,d_2,e_1,e_2)\cap a\Sig^* \neq \es$.
Later, we need the precomputed sets containing all $a$-bridges.

\begin{lemma}\label{lem:bridges}
	The set containing all basic bridges and 
	the sets containing all $a$-bridges for $a\in\Sig$, respectively,
	can be computed in time $\Oh(n_1^2n_2^2)$.
\end{lemma}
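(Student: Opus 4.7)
The plan is to cast non-emptiness of the bridge languages as reachability in an implication graph $H$ on vertex set $\cQ_1\times\cQ_2\times\cQ_1\times\cQ_2$. Using the production rules for the non-terminal $\bridge(\cdot)$ from \refsec{sec:unambiguity}, I would install, for every $a\in\Sig$, a forward arc
\[ (p_1,p_2,q_1,q_2\cdot\ov a) \;\longrightarrow\; (p_1\cdot a,p_2,q_1,q_2). \]
The rule $\bridge(p,q,p,q)\production 1$ asserts that every diagonal vertex $(p,q,p,q)$ witnesses a non-empty bridge language. Combined with the forward arcs, this shows that the basic bridges are exactly the vertices of $H$ that are reachable, along reversed arcs, from the diagonal. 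A backward breadth-first search seeded with the $n_1n_2$ diagonal vertices therefore identifies precisely the basic bridges; correctness is a short induction on the length of the underlying derivation in $G$.

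For the time bound I would argue as follows. The vertex set of $H$ has cardinality $n_1^2n_2^2$. For a fixed $a\in\Sig$, the preimage sets $\set{q_2\in\cQ_2}{q_2\cdot\ov a=q_2'}$ indexed by $q_2'\in\cQ_2$ partition $\cQ_2$, so their sizes sum to $n_2$. Consequently the total number of arcs of $H$ is at most $\abs\Sig\cdot n_1^2n_2^2\in\Oh(n_1^2n_2^2)$, and the backward BFS runs in time linear in the size of $H$.

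For the $a$-bridges I would add one extra pass using $H$. A quadruple $(p_1,p_2,q_1,q_2)$ is an $a$-bridge precisely when it has an $a$-labelled outgoing arc in $H$ whose target is a basic bridge, i.e.\ when there exists $q_2'$ with $q_2'\cdot\ov a=q_2$ such that $(p_1\cdot a,p_2,q_1,q_2')$ is basic. Enumerating these amounts to scanning every $a$-arc whose head is a basic bridge, which fits within the same $\Oh(n_1^2n_2^2)$ budget.

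The only delicate step---and the place where the stated bound is at risk---is the amortization of the arc count. A naive bound of the form ``for each target, for each letter, for each predecessor $p_1$ with $p_1\cdot a=p_1'$'' gives $\Oh(n_1^3n_2^2)$; the saving comes from the fact that the preimages of a fixed letter under a DFA action partition the state set, so summing over target states yields $n_1$ on the $\cA_1$-side (and $n_2$ on the $\cA_2$-side), bringing the total back down to $\Oh(n_1^2n_2^2)$.
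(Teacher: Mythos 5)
Your proof is correct, and it rests on the same underlying idea as the paper's: bridge non-emptiness is reachability in a graph that traverses $\cA_1$ forward and $\cA_2$ backward. The paper organizes this more economically. It builds a single product transition system on $\cQ_1\times\cQ_2$ with arcs $(p_1,q_2)\ras{a}(q_1,p_2)$ whenever $p_1\cdot a=q_1$ and $p_2\cdot\ov a=q_2$ (so only $\Oh(n_1n_2)$ states and arcs), observes that $(p_1,p_2,q_1,q_2)$ is a basic bridge iff $(q_1,p_2)$ is reachable from $(p_1,q_2)$, and then runs $n_1n_2\abs\Sig$ forward depth-first searches, one per triple $(p_1,q_2,a)$, each restricted to paths beginning with an $a$-transition so that basic bridges and $a$-bridges fall out of the same search. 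Your implication graph $H$ is in fact the disjoint union of $n_1n_2$ copies of that product system, one copy per value of the pair $(q_1,p_2)$, since these two components never change along your arcs; your single backward multi-source BFS from the diagonal is then the transposed version of the paper's batch of forward searches, and the two computations coincide up to direction and bookkeeping. Your version pays for an explicitly materialized graph with $\Theta(\abs\Sig\, n_1^2n_2^2)$ arcs (still within the bound, and your amortized arc count via preimage partitions is right, though it is perhaps simpler to note that each arc is uniquely indexed by a tuple $(p_1,p_2,q_1,q_2,a)$) and needs a separate post-processing pass for the $a$-bridges, which you handle correctly. What it buys is a transparent correspondence with productivity of the non-terminals $\bridge(\cdot)$ in the grammar of \refsec{sec:unambiguity}.
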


\begin{proof}
	Consider a transition system with state set $\cQ_1\times \cQ_2$ and
	transitions $(p_1,q_2) \ras a (q_1,p_2)$ for all $p_1\cdot a = q_1$ and
	$p_2 \cdot \ov a = q_2$
	(we use forward edges in $\cA_1$ and backwards edges in $\cA_2$).
	Note that there are $n_1n_2\cdot \abs\Sig$ transitions and 
	the transition system can be constructed in $\Oh(n_1n_2)$.
	
	There is a path $(p_1,q_2) \ras w (q_1,p_2)$ with $w\in\Sig^*$
	if and only if $p_1 \cdot w = q_1$ and $p_2\cdot \ov w = q_2$.
	Thus, a quadruple $(p_1,p_2,q_1,q_2)\in\cQ_1\times\cQ_2\times\cQ_1\times\cQ_2$ is
	a basic bridge if and only if a path from $(p_1,q_2)$ to $(q_1,p_2)$ exists
	and it is an $a$-bridge if and only if such a path exists that starts with
	an $a$-transition.
	
	In order to compute the sets of bridges, we run
	a depth-first reachability search for all triples $(p_1,q_2,a)\in \cQ_1\times\cQ_2\times\Sig$;
	for each pair $(q_1,p_2)\in\cQ_1\times\cQ_2$ that is reachable from $(p_1,q_2)$ by a path
	starting with an $a$-transition,
	we mark $(p_1,p_2,q_1,q_2)$ as basic bridge and as $a$-bridge.
	Since every depth-first search can be performed in $\Oh(n_1n_2)$,
	the whole computation can be done in $\Oh(n_1^2n_2^2)$.
\end{proof}

\begin{remark}\label{rem:reachable}
	For convenience, we will henceforth assume that all states in the automaton
	are reachable from an initial state and lead to some final state.
	Such an reachability test can easily be performed in $\Oh(n_1^2n_2^2)$;
	thus, this will not breach the time bounds.
\end{remark}

%%%%%%%%%%%%%%%%%%%%%%%%%%%%%%%%%%%%%%%%%%%%%%%%%%%%%%%%%%%%%%%%%%%%%%%%%%%%%%%%
\subsection{Test~0}

We consider the case when $L_1$ or $L_2$ is finite.
In this case we are able provide a simple necessary and sufficient condition for
the regularity of \ccH.

\begin{proposition}\label{prop:ciaa:finite} \ \\ \vspace{-\baselineskip}
\begin{enumerate}[\quad i.)]
	\item If the language $L(\cA)$ is finite, then \ccH
		is regular.
	\item If the language $L(\cA)$ is infinite and either $L_1$ is finite or $L_2$ is finite,
		then \ccH is not regular.
\end{enumerate}
\end{proposition}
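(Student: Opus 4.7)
For part (i) the claim is immediate from \reflem{lem:str}. Writing $\ccH = \bigcup_{\mu \in M} \CP$, each $\mgp_\mu$ is a subset of the finite $L(\cA)$ and therefore itself finite. Hence $\CP = \bigcup_{v \in \mgp_\mu} v \bridge_\mu \ov{v}$ is a finite union of regular languages (each $\bridge_\mu$ is regular, and sandwiching by fixed words preserves regularity), so $\ccH$ is a finite union of regular languages and thus regular.

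For part (ii) I would assume without loss of generality that $L_1$ is finite; the case $L_2$ finite is symmetric via the involution $\BAR$ that interchanges the two sides of the definition of \ccH. The plan is a Myhill-Nerode argument seeded by the cycle structure of $\cA$. Since $L(\cA)$ is infinite, some accepting path of $\cA$ revisits a state, producing words $\gam_0, \gam_1, \gam_2$ with $\gam_1 \neq \eps$ and a pair $\mu \in M$ such that $v_i := \gam_0 \gam_1^i \gam_2 \in \mgp_\mu$ for every $i \geq 0$. Fix any $\bet \in \bridge_\mu$, which is nonempty since $\mu$ involves a basic bridge. By \reflem{lem:str}, $\pi_i := v_i \bet \ov{v_i} \in \ccH$ and $v_i$ is the unique minimum gamma-alpha-prefix of $\pi_i$. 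If \ccH were regular, then by Myhill-Nerode there would exist indices $i < j$, both arbitrarily large, with $v_i$ and $v_j$ in the same right-residual class; together with $\pi_i \in \ccH$ this would force $v_j \bet \ov{v_i} \in \ccH$.

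The core step is to show that $v_j \bet \ov{v_i} \notin \ccH$ for suitable $i < j$. Any hypothetical factorization $v_j \bet \ov{v_i} = \gam' \alp' \bet' \ov{\alp'} \ov{\gam'}$ imposes the complementary-palindrome condition on the outer pair; for $\ell := |\gam'\alp'| \leq |v_i|$, the length-$\ell$ suffix of $\ov{v_i}$ equals $\ov{v_i[1..\ell]}$, and comparing with $v_j[1..\ell]$ gives $v_j[1..\ell] = v_i[1..\ell]$. Choosing the cycle so that the continuations of $v_i$ and $v_j$ past their common prefix $\gam_0 \gam_1^i$ diverge (which can be ensured by iterating $\gam_1$ and, if necessary, adjusting $\gam_2$ so its first character differs from the first character of $\gam_1$), one obtains $\ell \leq |\gam_0| + i|\gam_1|$. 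Consequently the $L_1$-witness $\gam'\alp'\bet'\ov{\alp'}$ has length at least $|v_j \bet \ov{v_i}| - \ell + \kap$, which is $\Omega(j)$ and for $j$ large exceeds the maximum length of any word in the finite $L_1$, eliminating the $L_1$ case.

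The main obstacle is the remaining $L_2$ case, since the witness $\alp'\bet'\ov{\alp'}\ov{\gam'}$ is a long suffix of $v_j \bet \ov{v_i}$ that might a priori belong to the infinite $L_2$. To exclude it I would combine the uniqueness of the minimum gamma-alpha-prefix (\reflem{lem:str}) with the unambiguity of paths in $\cA$ (\reflem{lem:unam}): if $v_j\bet\ov{v_i}$ were in \ccH, its minimum gamma-alpha-prefix $\gam^*\alp^*$ would be a prefix of $v_j$ of length at most $|\gam_0| + i|\gam_1|$ that is accepted by $\cA$. But cycles of $\cA$ can only live on level~$0$ because levels $1,\dots,\kap$ strictly increase by construction, and by unambiguity the path labeled by any such prefix from the initial state relevant to $v_j\bet\ov{v_i}$ is forced to trace the cycle and cannot reach a level-$\kap$ state before position $|v_j|$. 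Hence $\gam^*\alp^*$ cannot exist, contradicting $v_j\bet\ov{v_i} \in \ccH$ and finishing the argument.
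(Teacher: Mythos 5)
Part~(i) of your proposal is correct and coincides with the paper's argument. Part~(ii) follows a different route (a Myhill--Nerode exchange producing $v_j\bet\ov{v_i}$, versus the paper's one-sided pumping with an idempotent power of the loop label, which yields $uv^{jk}w\bet\ov w\ov v^{j}\ov u$), and while the $L_1$ half of your case analysis is fine --- indeed easier than you make it, since \emph{any} hairpin-forming prefix covers more than half of $v_j\bet\ov{v_i}$ and hence has length $\Omega(j)$, contradicting finiteness of $L_1$ without any prefix-matching --- the $L_2$ half contains a genuine gap, and that is exactly where the real work of the proof lies.

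What must be excluded is a factorization $v_j\bet\ov{v_i}=\gam'\alp'\bet'\ov{\alp'}\ov{\gam'}$ whose \emph{suffix} $\alp'\bet'\ov{\alp'}\ov{\gam'}$ lies in the possibly infinite $L_2$; equivalently, that no long prefix of $\ov{v_j\bet\ov{v_i}}=v_i\ov{\bet}\ov{v_j}$ is accepted by $\cA_2$. Your argument never touches $\cA_2$ on this word. Instead you argue inside $\cA$, and both steps fail. First, the claim that the minimal gamma-alpha-prefix must be a common prefix of $v_i$ and $v_j$ of length at most $\abs{\gam_0}+i\abs{\gam_1}$ is unjustified: $\gam'$ may be longer than $v_i$ (so $\ov{\gam'}$ reaches into $\bet$ and the prefix-of-$v_i$ constraint disappears), and the required divergence of $v_i$ and $v_j$ after $\gam_0\gam_1^i$ cannot always be arranged --- in the paper's own example (\reffig{fig:automaton}) the loop label and the exit label both begin with $a$, so $\gam_2$ is a prefix of $\gam_1^{\omega}$. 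Second, \reflem{lem:unam} only asserts uniqueness of a path between two \emph{fixed} bridges; the initial bridge relevant to $v_j\bet\ov{v_i}$ is $(Q_0,\,q_{01}\cdot v_j\bet\ov{v_i},\,q_{02}\cdot\ov{v_j\bet\ov{v_i}},\,0)$, in general a different state from the one relevant to $\pi_j$, so the path reading a prefix of $v_j$ from it is not "forced to trace the cycle" and may well reach level $\kap$ early. The missing ingredient is the one the paper supplies: the segments of the runs in \reffig{fig:urun} after $f_1,f_2$ avoid final states, and by pumping with a power of $v$ that is idempotent in the syntactic monoid (or, in your setup, by additionally choosing $j\equiv i$ modulo the relevant periods of $\cA_1,\cA_2$ with $i$ at least $n_2$) one shows that the run of $\cA_2$ on the lengthened arm still avoids $\cF_2$, so the longest $L_2$-suffix is too short to form the hairpin. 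Without this periodicity argument your proof of part~(ii) does not go through.
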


\begin{proof}
Statement {\it i.)}\ follows directly by \reflem{lem:str}.

For {\it ii.)} let $L(\cA)$ be infinite.
There is a path
\begin{equation*}
	I \ras u A \ras v A\ras w F
\end{equation*}
in \cA where $I$ is an initial bridge, $F=((d_1,d_2),e_1,e_2)$ is a final bridge, and
$A\ras v A$ is a non-trivial loop
(by non-trivial we mean $v\neq \e$).
Note that $A$ is on level $0$ and hence $\abs w \geq k$.
Let $\alp$ be the suffix of $w$ of length $\kap$ and
let $\bet$ be a word from the language $\bridge(d_1,d_2,e_1,e_2)$.
We have $\pi_i = uv^iw\bet\ov w\ov v^i\ov u\in\ccH$ for all $i\geq0$.
Moreover, if a prefix of $\pi_i$ belongs to $L_1$, it is a prefix of $uv^iw\bet\ov \alp$
and if a suffix of $\pi_i$ belongs to $L_2$, it is a suffix of $\alp\bet\ov w\ov v^i\ov u$.

By contradiction, assume \ccH is regular and $L_1$ is finite.
Let $j\geq 1$ such that the power $v^j$ is idempotent in the syntactic monoid of \ccH,
hence
\begin{equation*}
	\pi = uv^{jk} w\bet \ov w \ov v^s\ov u \in\ccH
\end{equation*}
for $t\ge 1$.
We consider $t$ to be huge.
More precisely, we assume that $\pi$ is at least twice as long as the longest word in $L_1$ and
that $v^{jk}$ covers more than half of $\pi$.
The longest suffix of $\pi$ that belongs to $L_2$ is still a suffix of $\alp\bet \ov w \ov v^j\ov u$
which is far too short to build the hairpin;
hence a prefix from $L_1$ has to build the hairpin and it has to cover more than half of $\pi$
--- a contradiction.
By a symmetric argument $L_2$ is infinite, too.
\end{proof}

We check this property. Although, strictly speaking,
Test~0 is redundant for the general case.

\newcommand{\Tzero}{%
	\begin{test}{0}
		Decide whether or not $L(\cA)$ is finite. 
		If it is finite, then stop with the output that $\ccH$ is regular.
		If it is not finite but $L_1$ or $L_2$ is finite,
		then stop with the output that \ccH is not regular.
	\end{test}
}
\Tzero

In case when $L_1 = \es$ or $L_2 = \es$ the time complexity follows by the next lemma
as in these cases we can consider $n_1=1$ or $n_2=1$, respectively.

\begin{lemma}\label{lem:test:zero}
	Test~0 can be performed in time $\Oh(n_1^2n_2^2)$.
\end{lemma}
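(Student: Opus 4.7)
The plan is to reduce Test~0 to two standard graph problems (finiteness of a regular language given by an NFA/DFA) and to use the preprocessing already supplied by \reflem{lem:bridges} and \refrem{rem:reachable}.

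First I would build the NFA $\cA$ explicitly. By \reflem{lem:bridges} the set of basic bridges (and $a$-bridges) can be computed in time $\Oh(n_1^2n_2^2)$, and the analysis preceding that lemma shows the state set of $\cA$ has size $\Oh(N)\sse\Oh(n_1^2n_2^2)$ and that each tuple $(a,P,q_1,q_2,\ell)$ in $\Sigma\times \cQ_{12}\times \cQ_1\times \cQ_2\times[\kap-1]$ determines at most one transition, so the transition set is also of size $\Oh(N)$. Hence $\cA$ itself is constructed in time $\Oh(n_1^2n_2^2)$. Using \refrem{rem:reachable}, a forward and a backward reachability search (linear in $\cA$) then restricts $\cA$ to its trim part in the same time bound.

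Next, to decide whether $L(\cA)$ is finite, I would test whether the trimmed graph of $\cA$ contains a directed cycle; equivalently, compute the strongly connected components and check that every nontrivial one is empty. This is linear in the size of $\cA$, hence $\Oh(n_1^2n_2^2)$. In parallel, finiteness of $L_1$ and $L_2$ is decided by the same cycle test applied to the trimmed versions of $\cA_1$ and $\cA_2$, taking time $\Oh(n_1\abs\Sig)$ and $\Oh(n_2\abs\Sig)$ respectively, well within the bound. Combining the three outcomes according to \refprop{prop:ciaa:finite} yields the answer: if $L(\cA)$ is finite, output that \ccH is regular; if $L(\cA)$ is infinite but $L_1$ or $L_2$ is finite, output that \ccH is not regular; otherwise Test~0 returns no verdict and the algorithm proceeds to the later tests.

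There is no real obstacle here: the only nontrivial cost is the construction of $\cA$, and that has already been charged to $\Oh(n_1^2n_2^2)$ in \reflem{lem:bridges}. All other steps are classical linear-time graph algorithms on graphs of size $\Oh(n_1^2n_2^2)$ or smaller, so the total running time of Test~0 fits inside $\Oh(n_1^2n_2^2)$ as claimed. In particular, when $L_1=\es$ (take $n_1=1$) or $L_2=\es$ (take $n_2=1$), this specializes to $\Oh(n^2)$, matching the bound announced in part~\emph{i.)} of \refthm{thm:main}.
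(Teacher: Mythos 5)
Your proof is correct and follows essentially the same route as the paper: construct and trim $\cA$ within the $\Oh(n_1^2n_2^2)$ bound already established, then decide finiteness of $L(\cA)$ by a linear-time strongly-connected-component (cycle) test on its transition graph. Your additional remark that finiteness of $L_1$ and $L_2$ must also be checked (by the same cycle test on $\cA_1$ and $\cA_2$) is a small completeness point the paper leaves implicit, and it stays well within the bound.
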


\begin{proof}
	Recall that every state in \cA is reachable and co-reachable,
	by \refrem{rem:reachable}.
	The language $L(\cA)$ is infinite if and only if \cA contains
	at least one non-trivial loop $A\ras v A$
	By the well-known algorithm of Tarjan \cite{Tarjan72} we
	can decompose a directed graph (as well as a finite automaton)
	into its strongly connected components
	in linear time with respect to the number of transitions.
	As the automaton \cA has $\Oh(n_1^2n_2^2)$ transitions,
	this yields the time complexity.
\end{proof}

%%%%%%%%%%%%%%%%%%%%%%%%%%%%%%%%%%%%%%%%%%%%%%%%%%%%%%%%%%%%%%%%%%%%%%%%%%%%%%%%
\subsection{Test~1}

By Test~0, we may assume in the following that $\cA$ accepts an infinite
language and that the set $S$
of non-trivial strongly connected components of the automaton $\cA$
has been computed.
Every non-trivial strongly connected component is 
on level $0$ and, moreover, as $\cA$ accepts an infinite
language, there is at least one.
For $s \in S$ let $N_s$ be the number of states in the component $s$.
Note that $\sum_{s\in S} N_s \leq N$.
By putting some linear order on the set of bridges, we assign to each $s\in S$
the least bridge $A_s$ and some shortest, non-empty word $v_s$ such that
$A_s \ras {v_s} A_s$.

The next lemma tells us that for a regular \hpc $\ccH$ every
strongly connected component $s\in S$ is a simple cycle,
and hence, the word $v_s$ is uniquely defined.

\begin{lemma}\label{lem:loop}
	Let the hairpin completion \ccH be regular, $s\in S$ be
	a strongly connected component, and $A_s \ras{w} F$ be
	a path from $A_s$ to a final bridge $F$.
	Then the word $w$ is a prefix of some word in $v_s^+$.

	In addition, the  word $v_s$ is uniquely defined and
	the loop $A_s\ras {v_s} A_s$ visits every other bridge
	$B\in s\sm\oneset A$ exactly once.
	Thus it forms a Hamiltonian cycle of $s$ and $\abs{v_s} = N_s$.
\end{lemma}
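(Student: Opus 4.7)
Assume \ccH is regular; I will derive both statements via a pumping argument on the loop $v_s$ combined with a Myhill--Nerode analysis. By \refrem{rem:reachable} there is an initial bridge $I$ and a path $I \ras{u_0} A_s$. For the given path $A_s \ras{w} F$, set $\mu = (I,F)$ and choose any $\beta_0 \in B_\mu$; this is nonempty since $F$ encodes a basic bridge. By \reflem{lem:str},
\begin{equation*}
\pi_j \;:=\; u_0\, v_s^j\, w\, \beta_0\, \ov w\, \ov{v_s}^{\,j}\, \ov{u_0} \;\in\; \ccH \qquad \text{for every } j\ge 0.
\end{equation*}

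To prove the prefix statement, suppose toward contradiction that $w$ is not a prefix of any $v_s^i$. Let $m$ be the number of states of a DFA recognising \ccH. Among the prefixes $u_0 v_s^j w \beta_0 \ov w$ with $0\le j \le m$, two, say indexed by $j \ne k$, are right-congruent modulo \ccH. Appending the suffix $\ov{v_s}^{\,j}\ov{u_0}$, which completes the $j$-indexed prefix to $\pi_j \in \ccH$, yields
\begin{equation*}
\pi_{k,j} \;:=\; u_0\, v_s^k\, w\, \beta_0\, \ov w\, \ov{v_s}^{\,j}\, \ov{u_0} \;\in\; \ccH.
\end{equation*}
By \reflem{lem:str}, $\pi_{k,j}$ has a unique factorisation $v\beta\ov v$ with $v$ labelling a path from an initial to a final bridge of $\cA$. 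The main technical step is to rule out any such $v$. For every candidate length $|v|$, the prefix $v$ of $\pi_{k,j}$ (which begins like $u_0 v_s^k w \cdots$) must equal the reverse complement of the symmetric suffix $\ov v$ (which ends like $\cdots \ov{v_s}^{\,j}\ov{u_0}$); a letter-by-letter comparison aligns letters of $w$ with letters of the cycle word $v_s$, and at the position where $w$ first deviates from the pattern $v_s^+$ (which exists by assumption) the required letter equality fails. This rules out every candidate $v$, contradicting $\pi_{k,j} \in \ccH$. Hence $w$ is a prefix of some $v_s^i$, as claimed.

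The remaining assertions follow quickly. Given any loop $z$ at $A_s$, the concatenation $zw$ labels a path $A_s \ras{zw} F$, so $zw$ is a prefix of some $v_s^i$. By \reflem{lem:unam} the unique path from $A_s$ labelled $z$ coincides with the length-$|z|$ prefix of $v_s^i$, and since $z$ returns to $A_s$ this length must be a multiple of $\abs{v_s}$, forcing $z \in v_s^*$. Therefore every loop at $A_s$ is a power of $v_s$, so the loop $A_s \ras{v_s} A_s$ is simple; strong connectivity of $s$ then forces it to visit every state of $s$ exactly once, yielding a Hamiltonian cycle with $\abs{v_s}=N_s$, and in particular $v_s$ is unique.

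The hard part is the case analysis in the middle paragraph: exhaustively eliminating the factorisations $v\beta\ov v$ of $\pi_{k,j}$, carefully matching the deviation position of $w$ against the periodicity of $v_s$ across the whole range $\kap \le |v| \le |\pi_{k,j}|/2$, including the degenerate cases in which $|v|$ is small enough that $v$ lies entirely inside $u_0$ or large enough that $\ov v$ crosses the centre $\beta_0$.
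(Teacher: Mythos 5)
The overall skeleton (pump the loop so the two sides of the hairpin become unbalanced, then argue the unbalanced word cannot lie in \ccH unless $w$ aligns with $v_s^+$) is the same as the paper's, but the step you yourself flag as ``the main technical step'' --- ruling out every factorization $\pi_{k,j}=v\bet\ov v$ --- is not carried out, and the sketch you give for it cannot work as stated. A letter-by-letter comparison of the prefix $v$ against the complemented suffix only produces a mismatch at the deviation point of $w$ when $v$ is long enough to reach that point \emph{and} when the two periodic sides are aligned so that letters of $w$ actually face letters of $v_s$. For short $v$ (say, $v$ contained in $u_0v_s^{\min(j,k)}$) the letter matching succeeds trivially, and what has to be excluded is that $v\bet'\ov v$ is a legitimate hairpin completion, i.e.\ that the long prefix $v\bet'$ lies in $L_1$ or the long suffix $\ov{v\bet'}\,$-side lies in $L_2$ --- a question of membership, not of letter equality. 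The paper disposes of exactly these cases using the structural facts recorded in \reflem{lem:str}: for $\pi\in\CP$ every prefix of $\pi$ in $L_1$ is a prefix of $\gam\alp\bet\ov\alp$ and every suffix in $L_2$ is a suffix of $\alp\bet\ov\alp\ov\gam$. You never invoke these properties, and without them the ``exhaustive elimination'' cannot be completed. A second problem: Myhill--Nerode hands you \emph{some} $j\neq k\le m$ with no control over which exponent is larger or by how much, whereas forcing letters of $w$ against letters of $v_s$ requires the prefix side to dominate substantially.

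The paper's actual route avoids the case analysis altogether: it pumps asymmetrically to $uv^{jk}w\bet\ov w\ov v^{j}\ov u$ with $k>\abs{w\bet}$, so that by the prefix/suffix bounds of \reflem{lem:str} the longest suffix in $L_2$ is a suffix of $\alp\bet\ov w\ov v^{j}\ov u$ and hence far shorter than half the word --- killing every suffix-based factorization by a single length count --- and the only remaining option, a prefix of $uv^{jk}w\bet\ov\alp$ in $L_1$, forces $\ov w\ov v^{j}\ov u$ to be complementary to a prefix of $uv^{jk}$, which immediately yields $w\le v^{j(k-1)}$. To salvage your version you would need to (i) iterate the right-congruence so the left exponent exceeds the right one by more than $\abs{w\bet_0}$ factors of $v_s$, and (ii) import the bounds of \reflem{lem:str} for the short-$v$ factorizations; at that point you have reconstructed the paper's proof. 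Your closing argument for the Hamiltonian-cycle part is also slightly circular: you assume the path reading $v_s^i$ visits $A_s$ only at multiples of $\abs{v_s}$ before having shown the loop is simple. The paper instead takes equal-length powers $v_s^i=x^j$ of the two loops, concludes they are equal as words via the first statement, and applies \reflem{lem:unam} to these two paths with identical endpoints.
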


\begin{proof}
Let  $A = A_s$ and $v = v_s$.
Consider a path labeled by $w$ from $A$ to a final bridge $F=((d_1,d_2),e_1,e_2,k)$.
As all bridges are reachable, we find a word $u$ and an initial bridge $I$
such that
\begin{equation*}
	I\ras u A \ras v A \ras w F.
\end{equation*}
As the automaton $\cA$ accepts $uv^i w$ for all $i \geq 0$,
we see that $uv^i w \bet \ov w \ov v^i \ov u \in \ccH$ for all $i \geq 0$
and all $\bet \in  B(d_1,d_2,e_1,e_2)$.
As $\ccH$ is regular, there are $j \geq 1$ and $k > \abs{w \bet}$
such that $uv^{jk} w \bet \ov w \ov v^{j} \ov u \in \ccH$, by pumping.
Due to the definition of $\cA$, the longest suffix of $\pi$ belonging to
$L_2$ is a suffix of $\alp\bet\ov w \ov v^j\ov u$, where $\alp$ is the suffix
of $w$ of length $\kap$, and this suffix is too short to create the hairpin
completion.
This means that the \hpc is forced to use a prefix in $L_1$ and
that has to be a prefix of $uv^{jk}w\bet\ov \alp$.
Therefore, the suffix $\ov w \ov v^j\ov u$ is complementary to a prefix of $uv^{jk}$,
whence $w$ must be a prefix of  $v^{j(k-1)}$ (see~\reffig{fig:longhairpin})
and, thus, concludes the first statement of our lemma.

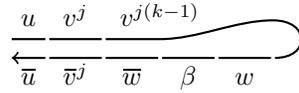
\begin{figure}[ht]
	\centering
	\begin{tikzpicture}[text height=1.5ex,text depth=.25ex,
			every path/.style={thick,shorten <=.75pt,shorten >=.75pt}]
		\draw  (-.5,0) -- node [above,sloped] {$u$} (0,0);
		\draw  (0,0) -- node [above,sloped] {$v^{j}$} (.75,0);
		\draw  (.75,0) -- node [above,sloped,at end] {$v^{j(k-1)}$} (1.5,0)
			.. controls +(right:.5) and +(left:.5) .. (3,.25)
			.. controls +(right:.5) and +(right:.5) .. (3,-.25);
		\draw  (3,-.25) -- node [below,sloped] {$w$} (2.25,-.25);
		\draw  (2.25,-.25) -- node [below,sloped] {$\bet$} (1.5,-.25);
		\draw  (1.5,-.25) -- node [below,sloped] {$\ov w$} (.75,-.25);
		\draw  (.75,-.25) -- node [below,sloped] {$\ov v^j$} (0,-.25);
		\draw [->] (0,-.25) -- node [below,sloped] {$\ov u$} (-.5,-.25);
%		\draw [-latex,gray,dashed] (0,.7) -- (2.75,.7);
%		\draw [latex-,gray,dashed] (0,-.9) -- (2.75,-.9);
	\end{tikzpicture}
	\caption{The hairpin of $\pi$
		(Read the upper part from left to right and the lower part from right to left).}
	\label{fig:longhairpin}
\end{figure}

Recall that $A\ras v A$ is a shortest, non-trivial loop around $A$;
hence $\abs{v}\leq N_s$ is obvious.
Let $B\in s\sm\oneset{A}$ and $x=x_1x_2$ such that $A\ras {x_1} B\ras {x_2} A$.
For some $i,j \geq 1$ we have $\abs{v^i} = \abs{x^j}$.
Thus, $v^i = x^j$ by the first statement.
By the unique-path-property  stated in \reflem{lem:unam}
we obtain that the loop $A\ras{x^j} A$ just uses the shortest loop $A \ras v A$
several times.
In particular, $B$ is on the shortest loop around $A$.
This yields $\abs v \geq N_s$ and hence the second statement.
\end{proof}

\begin{example}
	In the example given in \reffig{fig:automaton} the state $(Q_0,\dead_1,\dead_2,0)$
	forms the only strongly connected component and the corresponding path is labeled with $a$.
	As one can easily observe, the automaton $\cA$ satisfies the properties stated in
	\reflem{lem:loop}
	(even though the \hpc is not regular).
\end{example}

The next test tries to falsify the property of \reflem{lem:loop}. 
Hence it gives a sufficient condition that $\ccH$ is not regular. 

\newcommand{\Tone}{%
	\begin{test}{1}
		Decide whether there is $s\in S$
		and a path $A_s \ras w F$ such that $w$ is not a prefix of
		a word in $v_s^+$.
		If there is such a path, then stop with the output that \ccH is not regular.
	\end{test}
}
\Tone

\begin{lemma}
	Test~1 can be performed in time $\Oh(N^2)$.
\end{lemma}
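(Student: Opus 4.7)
The plan is to reduce Test~1, for each strongly connected component $s \in S$, to a reachability question in a product automaton, and then to observe that the per-component cost sums to $\Oh(N^2)$.

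As preprocessing I would compute, for every $s \in S$, the designated bridge $A_s$ (a trivial lookup once the SCC decomposition is in hand) together with a shortest non-empty loop word $v_s = a_0 a_1 \cdots a_{m_s - 1}$ satisfying $A_s \ras{v_s} A_s$. Finding $v_s$ is a shortest-cycle problem inside the sub-NFA induced by $s$, solved by a BFS from $A_s$ combined with a scan of the arcs into $A_s$ and standard parent-pointer reconstruction. Since $\cA$ has $\Oh(N)$ arcs in total and the SCCs partition its state set, this preprocessing runs in $\Oh(N)$ overall; note that $m_s \leq N_s$, hence $\sum_{s \in S} m_s \leq N$.

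For each $s$ I would then run a BFS from $(A_s, 0)$ in the product automaton whose states are pairs $(q, p)$ with $q$ a state of $\cA$ and $p \in \{0, \ldots, m_s - 1\}$, and whose only allowed transitions are $(q, p) \ras{a_p} (q', (p+1) \bmod m_s)$ whenever $q \ras{a_p} q'$ in $\cA$. At every visited pair $(q, p)$ I inspect the outgoing arcs of $q$ in $\cA$: if some arc $q \ras{b} q''$ carries a letter $b \neq a_p$, I declare a witness for $s$. Correctness is immediate: such an arc, prepended by the BFS-path $A_s \ras{u} q$ (whose label matches $v_s^+$ up to position $|u|$), yields a path $A_s \ras{u} q \ras{b} q''$ whose label is not a prefix of any $v_s^i$; extending $q''$ to some final bridge $F$ via $q'' \RAS{w'} F$, which exists by \refrem{rem:reachable}, produces the violating path $A_s \ras{ubw'} F$. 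Conversely, if no bad arc is ever detected, then every letter readable along any path from $A_s$ in $\cA$ is forced to agree with the cyclic pattern $a_0 a_1 \cdots$, so every path to a final bridge spells a prefix of some $v_s^i$.

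For the running time, each product pair $(q, p)$ is visited at most once, and its outgoing-arc scan costs $\Oh(d(q))$, where $d(q)$ denotes the out-degree of $q$ in $\cA$. Summed over all reached pairs for a single $s$ this is bounded by $m_s \cdot \sum_{q} d(q) \in \Oh(m_s \cdot N)$, and summing further over $s \in S$ using $\sum_s m_s \leq N$ yields $\Oh(N^2)$. The delicate point I expect to be the main obstacle is the bookkeeping that supports this amortised summation: a visited-flag per product pair $(q, p)$ is essential so that each pair triggers its arc-scan at most once, and the precomputed adjacency lists of $\cA$ must be reused inside the product BFS rather than recomputed from the bridge definitions; otherwise the accounting can silently drift above $\Oh(N^2)$.
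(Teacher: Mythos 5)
Your proposal is correct and follows essentially the same route as the paper: the product pairs $(q,p)$ you explore by BFS are exactly the paper's marking of each reachable bridge with the positions $i$ modulo the loop length at which it can be reached, and in both cases the per-component cost $\Oh(N\cdot N_s)$ sums to $\Oh(N^2)$. The only cosmetic difference is that the paper first rejects when $\abs{v_s}\neq N_s$ (using \reflem{lem:loop}) so that marks range over $\oneset{0,\dots,N_s-1}$, whereas you work modulo $\abs{v_s}$ directly; by \reflem{lem:unam} the two checks accept and reject the same instances.
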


\begin{proof}
For $s\in S$, let $A = A_s$ and compute a shortest non-empty word $v$
such that $A \ras v A$.
If $\abs v \neq N_s$, stop with the output that \ccH is not regular.
Otherwise, assign to each bridge that is reachable from $A$ a subset of marks
from $\oneset{0,\ldots,N_s -1}$.
A mark $i$ is assigned to a bridge $B$ if $B$ is reachable from $A$ with
a word from $v^*v[1,i]$.
Test~1 yields that \ccH is not regular if and only if there is a bridge that
is marked by $i$ and that has an outgoing $a$-transition where $a\neq v[i+1]$.
The marking algorithm can be performed by a depth-first search
that runs in time $\Oh(N\cdot N_s)$.
Summing over all strongly connected components we deduce
a time complexity in
$\Oh\left(\sum_{s\in S}N\cdot N_s\right)\sse \Oh(N^2)$.
\end{proof}

%%%%%%%%%%%%%%%%%%%%%%%%%%%%%%%%%%%%%%%%%%%%%%%%%%%%%%%%%%%%%%%%%%%%%%%%%%%%%%%%
\subsection{Test 2 and 3}\label{sec:ciaa:twothree}

Henceforth, we assume that Test~1 was successful
(\ie Test~1 did not yield that \ccH is not regular).
We fix a strongly connected component $s\in S$ of $\cA$. 
We let $A= A_s= ((p_1,p_2), q_1,q_2,0)$, we let $v= v_s$,
and we assume $A\ras v A$ forms an Hamiltonian cycle in $s$.
By $u$ we denote some word leading from an initial bridge
$((q_{01},q_{02}), q_1',q_2',0)$ to $A$. 
(For the following test we do not need to know $u$ we just need to know it exists.)
The main idea  is to investigate runs through the DFAs $\cA_1$ and $\cA_2$ where
$k,\ell \geq n$ according to \reffig{fig:otto}.

\begin{figure}[ht]
	\begin{alignat*}{3}
		&L_1: &\quad&q_{01} \ras{u} p_1 \ras{v^k} p_1 \ras{xy} c_1 \ras{z} d_1
			\ras{\ov x} e_1 \RAS{\ov v^{n_1}} &&q_1 \RAS{\ov v^*} q_1 \RAS{\ov u} q_1' \\
		&\ov{L_2}: &&q_{02} \ras{u} p_2 \ras{v^\ell} p_2 \ras{x} c_2 \ras{\ov z} d_2
			\ras{\ov y\ov x} e_2 \RAS{\ov v^{n_2}} &&q_2 \RAS{\ov v^*} q_2 \RAS{\ov u} q_2'
	\end{alignat*}
	\caption{Runs through $\cA_1$ and $\cA_2$ based on the loop
	$A\xrightarrow{\mspace{5mu}v\mspace{5mu}} A$.}
	% work-around since \oversets do not work in captions
	% $A\ras{v}A$}
	\label{fig:otto}
\end{figure}

We investigate  the case when $uv^k xyz \ov x \ov v^\ell \ov u \in \ccH$ for all $k \geq \ell$
and where (by symmetry) this property is due to the longest prefix belonging to $L_1$.

The following lemma is rather technical.
However, the notations are chosen to fit exactly to \reffig{fig:otto}.

\begin{lemma}\label{lem:egil}
	Let $x, y, z \in \Sig^*$ be words and $(d_1,d_2) \in \cQ_1 \times \cQ_2$ with
	the following properties:
	\begin{enumerate}[\quad 1.)]
		\item $\kap \leq \abs x < \abs v + \kap$ and $x$ is a prefix of some word in $v^+$.
		\item $0 \leq \abs y < \abs v$  and $xy$ is the longest common
			prefix of $xyz$ and some word in $v^+$.
		\item $z\in B(c_1,c_2, d_1,d_2)$, where
			$c_1 = p_1 \cdot xy$ and $c_2 = p_2 \cdot x$.
		\item $q_1 = d_1 \cdot  \ov {x} \ov {v}^{n_1}$ and during the computation of
			$d_1 \cdot  \ov {x} \ov {v}^{n_1}$ we see after 
			exactly  $\kap$ steps  a final state in $\cF_1$ and then never again.
		\item $q_2 = d_2 \cdot \ov {y} \ov {x} \ov {v}^{n_2}$ and, let $e_2 = d_2 \cdot \ov {y}\ov{x}$,
			during the computation of $e_2 \cdot \ov{v}^{n_2}$ we do not see a final state in $\cF_2$.
	\end{enumerate} 
	If $\ccH$ is regular, then  there exists a factorization
	$xyz\ov x \ov v = \mu \del\bet \ov \del \ov \mu$
	where $\abs\del = \kap$ and $p_2 \cdot\mu\del\ov\bet\ov\del \in \cF_2$
	(which implies $\del\bet \ov \del \ov \mu \ov v^*\ov u \sse L_2$). 
\end{lemma}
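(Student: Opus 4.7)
I would assume $\ccH$ is regular, apply the pumping lemma to $\ccH$ to produce a word $\pi_{k,\ell}\in\ccH$ with $\ell>k$, rule out every $L_1$-prefix hairpin for such a word by a length count using Conditions~1, 2, and~4, conclude that its hairpin must come from an $L_2$-suffix, and extract the required factorization of $xyz\ov x\ov v$ from that $L_2$-hairpin. Concretely: since by hypothesis $\pi_{k,k}=uv^kxyz\ov x\ov v^k\ov u\in\ccH$ for every $k\geq 0$, regularity of $\ccH$ yields a period $j\geq 1$ and a threshold $k_0$ such that $\pi_{k,k+mj}\in\ccH$ for all $k\geq k_0$ and $m\geq 0$; for $m\geq 1$ the right $\ov v$-tail is strictly longer than the left $v$-prefix.

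\textbf{Ruling out $L_1$.} For $m$ large I would show that $\pi_{k,k+mj}$ admits no $L_1$-prefix hairpin. The standing assumption that the $L_1$-hairpin of $\pi_{k,\ell}$ (for $k\geq\ell$) uses the longest $L_1$-prefix, together with Condition~4 and pigeon-hole cycling of $\cA_1$ on $\ov v^*$ out of $q_1$, pins the only candidate $L_1$-prefix of $\pi_{k,k+mj}$ of length at least $|uv^kxyz|$ at exactly $|uv^kxyz|+\kap$. For a putative hairpin with $|\alp|=\kap$ and $\gam\alp\bet\ov\alp$ equal to that prefix, the identity $|\pi_{k,k+mj}|=2|\gam|+2\kap+|\bet|$ forces $|\bet|=-mj|v|+2|x|+|y|+|z|$, which is negative once $m$ is large. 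Any shorter candidate $L_1$-prefix is ruled out by the palindromic-matching bound $|\gam|+|\alp|\leq|uv^k|+|xy|$ (which holds because, by Conditions~1 and~2, the longest common prefix of $xyz\ov x\ldots$ with $v^+$ is $xy$): for $L\leq|uv^kxyz|$ we have $|\gam|=|\pi_{k,k+mj}|-L$, which grows linearly in $mj$, so for $m$ large $|\gam|+|\alp|$ strictly exceeds $|uv^k|+|xy|$, violating the bound.

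\textbf{Extracting the $L_2$-factorization.} Hence $\pi_{k,k+mj}=\gam'\del\bet'\ov\del\ov{\gam'}$ with $|\del|=\kap$ and $\del\bet'\ov\del\ov{\gam'}\in L_2$. Palindromic matching together with the same LCP property forces $\gam'=uv^k\mu'$ for some prefix $\mu'$ of $xy$. Since $p_2\cdot v=p_2$ in $\cA_2$, reading $\gam'$ from $q_{02}$ lands at $p_2\cdot\mu'$, and the $\cA_2$-acceptance of $\ov{\del\bet'\ov\del\ov{\gam'}}=\gam'\del\ov{\bet'}\ov\del$ rephrases as $p_2\cdot\mu'\del\ov{\bet'}\ov\del\in\cF_2$. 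To finish I would collapse the pumped $\ov v^{mj-1}$-excess sitting inside $\bet'$ down to a single $\ov v$: Condition~5 combined with a pigeon-hole on the $n_2$-state $\cA_2$ guarantees that the corresponding $\ov v$-cycle in $\cA_2$ avoids $\cF_2$, so the cycle can be cut out without losing the $\cF_2$-acceptance; the symmetric cut of the matching $v$-cycle on the complementary side of the hairpin preserves the palindrome, yielding $\mu\del\bet\ov\del\ov\mu=xyz\ov x\ov v$ together with $p_2\cdot\mu\del\ov\bet\ov\del\in\cF_2$, as required. The hardest step is this symmetric cycle-collapse: one must correctly isolate the $\ov v$-cycle inside $\cA_2$, verify its $\cF_2$-freeness via Condition~5, and cut matching copies of $\ov v$ and $v$ from both sides of the hairpin so that both the palindromic structure and the $\cA_2$-acceptance survive the operation.
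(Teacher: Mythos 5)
Your overall strategy is the same as the paper's: use the hypotheses to place $uv^kxyz\ov x\ov v^{\ell}\ov u$ in $\ccH$, pump so that the $\ov v$-tail exceeds the $v$-head, rule out an $L_1$-prefix hairpin (Conditions~1, 2, 4), deduce an $L_2$-suffix hairpin whose reach is limited by Condition~5, and read off the factorization. The divergence is in the pumping target, and that is where your argument has a genuine problem. The paper pumps to \emph{exactly one} extra $\ov v$: since $uv^kxyz\ov x\ov v^{k+1-j}\ov u\in\ccH$ already holds (because $k\ge k+1-j$), multiplying the tail by an idempotent power $\ov v^{j}$ lands precisely on $uv^kxyz\ov x\ov v^{k+1}\ov u$. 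Condition~5 then forces $\abs{\gam'}\ge\abs{uv^k}$ for the $L_2$-hairpin $\gam'\del\bet\ov\del\ov{\gam'}$, so $\gam'=uv^k\mu$ and $xyz\ov x\ov v=\mu\del\bet\ov\del\ov\mu$ falls out immediately --- no collapse step is needed at all.

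Your version, with $mj$ extra copies, does require reducing $xyz\ov x\ov v^{mj}$ to $xyz\ov x\ov v$, and the mechanism you describe for this is not the right one. The excess $\ov v^{mj-1}$ sits entirely inside the unmatched loop $\bet'$ of the $L_2$-hairpin (given $\gam'=uv^k\mu'$ with $\mu'\del$ a prefix of $xy$, which follows from Condition~2), so there is no ``matching $v$-cycle on the complementary side of the hairpin'' to cut: deleting a factor of $\bet'$ leaves the frame $\mu\del\cdots\ov\del\ov\mu$ untouched by itself. Moreover, preserving $\cF_2$-acceptance under the deletion does not come from a pigeon-hole search for an $\cF_2$-avoiding $\ov v$-cycle inside the run on $\ov{\bet'}$; in the $\cA_2$-run on $\ov{\del\bet'\ov\del\ov{\gam'}}\,$ the excess appears as a block $v^{mj-1}$ read immediately after $uv^{k+1}$, where the run sits at $p_2$ and $p_2\cdot v=p_2$ by the very definition of the strongly connected component --- that single identity is the whole collapse. (Condition~5 plays a different role: it bounds the longest $L_2$-suffix so that $\gam'\supseteq uv^k$ in the first place.) Your length computation ruling out $L_1$ also has an arithmetic slip ($\abs\bet=\abs y+\abs z-mj\abs v$, without the $2\abs x$ term), though the conclusion for large $m$ survives; note, however, that needing $m$ large is another artifact of over-pumping --- with one extra $\ov v$ the paper excludes the $L_1$-prefix purely from the longest-common-prefix conditions. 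As written, your final step would not compile into a proof; replacing it by the single-$\ov v$ pumping trick, or by the $p_2\cdot v=p_2$ observation above, repairs it.
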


\begin{proof}
The conditions say that 
$ uv^k xyz  \ov x \ov v^{\ell} \ov u \in \ccH$ for all $k \geq \ell \geq n$.
Moreover, by condition~4, the \hpc can be achieved with a prefix in $L_1$, 
and the longest  prefix of $uv^k xyz  \ov x \ov v^{\ell} \ov u$ belonging to 
$L_1$ is the prefix $uv^k xyz  \ov \alp$ where $\ov\alp$ is the prefix of
$\ov x$ of length $\kap$. 

If $\ccH$ is regular, then we have $ uv^{k} xyz  \ov x \ov v^{k+1} \ov u \in \ccH$, too,
as soon as $k$ is large enough, by a simple pumping argument.
For this  \hpc we must use a suffix belonging to $L_2$.
For $z = 1$, this follows from $\abs y <\abs v$.
For $z\neq 1$ we use $\abs y <\abs v$ and, in addition,
that $xya$ with $a =z[1]$ is not a prefix of $vx$ by condition~2.

By~5 the longest suffix of  $ uv^{k} xyz  \ov x \ov v^{k+1} \ov u$ belonging to 
$L_2$ is a suffix of $xyz  \ov x \ov v^{k+1} \ov u$. 
Thus, we can write 
\begin{equation*}
	uv^{k} xyz  \ov x \ov v^{k+1} \ov u = uv^{k} xyz  \ov x \ov v \ov v^{k} \ov u
	=  uv^{k}  \mu \del\bet \ov \del \ov \mu \ov v^{k} \ov u
\end{equation*}
where  $\del\bet \ov \del \ov \mu \ov {v}^{k} \ov u \in L_2$ and $\abs\del = \kap$.
We obtain $xyz\ov x\ov v = \mu\del\bet\ov\del\ov\mu$.
As $p_2 = q_{02} \cdot u$ and $p_2 = p_2 \cdot v$,
we conclude $p_2 \cdot \mu\del\ov\bet\ov\del\in \cF_2$ as desired.
(Recall that our second DFA $\cA_2$ accepts $\ov {L_2}$.)
\end{proof}

\begin{example}
	Let us take a look at \reffig{fig:automaton} again.
	Let $A = (Q_0,\dead_1,\dead_2,0)$, $v = a$ and $u =\e$.
	If we choose $x = a$, $y=1$, $z=\ov b$, and $(d_1,d_2) = (p_1,p_2)$ we can
	see that conditions~1 to~5 of \reflem{lem:egil} are satisfied but there
	is no factorization $a\ov b\ov a\ov a = \mu\del\bet\ov\del\ov\mu$
	with $\abs\del = \kap = 1$ such that
	$q_{02}\cdot \mu\del\ov\bet\ov\del \notin\cF_2$.
	Hence, the \hpc is not regular.
\end{example}

We perform Test~2 and~3 which, again, try to falsify the property given
by \reflem{lem:egil} for a regular hairpin completion.
The tests distinguish whether the word $z$ is empty or non-empty. 

\newcommand{\Ttwo}{%
	\begin{test}{2}
		Decide the existence of words $x, y \in \Sig^*$ and states
		$(d_1,d_2) \in \cQ_1 \times \cQ_2$ satisfying
	\begin{enumerate}[\quad 1.)]
		\item $k \leq \abs x < \abs v + \kap$ and $x$ is a prefix of some word in $v^+$,
		\item $0 \leq \abs y < \abs v$  and $xy$ is a prefix of some word in $v^+$,
		\item $d_1 = p_1 \cdot xy$ and $d_2 = p_2 \cdot x$,
		\item $q_1 = d_1 \cdot  \ov {x} \ov {v}^{n_1}$ and during the computation of
			$d_1 \cdot  \ov {x} \ov {v}^{n_1}$ we see after 
			exactly  $\kap$ steps  a final state in $\cF_1$ and then never again, and
		\item $q_2 = d_2 \cdot \ov {y} \ov {x} \ov {v}^{n_2}$ and, let $e_2 = d_2 \cdot \ov {y}\ov{x}$,
			during the computation of $e_2 \cdot \ov{v}^{n_2}$ we do not see a final state in $\cF_2$
	\end{enumerate} 
		but where for all factorizations $xy\ov x \ov v = \mu \del\bet \ov \del \ov \mu$ 
		with $\abs\del = \kap$ we have  
		$p_2\cdot \mu\del\ov\bet\ov\del\notin\cF_2$.
		If we find such a situation, then stop with the output that $\ccH$ is not regular. 
	\end{test}
}

\begin{test}{2}
	Decide the existence of words $x, y \in \Sig^*$ and states
	$(d_1,d_2) \in \cQ_1 \times \cQ_2$ satisfying
	conditions~1 to~5 of \reflem{lem:egil} with $z=1$,
	but where for all factorizations $xy\ov x \ov v = \mu \del\bet \ov \del \ov \mu$ 
	with $\abs\del = \kap$ we have  
	$p_2\cdot \mu\del\ov\bet\ov\del\notin\cF_2$.
	If we find such a situation, then stop with the output that $\ccH$ is not regular. 
\end{test}
  
\newcommand{\Tthree}{%
	\begin{test}{3}
		Decide the existence of words $x, y, z \in \Sig^*$ 
		with $z \neq \e$ and states
		$(d_1,d_2) \in \cQ_1 \times \cQ_2$ satisfying
	\begin{enumerate}[\quad 1.)]
		\item $k \leq \abs x < \abs v + \kap$ and $x$ is a prefix of some word in $v^+$,
		\item $0 \leq \abs y < \abs v$  and $xy$ is the longest common prefix
			of $xyz$ and some word in $v^+$,
		\item $z\in B(c_1,c_2, d_1,d_2)$, where
			$c_1 = p_1 \cdot xy$ and $c_2 = p_2 \cdot x$,
		\item $q_1 = d_1 \cdot  \ov {x} \ov {v}^{n_1}$ and during the computation of
			$d_1 \cdot  \ov {x} \ov {v}^{n_1}$ we see after 
			exactly  $\kap$ steps  a final state in $\cF_1$ and then never again, and
		\item $q_2 = d_2 \cdot \ov {y} \ov {x} \ov {v}^{n_2}$ and, let $e_2 = d_2 \cdot \ov {y}\ov{x}$,
			during the computation of $e_2 \cdot \ov{v}^{n_2}$ we do not see a final state in $\cF_2$
	\end{enumerate} 
		but where for all factorizations $xyz\ov x \ov v = \mu \del\bet \ov \del \ov \mu$
		with $\abs\del = \kap$  we have
		$p_2\cdot \mu\del\ov\bet\ov\del\notin\cF_2$.
		If we find such a situation, then stop with the output that $\ccH$ is not regular. 
	\end{test}
}

\begin{test}{3}
	Decide the existence of words $x, y, z \in \Sig^*$ 
	with $z \neq \e$ and states
	$(d_1,d_2) \in \cQ_1 \times \cQ_2$ satisfying
	conditions~1 to~5 of \reflem{lem:egil},
	but where for all factorizations $xyz\ov x \ov v = \mu \del\bet \ov \del \ov \mu$ 
	with $\abs\del = \kap$ we have
	$p_2\cdot \mu\del\ov\bet\ov\del\notin\cF_2$.
	If we find such a situation, then stop with the output that $\ccH$ is not regular. 
\end{test}

Before we analyze the time complexity of Test~2 an Test~3
we will prove that if languages $L_1$ and $L_2$ pass
the tests we described so far, then the hairpin completion \ccH
is regular.
Thus, the properties given by
\reflem{lem:loop} and~\reflem{lem:egil} together are sufficient for
the regularity of \ccH.
The time complexity analysis of Test~2 and Test~3 can be found in
\refsec{sec:complexity}.

\begin{lemma}\label{lem:finaltests}
	Suppose no outcome of Tests~1, Test~2, and Test~3 is that \ccH is not regular.
	Then the \hpc $\ccH$ is regular. 
\end{lemma}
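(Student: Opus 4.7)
The plan is to show that if Tests~0--3 all pass without declaring $\ccH$ non-regular, then $\ccH$ can be written as a finite union of regular sets. If Test~0 already stopped with output ``regular'', we are done; otherwise Test~0 silently succeeded, meaning $L(\cA)$ is infinite and both $L_1$ and $L_2$ are infinite. In the remaining main case I would start from the disjoint decomposition $\ccH = \bigcup_{\mu \in M} \CP$ of \reflem{lem:str}, where $M = \cI \times \cF$ is finite, and handle each $\mu = (I,F)$ separately.

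Fix $\mu \in M$. If $\mgp_\mu$ is finite, then $\CP = \bigcup_{v \in \mgp_\mu} v \bridge_\mu \ov v$ is a finite union of regular languages, hence regular. Otherwise every sufficiently long element of $\mgp_\mu$ traverses a non-trivial strongly connected component $s \in S$, and because Test~1 succeeded, \reflem{lem:loop} tells us that $s$ is a Hamiltonian cycle labelled by $v = v_s$ and that every sub-path from $A_s$ to a final state is a prefix of some $v^i$. Hence $\mgp_\mu$ is a finite union of ``simple-loop'' languages of the form $u\,v^*\,w$, and the loop contribution to $\CP$ has the shape $\set{u v^k w \bet \ov w\,\ov v^k \ov u}{k \geq 0,\ \bet \in \bridge_\mu}$; the two exponents are locked together and such a set is generally not regular.

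The central step is to show that Tests~2 and~3 force a decoupling of the two exponents above a suitable threshold $K$, namely that
\begin{equation*}
\ti L_\mu = \set{u v^k w \bet \ov w\,\ov v^{\ell} \ov u}{k,\ell \geq K,\ \bet \in \bridge_\mu}
\end{equation*}
is contained in $\ccH$. To prove containment, I would fix $\pi \in \ti L_\mu$ and, by symmetry, assume $k \geq \ell$. The excess block $v^{k-\ell}w$ is rewritten as an $xyz$-factor satisfying conditions~1--3 of \reflem{lem:egil}, while conditions~4--5 are read off the fixed paths through $\cA_1$ and $\cA_2$ at our disposal. Because Tests~2 and~3 did not report a counterexample, there exists a factorization $xyz\ov x \ov v = \mu \del \bet' \ov\del \ov\mu$ with $p_2 \cdot \mu \del \ov{\bet'} \ov\del \in \cF_2$, which exhibits a genuine $L_2$-suffix of $\pi$ of the correct length to realise $\pi$ as a hairpin completion via the $L_2$-side; so $\pi \in \ccH$. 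Once $\ti L_\mu \subseteq \ccH$ is established, the residue $\CP \setminus \ti L_\mu$ consists of words with $k < K$ or $\ell < K$, hence $\bridge_\mu$-conjugates by finitely many bounded words, which is again a finite union of regular languages. Taking the union over all $\mu \in M$ produces a regular expression for $\ccH$.

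The main obstacle is the decoupling step itself. The witness supplied by Tests~2 and~3 is tailored to a single pumping step (in the spirit of \reflem{lem:egil}, where essentially $\ell = k+1$), whereas we need validity for all pairs $k,\ell \geq K$ simultaneously. The key observation to bridge this gap is periodicity of $\cA_2$ under repeated application of $v$: the sequence $p_2,\, p_2 \cdot v,\, p_2 \cdot v^2,\ldots$ is ultimately periodic with period dividing $n_2$, so once $K$ is taken proportional to $n_2$, inserting further copies of $v$ on either side preserves both the $\cA_2$-state that needs to lie in $\cF_2$ and the structural hypotheses of \reflem{lem:egil}. This periodicity argument is what lets a single Test~2/3 witness cover the entire range $k,\ell \geq K$ and closes the proof.
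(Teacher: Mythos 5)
Your high-level strategy coincides with the paper's: decompose $\ccH$ via \reflem{lem:str}, use Test~1 (through \reflem{lem:loop}) to put the unbounded parts of $\mgp_\mu$ into the form $uv^*w$, show that above a threshold the two $v$-exponents in $uv^kxyz\ov x\ov v^\ell\ov u$ decouple so that the set becomes a regular ``rectangle'', and absorb the bounded residue into a finite union of regular sets. The gap is in the decoupling step, which is where all the work lies. With the orientation of Tests~2 and~3 (they are formulated for the case where the hairpin is achieved by a prefix in $L_1$), the words with $k\ge\ell$ are the \emph{easy} half: the prefix $uv^kxyz\ov\alp$ lies in $L_1$ for every $k$ because $p_1\cdot v=p_1$, and its right completions already yield every $\ell\le k$. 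The half that needs Tests~2 and~3 is $\ell>k$, where an $L_2$-suffix must be manufactured; a left completion $\gam\sig$ with $\sig\in L_2$ matches a suffix of $\sig$ of length exactly $\abs\gam$, and with the $L_2$-words $\del\bet'\ov\del\ov\mu\ov v^j\ov u$ produced by the Test-supplied factorization this forces $\ell\ge k+1$, never $k\ge\ell$. So ``by symmetry, assume $k\ge\ell$'' followed by an appeal to an $L_2$-suffix attacks the wrong half of the rectangle; the genuine symmetry in the paper is between the two one-sided completions ($L_1$-prefix versus $L_2$-suffix), not between $k\ge\ell$ and $\ell\ge k$ for a fixed orientation.

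Two further concrete problems. First, condition~5 of \reflem{lem:egil} cannot simply be ``read off the fixed paths'': it is a case split. If a final state of $\cF_2$ \emph{is} met during the computation $e_2\cdot\ov v^{n_2}$, the hypotheses of Tests~2 and~3 fail and their success tells you nothing; in that branch one must instead use that final state directly to exhibit an $L_2$-suffix covering all $\ell>k\ge n$. Your proposal has no argument for this branch. Second, the closing ``periodicity of $p_2\cdot v^i$'' observation solves a non-problem: $(p_1,p_2)$ lies on a loop of the product automaton, so $p_2\cdot v=p_2$ holds exactly and the witness is already independent of $k$ and $\ell$. What actually requires proof is that the single factorization $xyz\ov x\ov v=\mu\del\bet'\ov\del\ov\mu$ certifies $uv^kxyz\ov x\ov v^\ell\ov u\in\ccH$ for \emph{every} $\ell>k$, i.e.\ that $\ov\mu\ov v^k\ov u$ occurs as a suffix of $\ov\mu\ov v^{\ell-1}\ov u$ immediately preceded by $\ov\del$; this is a word-combinatorial fact resting on $\mu\del$ being compatible with the period $v$ (the paper notes $\mu\del$ must be a prefix of $xy$, hence of a word in $v^+$, when $z\neq\e$). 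That step, not automaton periodicity, is what lets one witness cover the whole range, and it is missing from your argument.
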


\begin{proof}
Let  $\pi  \in \ccH$. Write 
$\pi =  \gam \alp  \bet \ov \alp \ov \gam$
such that $\gam\alp$ is the minimal gamma-alpha-prefix of $\pi$ and
$\abs\alp = \kap$. 
Therefore, either $\gam \alp \bet \ov \alp \in L_1$ or 
$\alp   \bet \ov \alp \ov \gam \in L_2$;
we assume $\gam \alp \bet \ov \alp \in L_1$, by symmetry.
In addition, we may assume that $\abs \gam> n^4$
(cf.~\refprop{prop:ciaa:finite} and Test~0).
We can factorize $\gam = uvw$ with $\abs {uv} \leq n^4 $ and $\abs v \geq 1$
such that there are runs as in \reffig{fig:karl}
where $f_1\in\cF_1$.

\begin{figure}[ht]
	\begin{alignat*}{3}
		&L_1: &\quad&q_{01} \ras{u}{} p_1 \ras {v}{} p_1 \ras{w\alp\beta\ov\alp}{}&&
			f_1 \RAS {\ov w}{} q_1 \RAS {\ov v}{} q_1 \RAS {\ov u }{} q_1' \\
		&\ov{L_2}: &&q_{02} \ras{u}{} p_2 \ras {v}{} p_2 \ras{w\alp\ov\beta\ov\alp}{}&&
			f_2 \RAS {\ov w}{} q_2 \RAS {\ov v}{} q_2 \RAS {\ov u }{} q_2'
	\end{alignat*}
	\caption{Runs through $\cA_1$ and $\cA_2$ for the word $\pi$.}
\label{fig:karl}
\end{figure}

We infer from Test~1 that $w \alp$ is a prefix of some word in $v^+$. 
Hence, we can write $w \alp \bet = v^i xyz$ with $i\geq 0$
such that  $v^i xy$ is the maximal common prefix of $w\alp \bet $ and some word in $v^+$,
$w \alp \in v^*x$ with $\kap \leq \abs x  < \abs v+\kap$, and $\abs{y}< \abs{v}$. 

We see that for some $k \geq \ell \geq 0$ we can write 
\begin{equation*}
	\pi = uv^k xyz  \ov x \ov v^{\ell} \ov u.
\end{equation*}

Moreover, $uv^k xyz  \ov x \ov v^{\ell} \ov u \in \ccH$ for all $k \geq \ell \geq 0$.
There are only finitely many choices for $u,v,x,y$ (due to the lengths bounds) 
and for each of them there is a regular set $R_z$
associated to the finite collection of bridges such that
\begin{equation*}
	\pi \in \set{uv^k xyR_z  \ov x \ov v^{\ell} \ov u}{k \geq \ell \geq 0} \sse \ccH.
\end{equation*}

More precisely, we can choose $R_z = \oneset{\e}$ for $z=\e$ and otherwise we can choose 
\begin{equation*}
	R_z \in \set{B(c_1,c_2,d_1,d_2) \cap a \Sig^*}{
	(c_1,c_2,d_1,d_2) \text{ is a bridge and }a \in \Sig}.
\end{equation*}

Note that the sets $\set{uv^k xyR_z  \ov x \ov v^{\ell} \ov u}{k \geq \ell \geq 0}$ 
are not regular in general. 
If we bound however $\ell$ by $n$, then the finite union 
\begin{equation*}
	\bigcup_{0 \leq \ell\leq n}\set{uv^k xyR_z  \ov x \ov v^{\ell} \ov u}{k \geq \ell}
\end{equation*}
is regular.
Thus, we may assume that $\ell > n$.
Let $e_2 = p_2 \cdot x\ov z\ov y \ov x$. We have $e_2 \cdot \ov v^{n} = q_2$
and if we see a final state during the computation of $e_2\cdot \ov v^{n}$, then
for all $\ell > k\geq n$ and $z\in R_z$ we see that $uv^k xyz  \ov x \ov v^{\ell} \ov u \in \ccH$,
due to a suffix in $L_2$ and
\begin{equation*}
	uv^n v^+ xyR_z  \ov x \ov v^{+} \ov v^{n} \ov u \sse \ccH.
\end{equation*}

Otherwise, Test~2 or Test~3 tells us that for all $z\in R_z$ the word
$xyz\ov x \ov v$ has a factorization $\mu\delta\nu\ov\delta\ov\mu$
such that $\abs\delta = \kap$ and $p_2\cdot \mu\del\ov\nu\ov\del\in\cF_2$.
The paths $q_{02} \cdot u = p_2$ and $p_2\cdot v = p_2$
yield $\delta\nu\ov\delta\ov\mu \ov v^*\ov u \sse L_2$
and, again,
\begin{equation*}
	uv^n v^+ xyR_z  \ov x \ov v^{+} \ov v^{n} \ov u \sse \ccH.
\end{equation*}

Hence, the \hpc $\ccH$ is a finite union of regular languages
and, therefore, regular itself.
\end{proof}

%%%%%%%%%%%%%%%%%%%%%%%%%%%%%%%%%%%%%%%%%%%%%%%%%%%%%%%%%%%%%%%%%%%%%%%%%%%%%%%%
\subsection{Time Complexity of Test~2 and Test 3}\label{sec:complexity}

In this section we provide the final step of the proof of \refthm{thm:main}.
We show that Test~2 can be performed in time $\Oh(N^2)$ and
that Test~3 can be performed in time $\Oh(n_{12}n_1^2n_2^2 n)$.
Thus, in case when $L_1 = \ov{L_2}$ both tests run in $\Oh(n^6)$ and
in general Test~2 runs in $\Oh(n^8)$ and Test~3 runs in $\Oh(n^7)$.

\Ttwo

\begin{lemma}
	Test~2 can be performed in time $\Oh(N^2)$.
\end{lemma}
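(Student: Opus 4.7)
The plan is to enumerate the candidates of Test~2 component-by-component, relying on per-component preprocessing to reduce the per-candidate cost to amortized constant time.

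First, I would fix a strongly connected component $s \in S$ with $v = v_s$ of length $N_s$ and $A_s = ((p_1,p_2),q_1,q_2,0)$. Since $x$ is a prefix of some word in $v^+$ with $\kap \leq |x| < |v|+\kap$, it is determined by $|x|$, and similarly $xy$ is determined by $|xy|$; condition~3 (with $z=\e$) forces $d_1 = p_1\cdot xy$ and $d_2 = p_2\cdot x$. Hence there are exactly $|v|\cdot|v| = N_s^2$ candidates per component. Since different components in $S$ consist of pairwise disjoint bridges on level~$0$,
\begin{equation*}
	\sum_{s\in S} N_s \;\leq\; N \qquad\text{and}\qquad
	\sum_{s\in S} N_s^2 \;\leq\; \Bigl(\max_{s\in S} N_s\Bigr) \sum_{s\in S} N_s \;\leq\; N^2.
\end{equation*}
So it suffices to process each candidate in amortized $\Oh(1)$ time, provided the per-component preprocessing sums to $\Oh(N^2)$ overall.

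Second, I would precompute per component in $\Oh(|v|(n_1+n_2))$ time the values $p_i\cdot v^j\cdot v[1..k]$ for the relevant lengths (giving $\Oh(1)$ access to $d_1, d_2$ from $(|x|,|y|)$) and, for every $q\in\cQ_i$, the orbit $(q\cdot\ov v^k)_{k\leq n_i}$ together with a flag indicating whether $\cF_i$ is visited. Condition~4 then reduces to: simulate $\cA_1$ on $\ov x$ (a word of length $<|v|+\kap$) starting from $d_1$, verify that a state of $\cF_1$ is hit after exactly $\kap$ steps and at no later step of $\ov x$, and finally consult the precomputed orbit of $d_1\cdot\ov x$; condition~5 is analogous. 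Sharing the simulation on $\ov x$ across all $|y|$ with the same $|x|$ yields amortized $\Oh(1)$ per candidate.

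Third, and this is the main obstacle, is the factorization check. For a surviving candidate let $w = xy\ov x\ov v$, so $|w| = \Oh(|v|)$. The factorizations $w = \mu\del\bet\ov\del\ov\mu$ with $|\del|=\kap$ are parametrized by $m = |\mu|$, and $m$ is admissible iff $w[i] = \overline{w[|w|-i+1]}$ for every $i\in[1,m+\kap]$. These admissible $m$ form an initial segment $\{0,\ldots,M\}$, because the constraint for $m$ implies the same constraint for every smaller $m'$. For fixed $|x|$ the value $M$ and the associated DFA states $p_2\cdot\mu\del\ov\bet\ov\del$ for $m=0,\ldots,M$ are computed incrementally in $\Oh(|v|)$ total: as $|y|$ grows, $w$ gains one letter in the middle and the stored data update in $\Oh(1)$; as $m$ grows for fixed $w$, the prefix $\mu\del$ extends by one, the suffix $\ov\del$ shifts by one, and the middle $\ov\bet$ contracts by one on each side, each update being absorbed by precomputed transition tables of $\cA_2$ on $w$ and on $\ov w$. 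Summed, this contributes $\Oh(|v|^2) = \Oh(N_s^2)$ per component and $\Oh(N^2)$ in total. The delicate part is arguing that the DFA state $p_2\cdot\mu\del\ov\bet\ov\del$ really does update in $\Oh(1)$ as $m$ grows; this hinges on the periodic structure of $w$ inherited from $v$, so that $\cA_2$ acting on $w$ and on $\ov w$ can be tabulated once per component and then queried in constant time.
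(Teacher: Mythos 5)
Your reduction of the factorization condition is the same as the paper's: the admissible values $m=\abs\mu$ form an initial segment $\{0,\dots,M\}$ determined by a complement--palindromic prefix/suffix match, and since $\mu\del\ov\bet\ov\del$ is a prefix of $\ov w = vx\ov y\ov x$, the test reduces to comparing $M$ with the position of the last final state on the run $p_2\cdot vx\ov y\ov x$. The gap is in the step you yourself flag as delicate: the claimed $\Oh(1)$ incremental updates do not hold as stated. When $\abs y$ grows by one, the new letter is appended to $y$, hence \emph{prepended} to $\ov y$ in the middle of $\ov w=vx\ov y\ov x$; a DFA state after reading a prefix cannot absorb an insertion in the middle of the word in $\Oh(1)$, so the entire tail of the run (length $\Theta(\abs v)$) changes and your ``stored data update in $\Oh(1)$'' fails. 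Likewise for $M$: since $xy$ and $vx$ are both prefixes of words in $v^+$, one has $M+\kap = \abs{xy}+m'$ where $m'$ is the longest common prefix of the conjugate $v_{\abs{xy}\bmod\abs v}$ (extended periodically) with $\ov x\ov v$; consecutive values of $\abs{xy}$ give \emph{different rotations} of $v$, whose overlaps with $\ov x\ov v$ are not incrementally related, so ``periodicity'' alone does not yield constant-time updates. A per-candidate recomputation costs $\Oh(\abs v)$, giving $\Oh(N_s^3)$ per component, which breaks the $\Oh(N^2)$ budget. (A smaller instance of the same amortization error occurs in your handling of condition~4: $d_1=p_1\cdot xy$ depends on $\abs y$, so the simulation of $\cA_1$ on $\ov x$ cannot simply be ``shared across all $\abs y$ with the same $\abs x$''.)

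What is missing is precisely the device the paper introduces: a precomputed $N_s\times N_s$ table $M(i,j)$ storing, for each rotation $v_i=v[i+1,N_s]v[1,i]$ and each offset $j$, the maximal $k$ with $v_i^2[j,j+k]=\ov v^2[j,j+k]$. This table is built in $\Oh(N_s^2)$ per component by one right-to-left scan per $i$, and then $m'$ (hence $M$) is a single table lookup per candidate. For the last-final-state position $\ell(x,xy)$ on $p_2\cdot vx\ov y\ov x$, the paper computes it during the \emph{backward} search from $q_2$ that enumerates the valid pairs $(x,xy)$ in the first place (backward exploration naturally shares work across all candidates ending in the same suffix $\ov x\ov v^{n_2}$), staying within $\Oh(N\cdot N_s)$ per component. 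Without these two concrete constructions, or equivalents, your argument establishes the right decomposition but not the claimed $\Oh(N^2)$ bound.
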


\begin{proof}
For a strongly connected component $s\in S$ with $A_s = ((p_1,p_2),q_1,q_2)$
and $v_s = v$,
we have to compute all words $x$ and $y$ such that there are runs
\begin{align*}
	&  p_1\ras{xy}{} d_1 \ras{\ov x \ov v^{n_1}}{} q_1,
	&& p_2\ras{x}{} d_2 \ras{\ov y\ov x\ov v^{n_2}}{} q_2
\end{align*}
and the conditions~1 to~5 are satisfied.
In addition, we demand that 
during the computation of $d_2 \cdot \ov{y} \ov {x} \ov{v}^{n_2}$ 
we do not  meet any final state in $\cF_2$ after more than $\kap-1$ steps.
(In case such a final state exists, either condition 5 is breached or
a factorization $x y\ov x \ov v = \mu\delta\beta\ov\delta\ov\mu$ with $\abs\delta = \kap$ and
$p_2\cdot \mu\del\ov\bet\ov\del \in \cF_2$ exists.)
By backwards searches in $\cA_1$
and $\cA_2$ starting at states $q_1$ and $q_2$, respectively,
and searching for paths labelled by suffixes of $\ov v^+$,
we compute all pairs $(x,xy)$ satisfying these conditions
in time $\Oh(N\cdot N_s)$.

At this stage we also compute the position $\ell(x,xy)$ of the last final state
during the run $p_2 \cdot vx\ov y \ov x$
and we let $\ell(x,xy) = 0$ if no such state exists.
Note that $0\leq \ell(x,xy) < N_s + \abs{x} + \kap$.
If a factorization $xy\ov x\ov v = \mu\del\bet\ov\del\ov\mu$ with $\abs\del = \kap$
and $p_2\cdot\mu\del\ov\bet\ov\del\in\cF_2$ exists, then $\abs{xy\ov x\ov v} - \ell(x,xy)$
gives us a lower bound for the length of $\mu$.

Let $m(x,xy)$ be the length of the longest $\mu$ such that a
factorization $xy\ov x \ov v = \mu \del \bet \ov \del \ov \mu$ with $\abs\del = \kap$ exists
(without the condition $p_2\cdot \mu\del\ov\bet\ov\del \in \cF_2$).

There is a factorization $xy\ov x\ov v = \mu \del \bet \ov \del \ov \mu$ with $\abs\del = \kap$
and $p_2\cdot \mu\del\ov\bet\ov\del \in \cF_2$ if and only if
$m(x,xy) \geq \abs {xy\ov x \ov v}-\ell(x,xy)$ and $\ell(x,xy) -\kap \geq \abs {xy\ov x \ov v}/2$.

We need to precompute the values $m(x,xy)$ efficiently, which turns out
to be a little bit tricky.
For $0\leq i< N_s$ we let $v_i = v[i+1,N_s]v[1,i]$
be the conjugate of $v$ starting at the $(i+1)$-st letter.
We wish to match position in $v_i^2$ with positions in $\ov v^2$.
For each $0\leq j< N_s$ we store the maximal $k\leq N_s$ such that
$v_i^2[j,j+k] = \ov v^2[j,j+k]$ in a table entry $M(i,j)$, see \reffig{fig:match}.
For each $i$ one run (from right to left) over the words $v_i^2$ and $\ov v^2$ is enough.
It takes $\Oh(N_s^2)$ time to build the table $M$.
Now, if we know the length $m'$ of the longest common prefix
of $v_{\abs{xy}}$ and $\ov x \ov v$,
then $m(x,xy) = \abs{xy}+m'-\kap$ (yet at most $\abs{xy\ov x\ov v}/2-\kap$).
The length of $m'$ is stored in
$M(\abs{xy\ov x}\bmod N_s,(-\abs{\ov x})\bmod N_s)$,
hence we have access to $m(x,xy)$ in constant time.

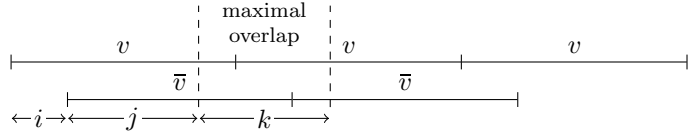
\begin{figure}[ht]
	\centering
	\begin{tikzpicture}
		\draw [|-|] (0,0) -- node [above] {$v$} (3,0);
		\draw [-|] (3,0) -- node [above] {$v$} (6,0);
		\draw [-|] (6,0) -- node [above] {$v$} (9,0);
		
		\draw [|-|] (.75,-.5) -- node [above] {$\ov v$} (3.75,-.5);
		\draw [-|] (3.75,-.5) -- node [above] {$\ov v$} (6.75,-.5);
		
		\draw [<->,shorten <=.5pt,shorten >=.5pt] (0,-.75) --
			node [inner sep=1pt,fill=white] {$i$} (.75,-.75);
		\draw [<->,shorten <=.5pt,shorten >=.5pt] (.75,-.75) --
			node [inner sep=1pt,fill=white] {$j$} (2.5,-.75);
		\draw [<->,shorten <=.5pt,shorten >=.5pt] (4.25,-.75) --
			node [inner sep=1pt,fill=white] {$k$} (2.5,-.75);
			
		\draw [dashed] (2.5,-.6) -- (2.5,.75);
		\draw [dashed] (4.25,-.6) -- (4.25,.75);
		\node at (3.375,.5) [text centered,text width=1.75cm, font=\footnotesize]
			{maximal overlap};
		
	\end{tikzpicture}
	\caption{Matching positions of $v_i^2$ with $\ovc v^2$.}
	\label{fig:match}
\end{figure}

All in all Test~4 can be performed in
$\Oh(\sum_{s\in S}N\cdot N_s)\sse \Oh(N^2)$.
\end{proof}

\Tthree

\begin{lemma}
	Test~3 can be performed in time $\Oh(n_{12}n_1^2n_2^2n)$.
\end{lemma}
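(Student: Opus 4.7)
The plan is to efficiently decide Test~3 by enumerating the parameters of \reflem{lem:egil}, aided by precomputation and a structural collapse that eliminates $z$ from the universally quantified factorization condition.

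For each SCC $s\in S$ with loop $v=v_s$ around $A_s=((p_1,p_2),q_1,q_2,0)$, I would first precompute backwards-reachability information in $\cA_1$ and $\cA_2$ along suffixes of $\ov v^+$ starting from $q_1$ and $q_2$ respectively, exactly as in the proof of Test~2. This supports $\Oh(1)$ checks of conditions 4 and 5 for any candidate $(x,y,d_1,d_2)$ at cost $\Oh(N\cdot N_s)$ per SCC, and the $a$-bridges are already tabulated via \reflem{lem:bridges}.

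The main obstacle is handling the universal factorization condition in the presence of an unbounded $z$. Two structural facts resolve it cleanly. First, condition~2 forces $xya$ to diverge from $v^+$ at position $|xy|+1$, so the hairpin depth of $W=xyz\ov x\ov v$ is exactly $|xy|$, limiting valid $|\mu|$ to $\{0,\ldots,|xy|-\kap\}$ independently of $z$. Second, since both $xy$ and $vx$ are prefixes of $v^+$ with $|xy|<|v|+|x|$, the string $\mu\del$ followed by the leading segment of $\ov\bet$ equals $vx$ for every valid $|\mu|$; combined with $p_2\cdot v=p_2$ and $c_2\cdot\ov z=d_2$ (from $z\in\bridge(c_1,c_2,d_1,d_2)$), this collapses the state $p_2\cdot\mu\del\ov\bet\ov\del$ to $d_2\cdot\ov{xy[|\mu|+1,|xy|]}$, which no longer depends on $z$. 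Hence the universal condition becomes the purely finite requirement $d_2\notin G(|xy|)$, where
\begin{equation*}
G(|xy|):=\set{r\in\cQ_2}{r\cdot\ov{xy[i,|xy|]}\in\cF_2\text{ for some }i\in[1,|xy|-\kap+1]}.
\end{equation*}

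For each $|xy|\in[\kap,2N_s+\kap)$ the set $G(|xy|)$ is computable in $\Oh(n_2\cdot|xy|)$ by simulating $\ov{xy}$ from every state of $\cA_2$ in parallel, totalling $\Oh(n_2\cdot N_s^2)$ per SCC. Equipped with the $G$-tables, a Test~3 witness from $s$ requires $d_2\notin G(|xy|)$ together with the $a$-bridge and conditions 4--5, each decided in $\Oh(1)$ per tuple $(|x|,|y|,d_1,d_2,a)$. The remaining challenge is to organize the enumeration so the sum over SCCs respects $\Oh(n_{12}n_1^2n_2^2n)$: I would fix $|xy|$ first (so $c_1$ and $G(|xy|)$ are shared), iterate $(d_1,d_2)$ only over pairs appearing in an $a$-bridge from some reachable $(c_1,c_2)$, and amortize $|x|$ against $|y|$ across the Hamiltonian loop of $s$. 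Summing $\sum_{s}N_s\cdot n_1n_2\cdot n\le N\cdot n_1n_2\cdot n=n_{12}n_1^2n_2^2n$ completes the bound.
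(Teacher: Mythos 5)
Your structural reduction is correct and is essentially the paper's key observation: because $xya$ (with $a=z[1]$) is not a prefix of $vx$, every factorization $xyz\ov x\ov v=\mu\del\bet\ov\del\ov\mu$ forces $\mu\del$ to be a prefix of $xy$, and the state $p_2\cdot\mu\del\ov\bet\ov\del$ collapses to $d_2\cdot\ov{xy[\abs\mu+1,\abs{xy}]}$, so the universal condition depends only on $d_2$ and $\abs{xy}$ and not on $z$. The paper encodes exactly the same fact by asking whether a final state of $\cF_2$ is seen after $\kap$ or more steps of the computation $d_2\cdot\ov{y'}\ov v^{n_2}$, where $y'<v$ and $\abs{y'}\equiv\abs{xy}\pmod{\abs v}$; your split into the table $G(\abs{xy})$ plus condition~5 is equivalent to that.

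The gap is in the complexity accounting, and it is not cosmetic. First, the precomputation costs you state already exceed the target: $\Oh(N\cdot N_s)$ per component sums to $\Oh(N^2)=\Oh(n_{12}^2n_1^2n_2^2)$, and $\Oh(n_2N_s^2)$ per component reaches $\Oh(n_2N^2)$ when a single component has $N_s=\Theta(N)$; both exceed $\Oh(n_{12}n_1^2n_2^2n)$ as soon as $n_{12}>n$. (The $G$-tables could be repaired by computing $G(m+1)$ from $G(m)$ incrementally in $\Oh(n_2)$ per step, but you do not do this.) Second, and more seriously, your final enumeration ranges over tuples $(\abs x,\abs y,d_1,d_2,a)$, of which there are $\Theta(N_s^2n_1n_2\abs\Sig)$ in a large component; the claimed sum $\sum_s N_s n_1n_2 n$ rests on ``amortize $\abs x$ against $\abs y$'' with no supporting argument, and no such amortization is available, since fixing $\abs{xy}$ still leaves $\Theta(N_s)$ splits and $G(\abs{xy})$ couples the two lengths. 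The paper obtains the bound by a decoupling you never establish: it tabulates the pairs $(c_2,d_1)$ witnessed by some admissible $x$ (table $T_1$ of size $\Oh(n_1\min(N_s,n_2))$) and the triples $(c_1,d_2,a)$ witnessed by some $y'<v$ with $y'a$ not a prefix of $v$ (table $T_2$ of size $\Oh(n_2\min(N_s,n_1))$), proves that these witnesses can be chosen \emph{independently} and recombined into admissible $x,y,z$ because all five conditions factor through the states, and then tests in time $\Oh(\abs{T_1}\cdot\abs{T_2})$ whether some combination is an $a$-bridge; a case split on whether $N_s$ exceeds $n_2$ then yields $\Oh(n_{12}n_1^2n_2^2n)$. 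This independent-recombination step is the missing idea in your proposal.
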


\begin{proof}
For $s\in S$ with $A_s = ((p_1,p_2),q_1,q_2)$ and $v_s = v$,
we create two tables $T_1$ and $T_2$.
The table $T_1$ holds all pairs $(c_2,d_1)\in \cQ_2\times\cQ_1$ such that a word $x$
exists with
\begin{enumerate}[\quad 1.)]
  \item $\kap\leq \abs x < \abs v +\kap$ and $x$ is a prefix of a word in $v^+$,
  \item $p_2 \cdot x = c_2$,
  \item $d_1 \cdot \ov {x} \ov v^{n_1} = q_1$, and during the
    computation of $d_1 \cdot \ov {x} \ov v^{n_1}$ we see a final state after exactly $\kap$ steps and
    then never again.
\end{enumerate}
We call $x$ a witness for $(c_2,d_1)\in T_1$.
The table $T_2$ holds all triples $(c_1,d_2,a)\in\cQ_1\times\cQ_2\times\Sigma$ such that a
proper prefix $y' < v$ exists with 
\begin{enumerate}[\quad 1.)]
  \item $y'a$ is no prefix of $v$,
  \item $p_1\cdot y' = c_1$,
  \item $d_2 \cdot \ov{y'} \ov v^{n_2} = q_2$,
    and during the computation of $d_2\cdot \ov{y'} \ov v^{n_2}$ we do not see a final state
    after $\kap$ or more steps.
\end{enumerate}
We call $y'$ a witness for $(c_1,d_2,a)\in T_2$.
By backwards computing in the second component,
the tables $T_1$ and $T_2$ can be created in $\Oh(N_s n_1)$
and $\Oh(N_s n_2)$, respectively.

We claim that Test~3 yields that \ccH is not regular if and only if
there exists a pair $(c_2,d_1)\in T_1$ and a triple
$(c_1,d_2,a)\in T_2$ such that $(c_1,c_2,d_1,d_2)$ is an $a$-bridge.
Recall that the list of $a$-bridges is precomputed.

First, assume $(c_2,d_1)\in T_1$, $(c_1,d_2,a)\in T_2$, and $(c_1,c_2,d_1,d_2)$ is
indeed an $a$-bridge.
Let $x$ and $y'$ be the the witnesses for $(c_2,d_1)\in T_1$ and $(c_1,d_2,a)\in T_2$, respectively.
Choose $z\in B(c_1,c_2,d_1,d_2)\cap a\Sigma^*$ and $y$ such that $xy$ is a prefix of some word in $v^+$,
$\abs {xy} \equiv \abs{y'}\pmod{\abs v}$, and $\abs y < \abs v$.
Verify that $x,y,z$ and $(d_1,d_2)$ satisfy the conditions 1 to 5 of Test~3.
However, for any factorization
$xyz\ov x\ov v = \mu\delta\beta\ov \delta\ov\mu$ with $\abs\delta = \kap$,
the word $\mu\delta$ has to be a prefix of $xy$, since $xya$ is no prefix of $vx$.
During the computation of 
$d_2\cdot \ov {y'}\ov v^{n_2}$ we did not see a final state after more than $\kap-1$ steps.
The same holds for the computation of $d_2\cdot \ov y\ov x\ov v^{n_2}$ and,
therefore, we have $p_2 \cdot \mu\del\ov\bet\ov\del\notin\cF_2$.
  
Now assume that $x,y,z\in\Sigma^*$, $z\neq\e$, and $(d_1,d_2)\in\cQ_1\times\cQ_2$ exist,
which satisfy the conditions~1 to~5 of Test~3 but where for all factorizations
$xyz\ov x\ov v = \mu\delta\beta\ov \delta\ov\mu$ with $\abs\del = \kap$ we have
$p_2 \cdot \mu\del\ov\bet\ov\del\notin\cF_2$.
Choose $y' < v$ such that $\abs {xy} \equiv \abs{y'}\pmod{\abs v}$.
Let $c_2 = p_2\cdot x$, $c_1 = p_1 \cdot y'$
and $a\in\Sigma$ be the first letter of $z$.
Obviously, $(c_1,c_2,d_1,d_2)$ is an $a$-bridge
and $x$ is a witness for $(c_2,d_1)\in T_1$.
If we saw a final state after more than $\kap-1$ steps during the computation of
$d_2\cdot \ov {y'} \ov v^{n_2}$, then a factorization
$xyz\ov x\ov v = \mu\delta\beta\ov \delta\ov\mu$ where $\abs\del = \kap$ and
$p_2 \cdot \mu\del\ov\bet\ov\del\in\cF_2$ would exist.
Thus, $y'$ is a witness for $(c_1,d_2,a)\in T_2$.

Since the table of $a$-bridges is precomputed (see \reflem{lem:bridges}),
this test can be performed in time $\Oh(\abs{T_1}\cdot\abs{T_2})$.
The set of all first components of $T_1$ (\resp~$T_2$) is bounded by both,
the size $N_s$ and $n_2$ (\resp~$n_1$).
Therefore, we have $\abs{T_1} \in\Oh( n_1 \cdot \min(N_s,n_2))$
and $\abs{T_2} \in\Oh(n_2 \cdot \min(N_s,n_1))$.
By symmetry, assume $n_2 \leq n_1$.

Test~3 can be performed in time
\begin{align*}
  &\Oh\Biggl(\sum_{s\in S}\bigl(
    N_s n_1 + N_s n_2 + n_1n_2 \cdot\min(N_s,n_1)\cdot \min(N_s,n_2)
    \bigr) \Biggr) \sse \\
  &\Oh\Biggl(n_{12}n_1^2n_2 + n_{12}n_1n_2^2 + \sum_{s\in S, N_s\geq n_2} n_1^2n_2^2
    + \sum_{s\in S, N_s < n_2} N_s^2 n_1n_2 \Biggr)
\end{align*}
(Recall that $n_1 \leq n \leq n_{12}\leq n_1n_2\leq n^2$
and $\sum_{s\in S} N_s \leq N = n_{12}n_1n_2$.)

Since there are at most $n_{12}n_1$ strongly connected components with a size of
$n_2$ or more states,
\begin{equation*}
	\sum_{s\in S, N_s\geq n_2} n_1^2n_2^2 \leq n_{12}n_1^3n_2^2.
\end{equation*}

For the last term we can use the approximation
\begin{equation*}
	\sum_{s\in S, N_s < n_2} N_s^2 n_1n_2 \leq
	\sum_{s\in S, N_s < n_2} N_s n_1n_2^2
	\leq n_{12}n_1^2n_2^3.
\end{equation*}

We conclude, Test~3 can be performed in time $\Oh(n_{12}n_1^2n_2^2n)$.
\end{proof}

%%%%%%%%%%%%%%%%%%%%%%%%%%%%%%%%%%%%%%%%%%%%%%%%%%%%%%%%%%%%%%%%%%%%%%%%%%%%%%%%
%%%%%%%%%%%%%%%%%%%%%%%%%%%%%%%%%%%%%%%%%%%%%%%%%%%%%%%%%%%%%%%%%%%%%%%%%%%%%%%%
\section{Rational Growth}\label{sec:growth}

Let $L_1'= L_1 \cap \bigcup_{\alp\in\Sig^\kap}\Sig^*\alp\Sig^*\ov\alp$ and
    $L_2'= L_2 \cap \bigcup_{\alp\in\Sig^\kap}\alp\Sig^*\ov\alp\Sig^*$.
Obviously,
$\Hk(L_1',L_2')= \ccH$. Thus, the growths of \ccH should be compared with the growths of  
$L_1'$ and $L_2'$ rather than with the growths of $L_1$ and $L_2$.
 The languages $L_1'$ and $L_2'$ are still regular 
and we can compute their growths. However, to simplify the notation, it is more 
convenient to assume from the very beginning that $L_1$ and $L_2$
contains only words that can form  hairpins. 
 Formally, we assume throughout this section that
\begin{align*}
	L_1 &\sse \bigcup_{\alp\in\Sig^\kap}\Sig^*\alp\Sig^*\ov\alp, &
	L_2 &\sse \bigcup_{\alp\in\Sig^\kap}\alp\Sig^*\ov\alp\Sig^*.
\end{align*}

Remember (\refsec{sec:grr}) that the \grr $\lam_L$ of a language $L$ 
says  that $\abs{L \cap \Sig^{m}}$ behaves essentially as $\lam_L^m$.

\begin{theorem}\label{thm:growthrate}
	Let $\lam = \max\oneset{\lam_{L_1}, \lam_{L_2}}$ be the maximum \grr
	of $L_1$ and $L_2$, and let   $\eta$ be the \grr of \ccH.
\begin{enumerate}[\quad i.)]
\item The value lies within
	\begin{equation*}
		\sqrt{\lam}\le \eta \le \lam.
	\end{equation*}
	In particular, the growth of \ccH is exponential (\resp polynomial, finite)
	if and only if the maximum growth of $L_1$ and $L_2$ is exponential
	(\resp polynomial, finite).
\item If \ccH is regular, then we have $\eta = \lam$.
	Thus, the \grr of \ccH is the maximum \grr of $L_1$ and $L_2$.
\end{enumerate}
 \end{theorem}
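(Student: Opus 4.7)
Part~(i) already gives $\sqrt\lam \le \eta \le \lam$, so for part~(ii) I only need to sharpen the lower bound to $\eta \ge \lam$ under the regularity assumption. The subexponential regime $\lam \le 1$ is immediate: since $\sqrt x \ge x$ on $[0,1]$, part~(i) yields $\eta \ge \sqrt\lam \ge \lam$, which combined with $\eta \le \lam$ forces $\eta = \lam$. From now on assume $\lam > 1$; by symmetry I may take $\lam = \lam_{L_1}$ and try to prove $\lam_{L_1} \le \eta$.

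The structural input is the decomposition extracted from the proof of \reflem{lem:finaltests}: once all three tests pass, $\ccH$ is displayed as a finite union of regular sets of the form
\begin{equation*}
	S \;=\; \oneset{uv^kxyz\ov x\ov v^\ell\ov u : z\in R_z,\ k\ge\ell\ge 0}
\end{equation*}
(plus finitely many regular boundary pieces with $\ell$ capped), where $u,v,x,y$ are of bounded length and $R_z$ is either a singleton or a bridge language $\bridge(c_1,c_2,d_1,d_2) \cap a\Sig^*$. A direct geometric-series count yields $|S \cap \Sig^m| \in \Oh(\mathrm{poly}(m)\,\lam_{R_z}^m)$ with the $\ell=k=0$ term showing this bound is tight, so $\eta = \max_S \lam_{R_z}$.

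For each $\pi\in S$ the hairpin factorization reads off an $L_1$-prefix $w = \gam\alp\bet\ov\alp$ by stripping the suffix $\ov\gam$ matching the trailing $\ov v^\ell \ov u$ of $\pi$, so $w = uv^k xyz\ov x$. For the blocks whose terminating bridge certifies $e_1\in\cF_1$ (``$L_1$-type'' blocks) this $w$ belongs to $L_1$, and since we assume throughout this section that every word of $L_1$ is hairpin-forming, every $w\in L_1$ arises this way from at least one $\pi\in\ccH$. Hence
\begin{equation*}
	L_1 \;\sse\; \bigcup_{S\text{ of $L_1$-type}} \oneset{uv^kxyz\ov x : z\in R_z,\ k\ge 0},
\end{equation*}
and the analogous geometric-series count bounds the growth of the right-hand side by $\max_S\lam_{R_z}=\eta$. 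This gives $\lam_{L_1}\le\eta$, and combined with $\eta\le\lam=\lam_{L_1}$ we obtain $\eta=\lam$.

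The main obstacle is the bookkeeping: I must verify that every $w\in L_1$ is really captured by an $L_1$-type block (not exclusively by the $L_2$-type blocks that also contribute to $\ccH$) and that the finite union in \reflem{lem:finaltests}, phrased originally as a sufficient condition for regularity, constitutes a bona fide parametrisation of all sufficiently long words of $\ccH$. These facts follow from the unambiguous-grammar construction of \refsec{sec:unambiguity} together with the structural arguments of \reflem{lem:loop} and \reflem{lem:egil}, but tracking the correspondence between the bridges~$\mu$, the minimal gamma-alpha-prefixes, and $L_1$ versus $L_2$ acceptance of~$w$ requires careful attention.
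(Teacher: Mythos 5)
There are two genuine gaps. First, you never prove part~(i): you open by taking $\sqrt\lam\le\eta\le\lam$ as given and only work on part~(ii). Part~(i) is half the theorem and is not trivial --- in the paper it rests on the unambiguous decomposition $\ccH=\bigcup_{\mu\in M}\CP$ of \reflem{lem:str} together with \reflem{lem:growth:lam} ($\lam=\max\oneset{\brgr,\mgpgr}$) and \reflem{lem:growth:eta} ($\eta=\max\oneset{\brgr,\sqrt\mgpgr}$); the unambiguity is what lets one control the coefficients of the generating function of $\CP$ by a convolution. Without some version of these lemmas your appeal to part~(i) is circular.

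Second, for part~(ii) the step you defer to ``bookkeeping'' is exactly where the argument breaks. Your claimed inclusion $L_1\sse\bigcup_{S}\oneset{uv^kxyz\ov x: z\in R_z,\ k\ge 0}$ is false as stated: a word $w\in L_1$ is in general only a \emph{prefix} of the canonical $L_1$-prefix $\gaba=uv^kxyz\,\ov\alp$ of one of its hairpin completions (the canonical factorization minimizes $\abs\gam$ and, by condition~3 of that factorization, only guarantees $w\le\gaba$; also note the $L_1$-word ends in $\ov\alp$ with $\abs\alp=\kap$, not in $\ov x$). So at best you get $L_1\sse\Pref\parenth{\bigcup_S\oneset{uv^kxyz\ov\alp}}$, and you then need the fact that taking the prefix closure of a \emph{regular} language preserves the growth indicator --- this is precisely \reflem{lem:prefix:closure}, which your proposal never invokes and which is the non-obvious ingredient (it fails for general linear languages). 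The paper's actual route for part~(ii) is shorter and avoids your block decomposition entirely: regularity of \ccH forces, via \reflem{lem:loop}, every non-trivial strongly connected component of $\cA$ to be a simple cycle, hence every $\mgp_\mu$ has at most linear growth, \ie $\mgpgr\le 1$; then $\lam=\max\oneset{\brgr,\mgpgr}=\max\oneset{\brgr,\sqrt\mgpgr}=\eta$ follows immediately from the two growth lemmas. Your plan could probably be repaired along the lines above, but as written the central inclusion is unproved and incorrect without the prefix-closure step.
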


The theorem will follow by \reflem{lem:growth:lam} and~\reflem{lem:growth:eta} 
in \refsec{sec:proof:growth} which compare
the \grrs $\lam$ and $\eta$ with
the \grrs of the languages $\bridge_\mu$ and $\mgp_\mu$ for $\mu\in M$.
Before we can prove theses lemmas, we need some preliminary observations on
\grrs of (regular) languages.

%%%%%%%%%%%%%%%%%%%%%%%%%%%%%%%%%%%%%%%%%%%%%%%%%%%%%%%%%%%%%%%%%%%%%%%%%%%%%%%%
\subsection{Basic Facts about Growth Indicators}

Consider two languages $K_1$ and $K_2$.
It is well known that the \grr of their union is
$\lam_{K_1\cup K_2} =\max\oneset{\lam_{K_1},\lam_{K_2}}$.
Furthermore, if $K_1 \neq \es \neq K_2$ the \grr of their
concatenation is $\lam_{K_1 K_2} = \max\oneset{\lam_{K_1},\lam_{K_2}}$, too.

Now, let $K$ be a regular language.
The prefix closure of $K$ is defined as
\begin{equation*}
	\Pref(K) = \set{u\in\Sig^*}{\exists v\in\Sig^* \colon uv\in K}.
\end{equation*}
The next lemma shows that the \grrs of $K$ and its prefix closure coincide.
Note that this does not necessarily hold if $K$ is (unambiguous) linear.

\begin{lemma}\label{lem:prefix:closure}
	Let $K$ be a regular language, then $\lam_K = \lam_{\Pref(K)}$.
\end{lemma}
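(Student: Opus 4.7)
The plan is to prove the two inequalities $\lam_K \le \lam_{\Pref(K)}$ and $\lam_{\Pref(K)} \le \lam_K$ separately. The first is immediate from the inclusion $K\sse \Pref(K)$ together with the monotonicity of the \grr with respect to inclusion. So the entire work lies in establishing the reverse inequality.

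For the reverse direction, the plan is to exploit that $K$ is accepted by some DFA, say with $n$ states. A standard shortest-path argument shows that from any state which can reach a final state, a final state is reached by some word of length at most $n$. Consequently, every prefix $u\in \Pref(K)$ of length $m$ admits an extension $v$ with $uv\in K$ and $\abs v \le n$. Since each word in $K$ of length $m+i$ has exactly one prefix of length $m$, first I would deduce the counting bound
\begin{equation*}
	\abs{\Pref(K) \cap \Sig^m} \le \sum_{i=0}^{n} \abs{K\cap \Sig^{m+i}}.
\end{equation*}

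Next, I would feed in the definition of $\lam_K$: for every $\lam > \lam_K$ there exists $c>0$ such that $\abs{K\cap \Sig^k} \le c\lam^k$ for all $k\in \N$. Plugging this into the bound above gives
\begin{equation*}
	\abs{\Pref(K) \cap \Sig^m} \le c \lam^m \sum_{i=0}^{n} \lam^i \le c' \lam^m,
\end{equation*}
with $c'$ depending only on $c$, $n$, and $\lam$. This shows $\lam_{\Pref(K)} \le \lam$ for every $\lam > \lam_K$, and taking the infimum yields $\lam_{\Pref(K)} \le \lam_K$.

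There is essentially no obstacle here; the only minor care is that $\lam_K$ may equal $0$, but in that case $K$ is finite, and hence $\Pref(K)$ is finite as well, so the inequality holds trivially. The argument does \emph{not} extend to linear context-free languages, because the pumping step fails in general (a prefix of an unambiguous linear language may admit only very long completions relative to the size of the grammar), which is consistent with the remark preceding the lemma.
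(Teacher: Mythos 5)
Your proposal is correct and follows essentially the same route as the paper: both directions are handled identically, with the reverse inequality obtained by extending each prefix to a word of $K$ using at most $n$ extra letters (via the accepting DFA), bounding $\abs{\Pref(K)\cap\Sig^m}$ by a sum of $\abs{K\cap\Sig^{m+i}}$ over a window of width $n$, and absorbing the resulting factor into a constant. The only difference is your (harmless but unnecessary) special-casing of $\lam_K=0$, which the general infimum argument already covers.
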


\begin{proof}
	As $K\sse \Pref(K)$, the inequation $\lam_K \le \lam_{\Pref(K)}$ is obvious.
	
	Conversely, let $k$ be a constant such that $K$ is accepted by a DFA of size $k$
	and let $m\in \N$.
	For a word $u\in \Pref(K)\cap \Sig^m$, there is some word $v$ such that $uv\in K$
	and, moreover, we may assume $\abs v \le k$.
	Let $h$ be a mapping $h\colon u \mapsto uv$ for $u\in \Pref(K)\cap \Sig^m$
	such that $uv\in K$ and $\abs v \le k$.
	Note that $h$ is injective (for a fixed $m$).
	Thus, we see that 
	\begin{equation*}
		\abs{\Pref(K)\cap \Sig^m} \le \sum_{i = m}^{m+k} \abs{K\cap \Sig^i}.
	\end{equation*}
	
	For all $\nu > \lam_K$ there exists $c$ such that 
	$\abs{K\cap \Sig^m} \le c \nu^m$ for all $m\in \N$.
	Therefore,
	\begin{equation*}
		\abs{\Pref(K)\cap \Sig^m} \le
		\sum_{i = m}^{m+k} c \nu^i \le
		c (k+1) \nu^k \nu^m.
	\end{equation*}
	
	We conclude $\lam_{\Pref(K)}\le \nu$ and as such $\lam_{\Pref(K)}  = \lam_K$.
\end{proof}

%%%%%%%%%%%%%%%%%%%%%%%%%%%%%%%%%%%%%%%%%%%%%%%%%%%%%%%%%%%%%%%%%%%%%%%%%%%%%%%%
\subsection{Proof of Theorem~\ref{thm:growthrate}}\label{sec:proof:growth}

Recall from \reflem{lem:str}, that
the hairpin completion is the disjoint union
\begin{equation*}
	\ccH = \bigcup_{\mu\in M} \CP.
\end{equation*}
We let $\brgr_\mu$ and $\mgpgr_\mu$ be the \grrs of $\bridge_\mu$ and $\mgp_\mu$, respectively.
By $\brgr = \max\set{\brgr_\mu}{\mu\in M}$ and $\mgpgr = \max\set{\mgpgr_\mu}{\mu\in M}$
we denote the maximum \grrs of all $\bridge_\mu$ and all $\mgp_\mu$, respectively.
The next lemma compares the \grr $\lam$ with the \grrs $\brgr$ and $\mgpgr$.

\begin{lemma}\label{lem:growth:lam}
	$\lam = \max\oneset{\brgr,\mgpgr}$.
\end{lemma}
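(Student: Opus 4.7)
The plan is to prove the equality by establishing the two inequalities $\max\{\brgr,\mgpgr\}\le\lam$ and $\lam\le\max\{\brgr,\mgpgr\}$ separately.

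For the first inequality, I would fix any $\mu\in M$ with $\mgp_\mu\ne\es$ (otherwise it contributes nothing) and observe that $\bridge_\mu\ne\es$ automatically, since $F\in\cF$ corresponds to a basic bridge. Pick reference elements $v_0\in\mgp_\mu$ and $\bet_0\in\bridge_\mu$. By \reflem{lem:str}, for every $\bet\in\bridge_\mu$ the word $v_0\bet\ov{v_0}$ lies in $\ccH$, so by the very definition of the hairpin completion either $v_0\bet\ov{\alp(v_0)}\in L_1$ or $\alp(v_0)\bet\ov{v_0}\in L_2$, where $\alp(v_0)$ denotes the last $\kap$ letters of $v_0$. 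Partitioning $\bridge_\mu=B_1\cup B_2$ accordingly, the maps $\bet\mapsto v_0\bet\ov{\alp(v_0)}$ and $\bet\mapsto\alp(v_0)\bet\ov{v_0}$ are injections into $L_1$ and $L_2$ with a constant length shift of $|v_0|+\kap$, so $\lam_{B_i}\le\lam_{L_i}$ and therefore $\lam_{\bridge_\mu}\le\lam$. The symmetric argument, fixing $\bet_0$ and letting $v$ vary over $\mgp_\mu$, gives $\lam_{\mgp_\mu}\le\lam$. Taking the maximum over $\mu$ yields $\max\{\brgr,\mgpgr\}\le\lam$.

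For the reverse inequality I would argue $\lam_{L_1}\le\max\{\brgr,\mgpgr\}$ and symmetrically for $L_2$. Each $w\in L_1$ decomposes (by the standing assumption of this section) as $w=\gam\alp\bet\ov\alp$ with $|\alp|=\kap$, and its hairpin $\pi=\gam\alp\bet\ov\alp\ov\gam$ lies in $\ccH$; by \reflem{lem:str}, $\pi$ factors uniquely as $\pi=v\bet'\ov{v}$ with $v\in\mgp_\mu$ and $\bet'\in\bridge_\mu$ for some $\mu$. When this minimal decomposition coincides with $(\gam\alp,\bet)$, the word $w=v\bet'\ov{\alp(v)}$ sits in the regular language
\[
	R \;=\; \bigcup_{\mu\in M}\mgp_\mu\cdot\bridge_\mu\cdot\Sig^\kap,
\]
and since concatenation and finite union of regular languages preserve the maximum of growth indicators, $\lam_R=\max\{\mgpgr,\brgr\}$ (appending the constant-length factor $\Sig^\kap$ does not change the growth rate).

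The main obstacle is that the minimal gamma-alpha-prefix of $\pi$ can be strictly shorter than $|\gam\alp|$, so the given $w$ need not itself lie in $R$; instead, the minimal factorisation of $\pi$ describes a different word $w'\in L_1\cup L_2$ of length $|w|+(|\gam\alp|-|v|)$ that does lie in $R$ (or its $L_2$-analogue). I would handle this by constructing, for each non-matching $w$, a bounded-length-shift injection into matching words (or into $L_2$ via the $\cA_2$-side of the automaton), exploiting that the depth below the minimal gamma-alpha-prefix is controlled by the sizes of $\cA_1$ and $\cA_2$. Making this reduction precise is the delicate technical step; once done, it shows that non-matching words contribute a growth indicator no larger than $\max\{\mgpgr,\brgr\}$, giving $\lam_{L_1}\le\max\{\mgpgr,\brgr\}$. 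The analogous argument for $L_2$ (using $\alp(v)\bet\ov v$) completes the proof of $\lam\le\max\{\brgr,\mgpgr\}$.
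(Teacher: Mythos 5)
Your first inequality ($\max\oneset{\brgr,\mgpgr}\le\lam$) is sound: fixing one coordinate and using injections into $L_1\cup L_2$ with a constant length shift is a clean variant of the paper's argument, which instead uses a single map $h$ on $\mgp_\mu\times\bridge_\mu$ and bounds the size of its fibres by $2m$. The gap is in the reverse inequality. You correctly identify the obstacle --- a word $w\in L_1$ need not itself lie in $\bigcup_\mu\mgp_\mu\bridge_\mu\Sig^\kap$, because the minimal gamma-alpha-prefix of a hairpin completion $\pi$ of $w$ may be strictly shorter than the prefix $\gam\alp$ determined by $w$ --- but your proposed repair rests on a false premise. The length difference you would need to control is \emph{not} bounded in terms of the sizes of $\cA_1$ and $\cA_2$: take $\Sig=\oneset{a,\ov a}$, $\kap=1$, $L_1=a^+\ov{a}^{+}$, $L_2=\es$, and $w=a^k\ov a$. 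Then $\pi=a^k\ov{a}^{k}$ is a hairpin completion of $w$, its minimal gamma-alpha-prefix is $a$ (with $\gam=\e$), and the corresponding word of $\mgp_\mu\bridge_\mu\Sig^\kap$ having $w$ as a prefix is $\pi$ itself, longer than $w$ by $k-1$, although the two DFAs have constantly many states. So there is no bounded-length-shift injection of the kind you describe, and this ``delicate technical step'' cannot be carried out as planned.

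The missing idea is the prefix closure. By property~3 of the minimal factorization, $w$ is always a \emph{prefix} of $\gaba\in RB\Sig^\kap$ (where $R=\bigcup_\mu\mgp_\mu$ and $B=\bigcup_\mu\bridge_\mu$), hence $L_1\sse\Pref(RB\Sig^\kap)$; since $RB\Sig^\kap$ is \emph{regular}, \reflem{lem:prefix:closure} yields $\lam_{\Pref(RB\Sig^\kap)}=\lam_{RB\Sig^\kap}=\max\oneset{\brgr,\mgpgr}$. The bounded shift that makes this work points in the other direction: any $u\in\Pref(K)$ for regular $K$ can be completed to \emph{some} word of $K$ by appending at most $\abs{\cQ}$ letters; one does not need $u$ to be close in length to the particular word of $K$ of which it happens to be a prefix. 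With that lemma your argument closes; without it, the inequality $\lam\le\max\oneset{\brgr,\mgpgr}$ remains unproved.
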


\begin{proof}
	We start by proving $\lam \ge \max\oneset{\brgr,\mgpgr}$.
	Let $\mu\in M$ be fixed.
	For $\gam\alp\in \mgp_\mu$ with $\abs\alp = \kap$, and $\bet\in \bridge_\mu$
	either $\gaba\in L_1$ or $\abag\in L_2$.
	Thus, we may define a mapping
	$h\colon (\mgp_\mu \times \bridge_\mu) \to L_1\cup L_2$ such that
	\begin{equation*}
		h(\gam\alp,\bet) = \begin{cases}
			\gaba &\text{if } \gaba \in L_1 \\
			\abag &\text{otherwise}.
		\end{cases}
	\end{equation*}
	Obviously, $\abs{\gam\alp}+\abs{\bet} = \abs{h(\gam\alp,\bet)}-\kap$.
	Also note that a word $w\in L_1\cup L_2$ of length $m$ can form less than $2m$ hairpin completions.
	Therefore, the cardinality of the inverse image is $\abs{h^{-1}(w)} < 2m$.
	Using the mapping $h$, we can compare the growth $r_m = \abs{\mgp_\mu\bridge_\mu \cap \Sig^m}$
	with the growth $\ell_m = \abs{(L_1\cup L_2)\cap \Sig^m}$;
	that is $r_m \le 2(m+\kap)\cdot\ell_{m+\kap}$ for $m\in\N$.
	
	For $\nu > \lam =\lam_{L_1\cup L_2}$ we chose $\nu'$ from the open interval $(\lam,\nu)$.
	There exists $c'>0$ such that $r_m \le 2(m+\kap) c' \nu'^\kap \nu'^m$ for all $m\in \N$
	and, as the function $\nu^m$ growth faster than $\nu'^m$, there is some $c>0$ such that
	$r_m \le c\nu^m$ for all $m\in \N$.
	Therefore, $\max\oneset{\brgr_\mu,\mgpgr_\mu} \le \nu$ for all $\nu > \lam$, whence
	$\max\oneset{\brgr_\mu,\mgpgr_\mu} \le \lam$.
	As this inequation holds for all $\mu\in M$, we deduce $\lam \ge \max\oneset{\brgr,\mgpgr}$.
	
	Conversely, we will prove that $L_1$ is included in a language $K$ whose
	\grr is $\max\oneset{\brgr,\mgpgr}$.
	As there is a symmetric language that includes $L_2$, this yields
	$\lam \le \max\oneset{\brgr,\mgpgr}$.
	Let $B = \bigcup_{\mu\in M}B_\mu$ and $R = \bigcup_{\mu\in M}R_\mu$.
	We let $K$ be the prefix closure $K = \Pref(RB\Sig^\kap)$.
	As the \grr of $RB\Sig^\kap$ is $\lam_{RB\Sig^\kap} = \max\oneset{\brgr,\mgpgr}$
	and by \reflem{lem:prefix:closure}, we deduce
	$\lam_K = \max\oneset{\brgr,\mgpgr}$.
	
	Now, consider $w\in L_1$.
	By assumption, $w$ can form a hairpin on its right side.
	We let $\pi \in \Hk(\oneset w,\es)$ be a hairpin completion of $w$.
	Let $\gam\alp$ be the minimal gamma-alpha-prefix of $\pi$ with $\abs\alp = \kap$
	and $\bet$ such that $\pi = \gabag$.
	Note that $w$ has to be a prefix of $\gaba\in RB\Sig^\kap$
	(by the minimality of $\abs\gam$).
	Thus, we may conclude $L_1 \sse K$ as desired.
\end{proof}

Now, let us compare the \grr $\eta$ with the \grrs $\brgr$ and $\mgpgr$.

\begin{lemma}\label{lem:growth:eta}
	$\eta = \max\oneset{\brgr,\sqrt\mgpgr}$.
\end{lemma}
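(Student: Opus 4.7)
The plan is to verify both inequalities $\eta \ge \max\{\brgr,\sqrt\mgpgr\}$ and $\eta \le \max\{\brgr,\sqrt\mgpgr\}$, exploiting the disjoint decomposition $\ccH=\bigcup_{\mu\in M}\CP$ from \reflem{lem:str} together with the fact that $M$ is finite (of size independent of word length). The key identity, coming from the unique factorization $\pi=v\bet\ov v$ guaranteed by \reflem{lem:str} (the minimal gamma-alpha-prefix is unique), is that, for each $\mu\in M$,
\begin{equation*}
	\abs{\CP\cap\Sig^N} \;=\; \sum_{m\ge 0}\abs{\mgp_\mu\cap\Sig^m}\cdot\abs{\bridge_\mu\cap\Sig^{N-2m}}.
\end{equation*}

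For the lower bound I would argue in two steps. First, pick $\mu$ attaining $\brgr$. Since the states used to build $\mu$ come from basic bridges and reachable/co-reachable states (cf.\ \refrem{rem:reachable}), $\mgp_\mu$ is nonempty; fixing any $v_0\in\mgp_\mu$, the set $v_0\bridge_\mu\ov{v_0}\sse\ccH$ has the same \grr as $\bridge_\mu$ (lengths are merely shifted by $2\abs{v_0}$), so $\eta\ge\brgr$. Second, pick $\mu$ attaining $\mgpgr$, and fix any $\bet_0\in\bridge_\mu$. For each $v\in\mgp_\mu\cap\Sig^m$ the word $v\bet_0\ov v\in\ccH$ has length $2m+\abs{\bet_0}$, and distinct $v$'s yield distinct words. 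Hence $\abs{\ccH\cap\Sig^{2m+\abs{\bet_0}}}\ge\abs{\mgp_\mu\cap\Sig^m}$, which forces $\eta^2\ge\mgpgr_\mu=\mgpgr$.

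For the upper bound, fix any $\nu>\max\{\brgr,\sqrt\mgpgr\}$; then $\nu>\brgr_\mu$ and $\nu^2>\mgpgr_\mu$ for every $\mu$, so there is a constant $c>0$ with $\abs{\bridge_\mu\cap\Sig^k}\le c\nu^k$ and $\abs{\mgp_\mu\cap\Sig^m}\le c\nu^{2m}$ for all $\mu,k,m$. Plugging these into the displayed identity gives
\begin{equation*}
	\abs{\CP\cap\Sig^N}\le\sum_{m=0}^{N/2}c^2\nu^{2m}\nu^{N-2m}\le c^2(N/2+1)\nu^N,
\end{equation*}
and summing over the finite set $M$ yields $\abs{\ccH\cap\Sig^N}\in\Oh(N\nu^N)$. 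The polynomial factor $N$ is absorbed by replacing $\nu$ with any slightly larger $\nu'>\nu$, giving $\eta\le\nu'$. Letting $\nu'\downarrow\max\{\brgr,\sqrt\mgpgr\}$ concludes the proof.

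The one mild obstacle is justifying the exchange of the polynomial factor for an arbitrarily small exponential increment — routine but worth stating explicitly — and verifying that whenever $\mgp_\mu$ is nonempty so is $\bridge_\mu$; this follows from our standing convention (\refsec{sec:automaton} and \refrem{rem:reachable}) that bridges are basic bridges with nonempty associated language, which in turn ensures the lower-bound arguments can always pick the required $v_0$ and $\bet_0$.
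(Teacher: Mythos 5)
Your proposal is correct and follows essentially the same route as the paper's proof: the same lower bounds via $v_0\bridge_\mu\ov{v_0}$ and $\oneset{\bet_0}^{\mgp_\mu}$, and the same upper bound via the convolution identity for $\abs{\CP\cap\Sig^N}$ (justified by the unambiguity from \reflem{lem:str}), with the polynomial factor absorbed by a slightly larger $\nu'$. The non-emptiness of $\bridge_\mu$ and $\mgp_\mu$ that you flag is indeed the only housekeeping point, and the paper handles it the same way.
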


\begin{proof}
	Let $\tau_\mu$ be the \grr of $\CP$ for $\mu\in M$.
	Since $\ccH = \bigcup_{\mu\in M} \CP$, we see that $\eta = \max\set{\tau_\mu}{\mu\in M}$.
	Thus, in order to prove the claim, it suffices to show that
	$\tau_\mu = \max\oneset{\brgr_\mu,\sqrt{\mgpgr_\mu}}$ for $\mu\in M$.
	Let $\mu\in M$ be fixed from here on
	and recall that $\bridge_\mu$ and $\mgp_\mu$ are non-empty.
	We let 
	\begin{alignat*}{2}
		g_{\bridge_\mu}(z) &= \sum_{m\ge 0} b_m z^m
		&\qquad&\text{with }b_m = \abs{\bridge_\mu \cap \Sig^m}, \\
		g_{\mgp_\mu}(z) &= \sum_{m\ge 0} r_m z^m
		&\qquad&\text{with }r_m = \abs{\mgp_\mu \cap \Sig^m}.
	\end{alignat*}
	It will be convenient to let
	$r_{i+1\slash 2} = 0$ for $i\in\N$.

	First, let us prove $\tau_\mu\ge \sig_\mu$.
	Let $v\in \mgp_\mu$ and consider $K = v \bridge_\mu \ov v$.
	Obviously, $K \sse \CP$ and hence
	$\tau_\mu \ge \lam_K = \sig_\mu$.
	
	Next, we prove $\tau_\mu \ge \sqrt{\rho_\mu}$.
	Let $K = \oneset{\bet}^{\mgp_\mu}\sse \CP$ for some $\bet\in \bridge_\mu$.
	The generating function of $K$ is given as
	$g_K(z) = \sum_{m\ge 0} r_{(m-\abs{\bet})\slash 2} z^m$.
	For all $\nu > \lam_K$
	there exists $c> 0$ such that
	\begin{equation*}
		\forall m\in\N\colon r_{(m-\abs{\bet})\slash 2} \le c\nu^{m}
		\quad\iff\quad
		\forall m\in\N\colon r_m \le c\nu^{\abs\bet}(\nu^2)^m
	\end{equation*}
	and, therefore, $\nu^2 \ge \rho_\mu$.
	We conclude $\tau_\mu \ge \lam_K \ge \sqrt{\rho_\mu}$.
	
	Finally, we need to prove $\tau_\mu \le \max\oneset{\brgr_\mu,\sqrt{\mgpgr_\mu}}$.
	As $\CP$ is unambiguous, by \reflem{lem:str},
	\begin{equation*}
		g_{\CP}(z) = \sum_{m\ge 0} d_m z^m
		\qquad\text{with }d_m = \sum_{k+\ell = m} b_k r_{\ell\slash 2}.
	\end{equation*}
	For $\nu > \max\oneset{\brgr_\mu,\sqrt{\mgpgr_\mu}}$
	we choose $\nu'$ from the open interval
	$\left(\max\oneset{\brgr_\mu,\sqrt{\mgpgr_\mu}},\nu\right)$.
	By that choice, $\nu^m$ grows faster than $\nu'^m$ and
	there is $c' > 0$ such that for all $m\in\N$ and $k+\ell = m$, we have
	$ b_k r_{\ell\slash 2} \le c'\nu'^m$.
	Thus, there is $c > 0$ such that for all $m\in \N$, the inequality
	$d_m\le mc'\nu'^m \le c\nu^m$ holds.
	This deduces the last step in the proof,
	$\tau_\mu \le \max\oneset{\brgr_\mu,\sqrt{\mgpgr_\mu}}$.
\end{proof}

\reflem{lem:growth:lam} and \reflem{lem:growth:eta} yields a development
of the \grrs $\lam$ and $\eta$ as shown in \reffig{fig:growth}.

\begin{figure}[ht]
	\center
	
	\begin{tikzpicture}[text height=1.75ex,text depth=.25ex]
		\begin{scope}[xshift=-\textwidth/2]
			\draw [-latex,shorten <=-4pt] (0,0) -- 
				node [pos=.5,below] {$\mgpgr$}
				(3,0) node [below left] {$\brgr$};
			\draw [-latex,shorten <=-4pt] (0,0) --
				node [pos=.5,left] {$\mgpgr$}
				(0,3) node [below left] {$\lam$};
			\draw [dashed,thin,gray] (0,1.5) -- (2.75,1.5);
			\draw [dashed,thin,gray] (1.5,0) -- (1.5,2.75);
			\draw (0,1.5) -- (1.5,1.5) -- (2.75,2.75);
		\end{scope}

		\draw [-latex,shorten <=-4pt] (0,0) --
			node [pos=.28,below] {$\sqrt\mgpgr$}
			(3,0) node [below left] {$\brgr$};
		\draw [-latex,shorten <=-4pt] (0,0) --
			node [pos=.28,left] {$\sqrt\mgpgr$}
			(0,3) node [below left] {$\eta$};
		\draw [dashed,thin,gray] (0,3*.28) -- (2.75,3*.28);
		\draw [dashed,thin,gray] (3*.28,0) -- (3*.28,2.75);
		\draw (0,3*.28) -- (3*.28,3*.28) -- (2.75,2.75);
	\end{tikzpicture}
	
	\caption{\Grrs $\lam$ and $\eta$ in dependency of $\brgr$ and $\mgpgr$.}
	\label{fig:growth}
\end{figure}
It is easy to see that $\eta$ is at least $\sqrt\lam$ and at
most $\lam$ and, therefore, we deduce the first statement of \refthm{thm:growthrate}.
The second statement of \refthm{thm:growthrate}
claims that if the hairpin completion is regular, then $\lam = \eta$.
In case when \ccH is regular,
we infer from \reflem{lem:loop} that if the hairpin completion of \ccH
is regular, then the growth of all $\mgp_\mu$ is polynomial (more precisely, linear) or finite
(\ie $\mgpgr = 1$ or $\mgpgr = 0$).
We conclude
$\lam = \max\oneset{\brgr,\mgpgr} = \max\oneset{\brgr,\sqrt\mgpgr} = \eta$.

%%%%%%%%%%%%%%%%%%%%%%%%%%%%%%%%%%%%%%%%%%%%%%%%%%%%%%%%%%%%%%%%%%%%%%%%%%%%%%%%
%%%%%%%%%%%%%%%%%%%%%%%%%%%%%%%%%%%%%%%%%%%%%%%%%%%%%%%%%%%%%%%%%%%%%%%%%%%%%%%%
\section*{Final Remarks}

We proved that regularity of a hairpin completion of regular languages
is decidable in polynomial time.
Considering the two-sided hairpin completion of regular languages,
the decision algorithm, we presented, can be performed in time $\Oh(n^8)$
(\resp $\Oh(n^6)$ in case when $L_1 = \ov{L_2}$) which, at first, seems to be
a high degree for a polynomial time algorithm.
However, the first step of the algorithm is the construction of an automaton \cA which is 
already of size $\Oh(n^4)$ (\resp $\Oh(n^3)$).
Thus, when speaking of time complexity with respect to the size of \cA,
the algorithm uses quadratic time, only.
Furthermore, as we take into account all pairs of states of \cA, the time bound
seems optimal for this approach and further improvement of the time complexity
would probably call for a completely new approach.
For the one-sided hairpin completion of a regular language,
we provide a faster algorithm which runs in quadratic time.

The polynomial time bounds are due to the fact that we use DFAs 
for the specification of $L_1$ and $\ov{L_2}$. We do not know what happens if 
$L_1$ and $L_2$ are given by NFAs. We suspect that deciding 
regularity if \ccH might become $\PSPACE$-complete. But this 
has not been investigated yet.

By our second result, that the hairpin completion of regular languages is
always an unambiguous linear language, we are able to effectively compute the
growth function of the hairpin completion.
Moreover, we showed that the hairpin completion has an exponential growth
if and only if one of the underlying languages has an exponential growth
(given that every word from the underlying languages can form a hairpin).
More precisely, the \grr of the hairpin completion is at most as large
as the maximum \grr of the underlying languages and at least as large as its
square root.
In case when the hairpin completion is regular, we provided an even stronger
relationship between the \grrs.
In that case, the \grr of the hairpin completion coincides with the maximum
\grr of the underlying languages.
Our results about growths are trivial in case that $L_1$ and $L_2$ have polynomial growths.
However, the structure of regular languages with 
polynomial growths is well-understood \cite{Szilard1992} (\refsec{sec:grr}).
We believe that a study of \hpc{}s 
for this class of regular languages might lead to interesting results. 
We leave this to future research.

Another interesting problem concerns the hairpin lengthening
of regular languages, which is an operation familiar to the hairpin completion.
We call $\gam_1\aba\ov{\gam_2}$ a (right) hairpin lengthening
of $\gam_1\aba$ if $\gam_2$ is a suffix of $\gam_1$
and we call it a (left) hairpin lengthening of $\aba\ov{\gam_2}$ if $\ov{\gam_1}$
is a prefix of $\ov{\gam_2}$.
The hairpin lengthening $\cH\cL_\kap(L_1,L_2)$ of languages $L_1$ and $L_2$ is introduced
analogously to the hairpin completion.
It is known that the hairpin lengthening of regular languages is linear,
but in contrast to the \hpc it is not unambiguous, in general, see \cite{diss_kop}.
This might indicate that deciding 
regularity of the hairpin lengthening $\cH\cL_\kap(L_1,L_2)$
is more difficult than for the hairpin completion. To date it is not known 
whether regularity of $\cH\cL_\kap(L_1,L_2)$ is decidable.

%%%%%%%%%%%%%%%%%%%%%%%%%%%%%%%%%%%%%%%%%%%%%%%%%%%%%%%%%%%%%%%%%%%%%%%%%%%%%%%%
%%%%%%%%%%%%%%%%%%%%%%%%%%%%%%%%%%%%%%%%%%%%%%%%%%%%%%%%%%%%%%%%%%%%%%%%%%%%%%%%
%\bibliographystyle{abbrv}
%\bibliography{../../TRACES/traces}

\newcommand{\Ju}{Ju}\newcommand{\Ph}{Ph}\newcommand{\Th}{Th}\newcommand{\Ch}{C%
h}\newcommand{\Yu}{Yu}\newcommand{\Zh}{Zh}

\end{document}